\newtheorem{Lemma}{Lemma}
\newtheorem{Theorem}{Theorem}
 \newtheorem{remark}{Remark}
\begin{document}
	
\title{\huge Modeling and Performance Analysis of IoT-over-LEO
	Satellite Systems under Realistic Operational Constraints: A Stochastic Geometry
	Approach}
	
\author{Wen-Yu Dong, Shaoshi Yang,~\IEEEmembership{Senior Member,~IEEE},\\ Ping Zhang,~\IEEEmembership{Fellow,~IEEE}, Sheng Chen,~\IEEEmembership{Life Fellow,~IEEE}	%
\thanks{Copyright (c) 20xx IEEE. Personal use of this material is permitted. However, permission to use this material for any other purposes must be obtained from the IEEE by sending a request to pubs-permissions@ieee.org.}
\thanks{This work was supported in part by Beijing Municipal Natural Science Foundation under Grant L242013;  and in part by the Open Project Program of the Key Laboratory of Mathematics and Information Networks, Ministry of Education, China, under Grant KF202301. \textit{(Corresponding author: Shaoshi Yang.)}}	
\thanks{Wen-Yu Dong and Shaoshi Yang are with the School of Information and Communication Engineering, Beijing University of Posts and Telecommunications, Beijing 100876, China, also with the Key Laboratory of Universal Wireless Communications, Ministry of Education, Beijing 100876, China, and also with the Key Laboratory of Mathematics and Information Networks, Ministry of Education, Beijing 100876, China (e-mail: wenyu.dong@bupt.edu.cn, shaoshi.yang@bupt.edu.cn).} %
\thanks{Ping Zhang is with the School of Information and Communication Engineering, Beijing University of Posts and Telecommunications, Beijing 100876, China, and also with the State Key Laboratory of Networking and Switching Technology, Beijing 100876, China (e-mail: pzhang@bupt.edu.cn).} %
\thanks{Sheng Chen is with the School of Electronics and Computer Science, University of Southampton, SO17 1BJ Southampton, U.K. (e-mail: sqc@ecs.soton.ac.uk).} %
\vspace*{-6mm}
}

% The paper headers
\markboth{Accepted to appear on IEEE Internet of Things Journal, May 2025}%
{Shell \MakeLowercase{\textit{et al.}}: Bare Demo of IEEEtran.cls for IEEE Journals}

% make the title area
\maketitle

\begin{abstract}
The growing demand for reliable and extensive connectivity has made low Earth orbit (LEO) satellites aided Internet of Things (IoT) systems a critical area of research. However, current theoretical studies on IoT-over-LEO satellite systems often rely on unrealistic assumptions, such as infinite terrestrial areas and omnidirectional satellite coverage, leaving significant gaps in theoretical analysis for more realistic operational constraints. These constraints involve finite terrestrial area, limited satellite coverage, Earth curvature effect, integral uplink and downlink analysis, and link-dependent interference.
To address these gaps, this paper proposes a novel stochastic geometry based model to rigorously analyze the performance of IoT-over-LEO satellite systems. By adopting a binomial point process (BPP) instead of the conventional Poisson point process (PPP), our model accurately characterizes the geographical distribution of a fixed number of IoT devices in a finite terrestrial region. This modeling framework enables the derivation of distance distribution functions for both the links from the terrestrial IoT devices to the satellites (T-S) and from the satellites to the Earth station (S-ES), while also accounting for limited satellite coverage and Earth curvature effects. To realistically represent channel conditions, the Nakagami fading model is employed for the T-S links to characterize diverse small-scale fading environments, while the shadowed-Rician fading model is used for the S-ES links to capture the combined effects of shadowing and dominant line-of-sight paths. Furthermore, the analysis incorporates uplink and downlink interference, ensuring a comprehensive evaluation of system performance. The accuracy and effectiveness of our theoretical framework are validated through extensive Monte Carlo simulations. These results provide insights into key performance metrics, such as coverage probability and average ergodic rate, for both individual links and the overall system. Our study also offers an important analytical tool for optimizing the design and performance of IoT-over-LEO satellite systems with the operational constraints that are more realistic.
\end{abstract}
	
\begin{IEEEkeywords}
IoT-over-LEO satellites, stochastic geometry, coverage probability,  Nakagami fading, shadowed-Rician fading. 
\end{IEEEkeywords}

\section{Introduction}\label{S1}

\IEEEPARstart{I}{nternet} of Things (IoT) forms a crucial component of future digital infrastructure and has been attracting significant interests from both academia and industrial sectors, because of its vast potential to enhance the quality of services across various aspects of contemporary life \cite{7123563}. However, extending the current IoT that is typically based on limited terrestrial communication infrastructure to cover the vast rural and remote areas is encountering numerous problems, particularly in poor countries \cite{9042251}. In order to realize ubiquitous connectivity, IoT systems based on a large-scale low Earth orbit (LEO) satellite constellation have been considered as an appealing solution due to their convenient deployment, significant adaptability, and extensive coverage \cite{add0,9442378,9040264,9049651}. Conducting performance analysis of these systems is cost-effective and crucial for optimizing their design and ensuring their reliability in diverse environments. 

However, when conducting such performance analysis for real-world scenarios, we must consider a number of factors, such as the finite size of the IoT device distribution area, the curvature of the Earth, the limited satellite coverage, and complex propagation environments, all of which pose significant challenges to performance analysis. In particular, in remote areas lacking terrestrial infrastructure, IoT devices are often deployed in geographically constrained regions, such as mountains or deserts, where the terrain results in a limited device distribution area. This makes the widely used tractable Poisson point process (PPP) inapplicable, posing significant challenges in deriving the distribution of distances from ground IoT devices to satellites. Additionally, the curvature of the Earth and the limited satellite coverage also complicate the distance distributions and the feasibility of line-of-sight (LOS) conditions between satellites and ground nodes, making traditional flat Earth-surface based models inadequate for accurately describing these effects.  Recent research has shown that stochastic geometry \cite{1995Stochastic} and random geometric graphs \cite{2002Random}  have emerged as promising solutions for characterizing and evaluating the topology of different types of wireless networks. Thus, by utilizing the tool of stochastic geometry, we aim to more accurately evaluate the potential performance gains brought about by tackling the aforementioned challenges. 

\subsection{Related Work}\label{S1.1}

LEO satellite communication has gained enormous research interests in recent years. In \cite{add1}, the authors creatively integrated refracting reconfigurable intelligent surface and relay techniques into satellite IoT systems, opening up a brand new path for solving the satellite direct-to-indoor communication issues. The authors of \cite{add2} broke through traditional research perspectives of multiple access in satellite networks, and studied massive access and interference management issues by exploiting rate splitting multiple access, which injects new vitality and directions into the satellite-IoT field. In the context of multi-altitude LEO satellite networks, the authors of \cite{add3} proposed a hybrid beamforming design for terrestrial users that are assumed to employ holographic metasurfaces. It involves optimizing both holographic beamformer and digital beamformer, and for the latter a low-complexity minimum mean square error (MMSE) beamforming algorithm is designed by exploiting the stochastic geometry based distribution of the LEO satellite constellation. 

However, research on stochastic geometry based performance analysis of IoT-over-LEO satellite systems, particularly on the uplink, remains limited, with only a few studies such as \cite{yastrebova2020theoretical, al2021modeling, 9838778, 9676997, 10463093, 10679173}. Specifically, Yastrebova \emph{et al.} \cite{yastrebova2020theoretical} investigated the impact of terrestrial interference on LEO satellite uplinks in high international mobile telecommunications (IMT) frequency bands. The authors of \cite{9676997} put forth an analytical framework that captures the frame repetition behavior based on the LOS probability and analyzed the coverage performance as a function of the frame success rate. The work \cite{al2021modeling} analyzed uplink coverage performance of the cooperative satellite-terrestrial network (CSTN) in which the connection between users and satellites is realized through base station (BS) relaying. Chan \emph{et al.} \cite{9838778} analyzed the uplink performance of large-scale IoT-over-LEO satellite systems and, through simulations, compared the performance of random constellations with that of Walker constellations. The study \cite{10463093} compared direct and gateway-assisted uplinks, and optimized coverage and battery efficiency, while the authors of \cite{10679173} analyzed the uplink performance for nomadic communication with weak satellite coverage.

However, the studies \cite{yastrebova2020theoretical, al2021modeling, 9838778, 9676997, 10463093} modeled terrestrial terminals using a PPP or a Poisson cluster process (PCP), which assumes an infinite distribution region. Although the study \cite{10679173} considered a finite-size region for terrestrial devices and employed binomial point process (BPP) for modeling, it simplified the analysis by treating the terrestrial region as a large disk rather than a spherical cap, thereby neglecting the effect of Earth's curvature. This leads to significant differences between the work in \cite{10679173} and that of this paper, not only in the solution of distance distribution but also in the derivation process of  coverage probability. Moreover, the work in \cite{10679173} fails to take into account large-scale LEO constellations and the limited satellite coverage, both of which are the problems this paper aims to address. Thus, a significant gap exists in the current research landscape regarding the uplink performance analysis of realistic satellite-terrestrial networks with terrestrial terminals located in limited-size regions, particularly when also accounting for the impact of Earth's curvature. Filling this gap is the first motivation for our research.

PPP and PCP can be used to model the distributions of IoT devices in infinite space, while BPP is suitable for finite space \cite{Fundamental}. More specifically, although PPP and PCP are widely adopted, they inherently assume a random number of points, which contradicts the requirement for a fixed number of points distributed within a finite region in many IoT settings. The same issue also arises with the Matérn cluster process (MCP), since it is a special type of PCP and the parent points in MCP are constructed based on the rules of PPP. Therefore, PPP and PCP (and also MCP) are not suitable for modeling remote areas where terrestrial communication infrastructure is lacked and a fixed number of IoT devices are required to be concentrated in certain limited-size geographical regions, such as mountains or deserts.
By contrast, BPP is particularly suited to model scenarios where the number of devices is fixed and the points are distributed within a limited-size region. Therefore, the use of BPP can provide a model that better meets the realistic operational constraints when conducting performance analysis, making the analytical results more closely aligned with actual conditions.

Besides, the work \cite{yastrebova2020theoretical} used Rician fading, the studies \cite{al2021modeling, 9838778, 9676997} adopted an empirical model from \cite{9257490} for satellite-to-ground links in urban areas, and the work \cite{10463093} used a shadowed-Rician (SR) fading model which was further approximated as a Gamma function. The authors of \cite{10679173} employed the Nakagami fading model to analyze ground-to-air links. However, its model assumes a flat disk, neglecting Earth's curvature, which limits its applicability to the analysis of ground IoT devices' links to large-scale satellite constellations.
Therefore, none of these works has provided an analytical framework for IoT-over-LEO satellite systems based on the Nakagami fading model, which offers greater flexibility by allowing parameter adjustments to represent Rayleigh, Rician, and empirical fading models. Addressing these overlooked aspects in existing studies serves as our second motivation for this research.

Compared with the limited research on uplink performance, there is a relatively larger body of works focusing on downlink analysis \cite{2021Stochastic, 2020An, okati2021modeling, 7869087, 8068989, 9678973}. Even so, among these works, only the authors of \cite{9678973} considered the limited satellite coverage, while the rest assumed the satellite antenna to be omnidirectional. However, the work \cite{9678973} overlooked the presence of interference, which, while simplifying the derivation, deviates from real-world scenarios. When interference pattern and limited satellite coverage are jointly considered, they introduce significant complexity in performance analysis. Hence, it is clear that there is a lack of thorough performance analysis of the downlink, accounting for both limited satellite coverage and interference pattern.  This stands as our third motivation.

More importantly, existing studies on satellite communication analysis tend to focus either on the uplink \cite{yastrebova2020theoretical, al2021modeling, 9838778, 9676997, 10463093, 10679173} or on the downlink \cite{2021Stochastic, 2020An, okati2021modeling, 7869087, 8068989, 9678973}, overlooking the importance of holistic end-to-end (E2E) analysis. Although the authors of  \cite{song2022cooperative} discussed E2E performance, they did not considered any of these constraints --- finite-size terrestrial device distribution area, limited satellite coverage, and multiple satellites. Additionally, it lacks an analysis of distance distribution under these realistic operational constraints. Such analysis is crucial for understanding the holistic performance of IoT-over-LEO satellite systems, as it accounts for the interaction between different components and accurately evaluates the overall efficiency and reliability of data transmission from IoT devices to satellites and further to Earth stations (ESs). This significant oversight and the pressing need for a more comprehensive and realistic E2E analysis, constitute our fourth motivation. 

\begin{table*}[!tp]
	\caption{Comparison of representative state-of-the-arts satellite related networks with our work.} 
	\label{Comparison} % Tab.I
	\vspace*{-8mm}
	\begin{center}
		\resizebox{1\linewidth}{!}{
			\begin{threeparttable}
				\begin{tabular}{c|c|c|c|c|c|c|c|c} 
					\bottomrule
					\textbf{Reference} & \textbf{\makecell{Link \\types}} & \textbf{\makecell{Channel fading model}} & 	\textbf{Interference} &\textbf{\makecell{Finite-size ground\\ device distribution area}} & \textbf{\makecell{Limited satellite \\coverage}} & \textbf{Earth curvature}& \textbf{Composite E2E} & \textbf{Point process} \\ \bottomrule
					
					Yastrebova \emph{et al.} \cite{yastrebova2020theoretical} & T-S & Rician & Rician & - & - & \checkmark & -	&\makecell{S: PPP \\T: PPP} \\ \hline\rule{0pt}{8pt}
					
					Manzoor \emph{et al.}	\cite{9676997} & T-S & empirical model \cite{9257490}& empirical model \cite{9257490} &- & - & \checkmark& -	 & \makecell{S: One Satellite \\T: PPP} \\ \hline\rule{0pt}{8pt}
					
					Homssi \emph{et al.} \cite{ al2021modeling} & T-S & empirical model \cite{9257490}& empirical model \cite{9257490} & - & - & \checkmark& -	 & \makecell{S: PPP \\T: PPP} \\ \hline\rule{0pt}{8pt}
					
					Chan \emph{et al.} \cite{9838778} & T-S & empirical model \cite{9257490}& empirical model \cite{9257490} & - & - & \checkmark& -	 & \makecell{S: PPP \\T: PPP} \\ \hline\rule{0pt}{8pt}
					
					Talgat \emph{et al.} \cite{10463093} &  T-S/T-GW-S & \makecell{T-S: SR (Approximating \\ to a Gamma function) \\ T-GW-S: Rayleigh}& \makecell{T-S: SR (Approximating \\ to a Gamma function) \\ T-GW-S: Rayleigh} & - & \checkmark & \checkmark& -	 & \makecell{S: BPP \\T: PCP} \\ \hline\rule{0pt}{8pt}
					
					Dong \emph{et al.} \cite{10679173} & T-A-S & \makecell{T-A: Nakagami\\A-S: SR} & \makecell{T-A: Nakagami\\A-S: SR} & \checkmark & - & -& -	 & \makecell{T: BPP \\A: MHCPP} \\ \hline\rule{0pt}{8pt}	
					
					Talgat \emph{et al.} \cite{2021Stochastic} & S-GW-U & \makecell{S-GW: SR \\ GW-U: Rayleigh}& - & -& - & \checkmark& - & BPP/PPP\\ \hline\rule{0pt}{8pt}
					
					Al-Hourani \cite{2020An} & S-T & GMM  & GMM & - & - & \checkmark& -	 & PPP \\ \hline\rule{0pt}{8pt}
					
					Okati\emph{et al.} \cite{okati2021modeling} & S-T & Rayleigh  &- & - & - & \checkmark& - & NPPP \\ \hline\rule{0pt}{8pt}
					
					Sellathurai \emph{et al.} \cite{7869087} & S-T & \makecell{SR (Approximating \\ to a Gamma function)} & \makecell{SR (Approximating \\ to a Gamma function)} &- & - & \checkmark & - & PPP \\ \hline\rule{0pt}{8pt}
					
					Kolawole \emph{et al.} \cite{8068989} & S-U/GW-U & \makecell{S-U: SR (Approximating \\ to a Gamma function) \\ T-U: Nakagami} & Nakagami &- & - & \checkmark & - & PPP \\ \hline\rule{0pt}{8pt}
					
					Jung \emph{et al.} \cite{9678973} & S-T & SR  &- & - & \checkmark & \checkmark & - & BPP \\ \hline\rule{0pt}{8pt}
						Song \emph{et al.} \cite{song2022cooperative} & A-S-GW& \makecell{A-S: Nakagami \\ S-GW: SR} & Nakagami &- & - & - &\checkmark&	\makecell{S:One satellite \\ A:MHCPP} \\
							\hline\rule{0pt}{8pt}
					Our work  & T-S-ES & \makecell{T-S: Nakagami\\S-ES: SR} &  \makecell{T-S: Nakagami\\S-ES: SR} & \checkmark & \checkmark& \checkmark & \checkmark& \makecell{S: BPP\\T: BPP} \\

					 \bottomrule
					
				\end{tabular}
				\begin{tablenotes}
					\footnotesize
					\item S: satellite, A: aerial node, T: terrestrial node, ES: Earth station, GW: gateway. SR: shadowed-Rician fading, GMM: Gaussian mixture model. \checkmark: considered, - : not considered.
					\item PCP: Poisson cluster process (contains Mat{\'e}rn cluster process and Tomas cluster process),  PPP: Poisson point process, BPP: binomial point process, MHCPP: Mat{\'e}rn hard-core point process.
				\end{tablenotes}
			\end{threeparttable}
		}
	\end{center}
	\vspace*{-6mm}
\end{table*}

\subsection{Contributions of Our Paper}\label{S1.2}

Inspired by the insights gained from reviewing prior works, in this paper we focus on the performance analysis of IoT-over-LEO satellite systems under realistic operational constraints, such as finite-size ground area, limited satellite coverage, sophisticated Earth curvature, integral uplink and downlink, and link-specific interference. Specifically, we introduce an analytical framework for IoT-over-LEO satellite systems, where IoT devices are distributed within a limited area and satellite beams have restricted ranges. We derive an expression to characterize the communication coverage of IoT devices within this constrained space, where the devices establish connections with the ES through satellite relays.  In our exploration, we partition the system into two main segments: the IoT-device-to-satellite (T-S) link and the satellite-to-ES (S-ES) link. Based on the position distribution characteristics of the IoT devices and the ES, we analyze each segment using Nakagami and SR fading models, respectively. Our contributions can be summarized as follows.

\begin{itemize}[label=\textbullet,font=\Large]
	\item We introduce a novel stochastic geometry based model to describe the impact of IoT device distribution within a finite ground area on the performance of IoT-over-LEO satellite systems. Existing papers on the performance analysis of IoT-over-satellite systems typically assume a ground area of infinite size, which is unrealistic. In contrast, we model the terrestrial distributions of IoT devices using BPP, which allows for characterizing a fixed number of devices distributed within a given region. Furthermore, unlike prior works that assume an omnidirectional or Earth-tangent satellite coverage, we incorporate a practical beam-dependent limited satellite coverage, enabling a more accurate derivation of geometric relationships.
	
	\item Based on the above new modeling assumptions, we analyze the distance distributions of both the T-S link and the S-ES link under multiple real-world constraints, including the finite size of the IoT device distribution area, the curvature of the Earth, and the beam-dependent limited satellite coverage. Specifically, in existing studies based on PPP or PCP, the distance distribution function can be directly obtained through the probability of the empty event. However, when simultaneously considering the limited ground area, the Earth's curvature, and the limited satellite coverage, it is necessary to take into account the spatial geometric relationships. This significantly increases the difficulty of the derivation.

	\item We take into account different propagation environments of the uplink and downlink channels, and adopt distinct channel models that are appropriate for them. Additionally, the relevant interference patterns are considered in our theoretical analysis. For the T-S link, we adopt the Nakagami fading model, which serves as a more universal and adaptable framework for characterizing small-scale fading. In other words, it can offer greater flexibility by allowing parameter adjustments to represent Rayleigh, Rician, and empirical fading models. For the S-ES link in LEO satellite systems, where Earth stations may be located in urban areas with building obstructions, the SR model is well-suited as it captures the combined effects of shadowing and dominant LOS paths.
	
	\item We carry out the E2E performance analysis for an IoT-over-LEO satellite system in terms of the CP of T-S-ES links, and numerical simulations have validated our theoretical analysis. Specifically, we perform  an in-depth study on the impact of key system parameters, including the size of IoT device distribution area, the width of satellite beams, the numbers of IoT devices and satellites, and the antenna gains. Important insights are obtained from the extensive results. To the best of our knowledge, the E2E performance analysis of an IoT-over-LEO satellite system  where IoT devices communicate with the ES via satellites, has not been reported before.
\end{itemize}		

Table~\ref{Comparison} compares our work with some representative state-of-the-art contributions related to IoT-over-LEO satellite systems, in order to highlight the novelty of our contributions.

\begin{table}[!b]
	\vspace*{-6mm}
	\small
	\caption{Summary of Mathematical Notations adopted.} 
	\label{Table-Notation} % Tab.II
	\vspace*{-7mm}
	\begin{center}
		\begin{tabular}{c|c}	
			\toprule
			\textbf{Notation} & \textbf{Description} \\ \midrule
			$\mathcal{A}_{\textrm{T}}$ & Finite-size IoT device distribution area \\
			$\mathcal{A}_{\textrm{C}}$ & Satellite coverage\\
			$\mathcal{A}_{\textrm{S}}$ &  Distribution region of visible satellites from ground\\
			$\theta_1$, $\theta_3$, $\theta_4$ & Earth-centered zenith angles of $\mathcal{A}_{\textrm{T}}$, $\mathcal{A}_{\textrm{C}}$, and $\mathcal{A}_{\textrm{S}}$ \\
			$\theta_2$ & Angle between the satellite and the target IoT device \\
			$R_{\textrm{c}}$ & Radius of $\mathcal{A}_{\textrm{T}}$\\
			$H$  & Height of satellites\\
			$r_{\textrm{e}}$ &  Radius of the Earth \\	
			$N_{T\textrm{-}S}$ &  Nakagami fading parameter \\
			$\varphi_{\textrm{s}}$ & Antenna beamwidth of satellites \\
			$\varphi$  &  \makecell{ Satellite-centered angle between the center\\ of the Earth and the ground transmitter} \\
			$N_{\textrm{T}}$ & Total number of IoT devices\\
			$N_{\textrm{S}}$ & Total number of satellites \\
			$\overline{D}$ & Duty cycle of IoT devices \\	
			$p_{\textrm{t}}$ & Power of IoT devices\\
			$p_{\textrm{s}}$ & Power of target satellite\\
			$p_n$ &  Power of interfering satellites \\ \bottomrule
		\end{tabular}
	\end{center}
	\vspace*{-2mm}
\end{table}

\subsection{Organization of the Paper and Notations}\label{S3.1}

The remainder of this paper is organized as follows. Section~\ref{S2} presents the topology model, the channel model, and the signal-to-interference-plus-noise ratio (SINR) model of the IoT-over-LEO satellite system considered. The distance distributions for both the target and interfering links of the T-S and S-ES links are derived in Section~\ref{S3}. Section~\ref{S4} provides our primary performance analysis results, which involve the derivations of the analytical CPs for the T-S link (i.e., uplink) and the S-ES link (i.e., downlink), respectively. Section~\ref{S5} presents the derivations of the analytical AERs for the T-S link and the S-ES link, respectively. In Section~\ref{S6}, we provide numerical results to verify our theoretical derivations and to quantify the impact of key system parameters. Our conclusions are drawn in Section~\ref{S7}.

Throughout this paper, $\mathbb{P}(\cdot)$ and $\mathbb{E}[\cdot]$ stand for the probability measure and the expectation operator, respectively. The Laplace transform of random variable $X$ is defined by $\mathcal{L}_X(s)=\mathbb{E}\,\left[\,\exp(-s X)\,\right]$. The cumulative distribution function (CDF) and probability density function (PDF) of random variable $X$ are denoted by $F_X(x)$ and $f_x(x)$, respectively. $\Gamma(\cdot)$ is the Gamma function, and the Pochhammer symbol is defined as ${\rm Ps}(x)_{n}=\Gamma(x+n)/\Gamma(x)$. The lower incomplete Gamma function is defined as $\gamma(a,x)=\int_0^xt^{a-1}\exp(-t)dt.$ $\binom nk$ denotes the binomial coefficient. ${_1F_1}\left(\cdot;\cdot;\cdot\right)$ is the confluent hypergeometric function of the first kind. For easy reference, Table~\ref{Table-Notation} lists the key symbols used in this paper.

\begin{figure}[t]\setcounter{figure}{0}
	\vspace*{-1mm}
	\begin{center}
		\includegraphics[width=1\columnwidth]{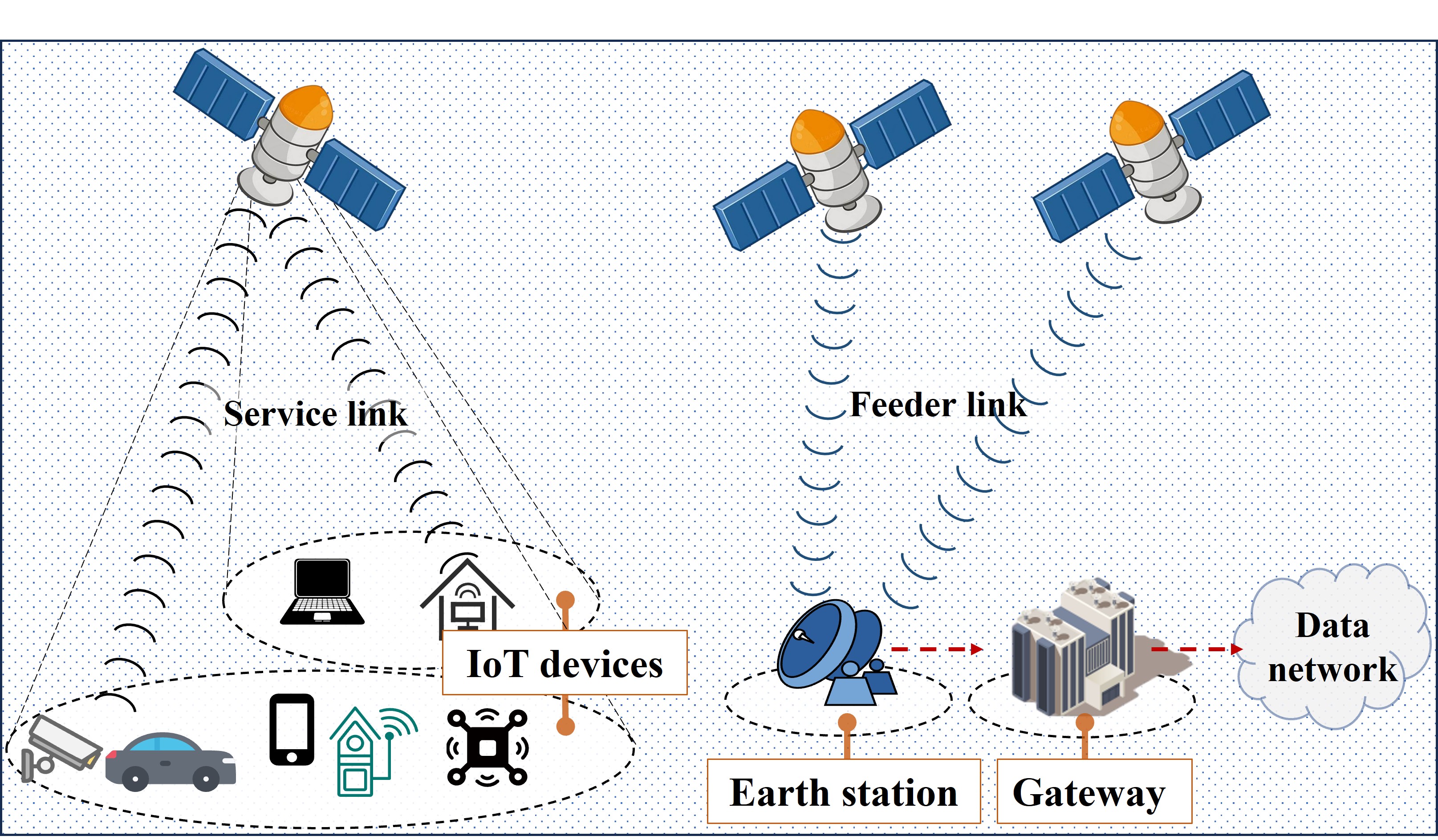}
	\end{center}
	\vspace*{-4mm}
	\caption{Illustration of the IoT-over-LEO satellite system.}
	\label{fig:1}
	\vspace*{-4mm}
\end{figure}

\section{System Model}\label{S2}

In this section, we present a holistic E2E system model for the IoT-over-LEO satellite system, which takes into account both the uplink and downlink, the finite-size ground area, the limited satellite coverage, the Earth curvature, and the link-dependent interference. Based on this model, we carry out the system performance analysis. 

\begin{figure*}[!b]\setcounter{figure}{1}
	\vspace*{-4mm}
	\begin{minipage}[t]{0.49\linewidth} %
		\centering
		\subfigure[service link] {	
			\centering
			\includegraphics[width=0.85\linewidth]{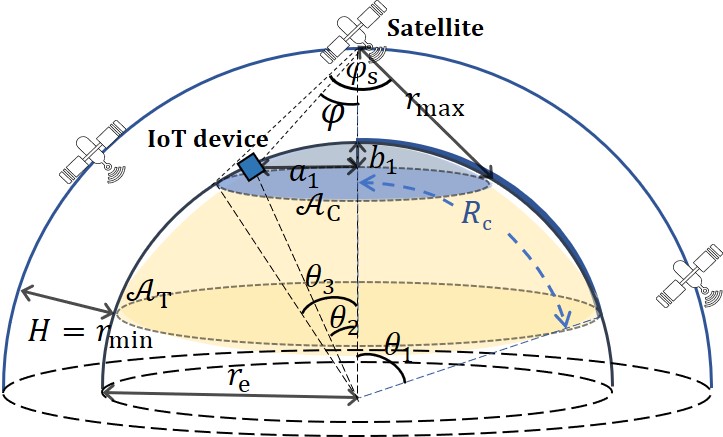}
			\label{fr1}
		}
	\end{minipage}  
	\begin{minipage}[t]{0.49\linewidth} %
		\centering
		\subfigure[feeder link] {	
			\centering
			\includegraphics[width=0.85\linewidth]{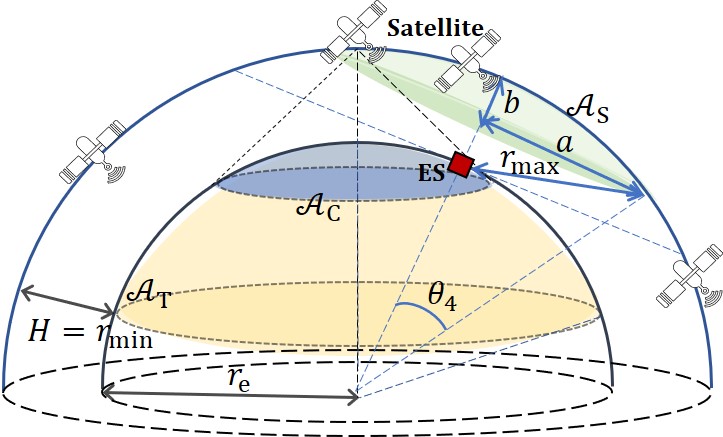}
			\label{fr2}
		}
	\end{minipage}
	\vspace*{-2mm}
	\caption{Illustration of the system's geometric relationships, where the yellow, blue and green spherical caps denote the finite-size ground area $\mathcal{A}_{\textrm{T}}$, satellite coverage  $\mathcal{A}_{\textrm{C}}$ and distribution region of visible satellites $\mathcal{A}_{\textrm{S}}$, respectively.}
	\label{fr} % Fig.2
	\vspace*{-1mm}
\end{figure*}

\subsection{Geometric Model}\label{S2.1}  

As illustrated in Fig.~\ref{fig:1}, the system consists of multiple IoT devices ($T_n$, $n\! \in\! [1,\cdots,N_{\textrm{T}}]$, $N_{\textrm{T}}\! \geq\! 1$), a group of satellites ($S_l$, $l\! \in\! [1,\cdots,N_{\textrm{S}}]$, $N_{\textrm{S}}\! \geq\! 1$), and an ES. IoT devices are situated on the surface of the Earth, which is modeled as a perfect sphere with radius $r_{\textrm{e}}\! \approx\! 6371$\,km, while the satellite constellation is  placed on low circular orbits with the same altitude $H$. IoT devices connect to the core network using a transparent transmission architecture via relay satellites, the ES and the gateway, with the T-S uplink referred to as the service link and the S-ES downlink as the feeder link. We make the assumption that the devices maintain a consistently averaged duty cycle denoted as $\overline{D}$ during transmission. 

The system's geometric relationships are illustrated in Fig.~\ref{fr}. Since BPP is accurate for characterizing the nodal distributions in finite-area networks, $N_{\textrm{T}}$ IoT devices are distributed within a circular finite-size area, following a BPP distribution, denoted as $\Phi_{\textrm{T}}$\footnote{Since we use low Earth orbit satellites with a limited beam angle, satellite coverage  is relatively small. For instance, under our simulation parameters, the ground radius of this area is less than 90\,km, which means that IoT devices can be counted and modeled using BPP.}. This finite-size area takes the shape of a spherical cap, denoted as $\mathcal{A}_{\textrm{T}}$, with the arc $R_{\textrm{c}}$ of the largest circle on Earth’s surface and an Earth-centered zenith angle of $\theta_1=\frac{R_{\textrm{c}}}{r_{\textrm{e}}}$.  Our modeling and derivation are equally applicable to other propagation scenarios that can be properly characterized by Nakagami fading. Due to the limited number of satellites covering a given finite region, $N_{\textrm{S}}$ satellites are modeled by a BPP\,\footnote{LEO constellations currently being researched and developed are large, consisting of many satellites.  From the user’s perspective, the positions of the satellites can be approximated as independent. Furthermore, when satellites move along random circular orbits, this independence in their positions remains constant over time. Besides, the authors of [12] have further examined the accuracy of the BPP in modeling LEO constellations, focusing on its alignment with other commonly used models, such as Fibonacci lattice-based point set and orbit model-based point process. Therefore, using the BPP to model large-scale constellations is a reasonable assumption.}, denoted as $\Phi_{\textrm{S}}$, on a sphere with an altitude of $r_{\textrm{e}}+H$. 

We assume that each satellite is equipped with multiple directional beam antennas, which cooperate to form a coverage area directed towards the  Earth's center\,\footnote{Note that this paper does not model a single-beam system. In practice, LEO constellations typically employ multi-beam cooperation, and virtual beam technology is used to manage multiple beams, enabling coverage of the Earth's surface via a spherical cap region.}. To facilitate more flexible communication with satellites in various directions,  IoT devices employ directional antennas with steerable beams. For mathematical simplicity, we assume that both satellites and IoT devices display sectorized beam patterns and disregard the influence of the satellite's side lobes, given that the main lobe's directionality is of greater significance\,\footnote{These simplified beam patterns are commonly used in theoretical analyses that employ stochastic geometry \cite{7880676, 8335329, 9200666}. Although this assumption is not theoretically exact, it provides a sufficiently accurate approximation for most practical applications, as evidenced by the technical specifications of commercial antenna products used in cellular base stations.}.  We represent the main lobe antenna gain of the satellite $S$ as $G_{S}(\varphi_{\textrm{s}})$, which is determined based on the satellite’s beamforming configuration. Specifically, the coverage region is approximated as a spherical cap, and the gain is expressed as a function of the satellite coverage angle $\varphi_{\textrm{s}}$, given by the following equation \cite{9838778}:
\begin{align}\label{eqSmb} % eq.1
	G_{S}(\varphi_{\textrm{s}}) &= \frac{2}{1-\cos\big(\frac{\varphi_{\textrm{s}}}{2}\big)}.
\end{align} 
The main lobe and side lobe gains of IoT devices are denoted as $G_{\rm t}$ and $g_{\rm t}$, respectively, with the threshold angle between the main and side lobes in the beam pattern represented by $\varphi_{\textrm{t}}$. Furthermore, to ensure the timely reception of signals from the relay satellite, we consider the use of an omnidirectional antenna with a gain of $G_{ES}$ at the ES.

As depicted in Fig.~\ref{fr1}, the ground area covered by the satellite signal is represented by a blue spherical cap, denoted as $\mathcal{A}_{\textrm{C}}$, with an Earth-centered zenith angle $\theta_3$, while the finite-size area is indicated by a yellow spherical cap $\mathcal{A}_{\textrm{T}}$. Only IoT devices within the satellite coverage  can communicate with the satellite, and these devices are denoted as the set $\Phi_{\textrm{C}}$. Every IoT device is associated with a satellite referred to as the serving satellite in the subsequent context, which is positioned closest to its local zenith. Due to the limited number and signal coverage of satellites, there is a possibility that no satellite exists within the communication range of a target IoT device. We denote this probability as $P_0$ and provide a detailed discussion in Section \ref{S4.1}. To streamline the mathematical calculations, the device’s local zenith is represented as a point directly above it, aligned with the Earth-centered vertical line and lying on the satellite's orbital sphere. We further assume that these IoT transmitters employ code-division multiple access (CDMA) and share a common frequency $f_1$. The same access technique is utilized in the feeder link, with satellites operating on a different frequency $f_2$. Consequently, the signals emitted by IoT devices do not interfere with the reception at the ES. But the received signal at the serving satellite suffers the multi-user interference (MUI) arising from other IoT devices within the coverage area of the satellite beam on the service link.

Since the satellite antenna beam is a fixed directional beam with an angle $\varphi_{\textrm{s}}$, wireless transmissions propagate to the ES exclusively from the satellites positioned within the area composed of satellites that can cover the ES in their coverage areas. This communicable area is called the distribution region of visible satellites, which is represented by the green spherical cap $\mathcal{A}_{\textrm{S}}$ in Fig.~\ref{fr2}. The set of satellites within the distribution region of visible satellites is denoted as  $\Phi_{\textrm{L}}$. It can be seen that $\mathcal{A}_{\textrm{S}}$'s Earth-centered zenith angle, denoted as $\theta_4$, is equal to $\theta_3$, i.e., $\theta_4\! =\! \theta_3$. The satellite's height $H$ is equivalent to the minimum distance $r_{\textrm{min}}$ to the ES, achieved when it is directly overhead, i.e., $H\! =\! r_{\textrm{min}}$. The maximum possible distance that a satellite can communicate with the ES  is denoted by  $r_{\textrm{max}}$, which will be discussed in Section~\ref{S3}. This maximum distance is realized when the satellite-centered angle between the center of the Earth and the ground transmitter, $\varphi$, is equal to the half of $\varphi_{\textrm{s}}$, i.e., $\varphi\! =\! \frac{1}{2}\varphi_{\textrm{s}}$. 

\subsection{Service Link Channel Model}\label{S2.2}

Let $m$ be the index of the given target node, $T_m$ be the target IoT device, $T_n$ be the $n$th interfering IoT device which is within the coverage of the serving satellite and imposes interference on the reception of $T_m$'s signal at the serving satellite, and $D_{T\textrm{-}S}$ be the overall directional gain from IoT device $T$ to satellite $S$. The value of $D_{T_m\textrm{-}S}$ for the T-S link from $T_m$ to satellite $S$ is given by $D_{T_m\textrm{-}S}\! =\! G_{\rm t} G_{S}(\varphi)$. For any interfering link, we assume that the angle of arrival and the angle of departure of the signal are independently and uniformly distributed in the range $(0, 2\pi]$, which results in a random directivity gain denoted as $D_{T_n\textrm{-}S}$. The probability distribution for $D_{T_n\textrm{-}S}$ can be expressed as:
\begin{align}\label{eqDGi} % eq.2
	D_{T_n\textrm{-}S} =& \left\{ \begin{array}{cl}
		G_{\rm t}G_{S}(\varphi_{\textrm{s}}), & {P}_{{\rm Mb},{\rm Mb}}=\frac{\varphi_{\textrm{t}}}{2\pi} , \\ 
		\,g_{\rm t}\, G_{S}(\varphi_{\textrm{s}}), & {P}_{{\rm Sb},{\rm Mb}}=1-\frac{\varphi_{\textrm{t}}}{2\pi} ,
	\end{array} \right.
\end{align} 
where ${P}_{tx,{\rm Mb}}$ denotes the probability of T-S link in state $(tx,{\rm Mb})$. Here the first subscript $tx \in\{{\rm Sb},{\rm Mb}\}$ represents the beam state of the IoT device and the second subscript $\rm Mb$ represents the beam state of the satellite, while $\rm Sb$ and $\rm Mb$ denote the side lobe and main lobe, respectively.

Let $R_{\textrm{T}}$ be the distance between the IoT device and the serving satellite, $\alpha_1$ be the path-loss exponent of the service link, and $f_1$ be the carrier frequency. Denote the speed of light as $c$. Then the path loss of the service link is given by \cite{9678973}:
\begin{align}\label{eqPLsl} % eq.3
	l_1(R_{\textrm{T}}) =& \Big( \frac{c}{4\pi f_1}\Big)^2 R_{\textrm{T}}^{-\alpha_1} .
\end{align}

The satellite-terrestrial connection commonly relies on LOS transmission, but it may encounter obstacles such as buildings or plants which impede its signal propagation \cite{9520123}. In practice, many IoT devices are situated in outdoor environments prone to rich scattering and/or LOS paths, hence the signals from IoT devices to satellites often propagate through multiple paths characterized by the Nakagami fading model, which offers greater flexibility by allowing parameter adjustments to model Rayleigh, Rician, and empirical fading\,\footnote{For instance, with the Nakagami fading parameter set to $N_{T\textrm{-}S}=1$ and $N_{T\textrm{-}S}=\frac{(K+1)^2}{2K+1}$, it resorts to the Rayleigh and Rician-K distributions, respectively. Furthermore, by tuning its parameter $N_{T\textrm{-}S}$, it is possible to model the signal fading conditions spanning from severe to moderate, while making the distribution fit to empirically measured fading data sets \cite{Park}.}. Since the channel model $h_{T\textrm{-}S}$ exhibits Nakagami fading, $|h_{T\textrm{-}S}|^2$ can be modeled as a random variable that follows a normalized Gamma distribution. For the sake of simplicity, it is assumed that the Nakagami fading parameter $N_{T\textrm{-}S}$ is a positive integer \cite{6932503}. 
We further posit that the Doppler shifts arising from the high-speed movement of LEO satellites can be successfully mitigated using accurate estimation methods. These methods leverage detailed ephemeris data of the satellites, such as orbit types, altitudes, positions, and velocities, which can be precisely predicted in advance \cite{7468478,9107497}.

Building upon the aforementioned modeling, the SINR at the receiving satellite for the signal originating from the target IoT device $T_m$ can be expressed as:
\begin{align} % eqs.4,5
	\mathrm{SINR}_1 =& \frac{p_{\textrm{t}} D_{T_m\textrm{-}S}\left|h_{T_m\textrm{-}S}\right|^2 l_1\big(R_{T_m}\big)}{I_{\textrm{T}}+\sigma^2} \label{eqSINR0} \\
	\approx&	\frac{p_{\textrm{t}} D_{T_m\textrm{-}S}\left|h_{T_m\textrm{-}S}\right|^2 l_1(R_{T_m})} {I_{\textrm{T}}} \label{eqSINR},
\end{align}
where $p_{\textrm{t}}$ is the transmit power at IoT devices, $R_{T_m}$ is the distance from node $T_m$ to the satellite $S$, and $\sigma^2$ is the strength of additive white Gaussian noise (AWGN) of both the T-S link and S-ES link, while the interference power $I_{\textrm{T}}$ is given by  
\begin{align}\label{eqInterf} % eq.6
	I_{\textrm{T}} &= \sum_{T_n\in\Phi_{\textrm{C}} {\backslash} T_m} p_{\textrm{t}}D_{T_n\textrm{-}S} \overline{D}\left|h_{T_n\textrm{-} S }\right|^2 l_1({R_{T_n}}), 
\end{align}
in which $\overline{D}$ is the averaged duty cycle, $T_n$ are interfering IoT devices, and $R_{T_n}$ is the distance from node $T_n$ to $S$. The approximation in (\ref{eqSINR}) is due to the fact that the system is interference limited and $I_{\textrm{T}} \gg \sigma^2$.

\subsection{Feeder Link Channel Model}\label{S2.3}

Let $S_m$ be the last-hop satellite that transmits the signal to the ES, $S_n$ be the interfering satellite, and $G_{S\textrm{-}ES}$ be the overall beam gain from satellite $S$ to the ES. Since the satellite's directional antenna utilizes fixed beams with negligible side lobe gain and the ES employs an omnidirectional antenna, the value of the beam gain remains uniform for both the given target satellite $S_m$ and the interfering satellite $S_n$, i.e., $D_{S_m\textrm{-}ES} = D_{S_n\textrm{-}ES}= G_{S}(\varphi_{\textrm{s}}) G_{ES}$.

The path-loss of the feeder link between the satellite and the ES is given by
\begin{align}\label{eqPLfl} % eq.7
	l_2(R_{\textrm{S}}) =& \Big(\frac{c}{4\pi f_2}\Big)^2 R_{\textrm{S}}^{-\alpha_2},
\end{align}
where $f_2$ is the carrier frequency of the feeder link, $R_{\textrm{S}}$ is the distance between the ES and its serving satellite, and $\alpha_2$ is the path-loss exponent of the feeder link.

ESs are commonly deployed in outdoor environments with low population density and minimal industrial activity, resulting in reduced electromagnetic interference. Additionally, the open spaces in suburban areas reduce the presence of obstacles that might affect signal transmission.   However, for the LEO systems considered, ESs are also often deployed in urban areas, where the S-ES link may be affected by shadowing from buildings and other obstructions. Hence, the S-ES link is characterized by using the SR fading model \cite{2003A}, which encompasses both the LOS shadowing component and the scattering component, usually used to analyze satellite-terrestrial links in different fixed and mobile satellite services operating in various frequency bands \cite{jung2018outage}. Therefore, we use the SR fading model to characterize the small-scale fading of the feeder link, with $2\bar{c}$ defining the mean power of the multipath component excluding the LOS component and $\Omega$ denoting the mean power of the LOS component. The Nakagami  fading parameter is denoted by $q$, and the small-scale fading is denoted by $|h_{S\textrm{-}ES}|^{2}$ in this study, where $h_{S\textrm{-}ES}$ denotes the channel coefficient of the S-ES link. Consequently, the PDF of $|h_{S\textrm{-}ES}|^{2}$ can be represented as \cite{zhang2019performance}:	
\begin{align}\label{eqSSF} % eq.8
	f_{|h_{S{\textrm{-}}ES}|^2}(x) =& \kappa\, \mathrm{exp}(-\beta x)\, {_1F_1}\left(q\, ;1\, ;\delta x\right ),
\end{align}
where $\kappa =\frac{(2 \bar{c} q)^q}{2 \bar{c}(2 \bar{c} q+\Omega)^q}$, $\delta=\frac{\Omega}{2 \bar{c}(2 \bar{c} q+\Omega)}$, and $\beta=\frac{1}{2 \bar{c}}$.

Similar to the service link, the SINR of the  signal received at the ES from the satellite $S_m$ is given by
\begin{align}\label{eqESsinr} % eq.9
	\mathrm{SINR}_2 &\approx\frac{p_{\textrm{s}} D_{S_m\textrm{-}ES}\left|h_{S_m\textrm{-} ES}\right|^2 l_2(R_{S_m})} {I_{\textrm{S}}},
\end{align}
where the interference power $I_{\textrm{S}}$ is given by
\begin{align}\label{eqInterf1} % eq.10
	I_{\textrm{S}} &= \sum_{S_n\in\Phi_{\textrm{L}} {\backslash} \{S_m\}} p_n D_{S_n\textrm{-}ES}\left|h_{S_n\textrm{-} ES}\right|^2 l_2(R_{S_n}), 
\end{align}
$p_{\textrm{s}}$ and $p_n$ are the transmit power at target satellite $S_m$ and interfering satellite $S_n$, respectively, while $R_{S_m}$ and $R_{S_n}$ are the distances from the target node $S_m$ and the interfering node $S_n$ to the ES, respectively. Owing to the interference limited nature, $I_{\textrm{S}} \gg \sigma^2$, and we ignore the impact of noise.

\section{ Distance Distribution}\label{S3}

To derive the CP expression, it is essential to first characterize key distance distributions arising from the stochastic geometry inherent in the considered system. Additionally, we delve into the success probabilities of the binomial process associated with the counts of terrestrial interfering IoT devices and space interfering satellites, respectively. The success probability refers to the likelihood of a point being situated within the area of interest. In the case of homogeneous BPPs, the success probability is calculated as the ratio of the surface area within the region of interest to the total surface area where all points are distributed \cite{1995Stochastic}.

Let $r$ be the distance between IoT device $T$ and satellite $S$, and $\theta_2$ signify the central angle between the two. From basic geometry, we obtain 
\begin{align} % eqs.11,12
	& r \, \cos(\varphi) + r_{\textrm{e}} \,\cos(\theta_2) = H + r_{\textrm{e}} ,	\label{L0-1} \\
	& r^2 =\left((H + r_{\textrm{e}}) \sin(\theta_2) \right)^2 + \left( (H + r_{\textrm{e}}) \cos(\theta_2) - r_{\textrm{e}}\right)^2.	\label{L0-2}
\end{align} 	
Substituting (\ref{L0-1})	into (\ref{L0-2}) leads to the relationship between $r$ and $\varphi_{\textrm{s}}$ given by 
\begin{align}\label{L0-3} % eq.13
	r = (H + r_{\textrm{e}}) \cos(\varphi)-\sqrt{r_{\textrm{e}}^2\- (H+r_{\textrm{e}})^2 \sin^2(\varphi)}.	
\end{align} 
Then, we can obtain the maximum communication distance $r_{\textrm{max}}$ from satellite to IoT devices or the ES by substituting $\varphi=\frac{\varphi_{\textrm{s}}}{2}$ into (\ref{L0-3}):
\begin{align}\label{L0-4} % eq.14
	\!\!\!\!\!	r_{\textrm{max}}\! = \! (H + r_{\textrm{e}}) \cos\Big(\frac{\varphi_{\textrm{s}}}{2}\Big)\! -\! \sqrt{r_{\textrm{e}}^2\! -\! (H + r_{\textrm{e}})^2 \sin^2\Big(\frac{\varphi_{\textrm{s}}}{2}\Big)}.	
\end{align} 

\begin{Lemma}\label{L1}
	The CDF of the distance $R_{\textrm{S}}$ from any specific satellite $S$ to an IoT device  $T$ or to the ES is given by
	\begin{align}\label{L1-1} % eq.15
		F_{R_{\rm{S}}}(r_{\rm{s}}) \triangleq & \mathbb{P}\left( R_{\rm{S}} \leq r_{\rm{s}} \right) \nonumber \\
		=& \left\{ \begin{array}{cl}
			0	, & r_{\rm{s}} < H , \\ 
			\frac{ r_{\rm s}^2-H^2}{4(r_{\rm{e}}+H)r_{\rm{e}}}	, & H \leq r_{\rm{s}} \leq H+2r_{\rm{e}} , \\ 
			1	, & r_{\rm{s}} > H+2r_{\rm{e}} ,
		\end{array} \right.
	\end{align} 
	and the corresponding PDF is given by
	\begin{align}\label{L1-2} % eq.16
		f_{R_{\rm{S}}}(r_{\rm{s}}) = \left\{ \begin{array}{cl}
			\frac{ r_{\rm{s}} }{2(r_{\rm{e}}+H)r_{\rm{e}}}, & H \leq r_{\rm{s}} \leq H+2r_{\rm{e}} , \\
			0 , & \textrm{otherwise} .
		\end{array} \right.
	\end{align}
\end{Lemma}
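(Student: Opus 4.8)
The plan is to reduce the distance distribution to the elementary fact that a single point of a homogeneous BPP on the satellite sphere of radius $r_{\textrm{e}}+H$ is uniformly distributed over that sphere, and then to translate the geometric distance relation into a statement about the Earth-centered angle $\theta_2$. First I would start from the squared-distance identity (\ref{L0-2}), which after expanding the trigonometric terms is simply the law of cosines for the triangle formed by the Earth's center, the ground node, and the satellite:
\begin{equation*}
r^2 = (r_{\textrm{e}}+H)^2 + r_{\textrm{e}}^2 - 2(r_{\textrm{e}}+H)r_{\textrm{e}}\cos(\theta_2).
\end{equation*}
Since $r$ is strictly decreasing in $\cos\theta_2$ over the relevant range, the event $\{R_{\textrm{S}} \le r_{\textrm{s}}\}$ is equivalent to $\{\cos\theta_2 \ge u_0(r_{\textrm{s}})\}$, where solving the identity for the cosine gives $u_0(r_{\textrm{s}}) = \frac{(r_{\textrm{e}}+H)^2 + r_{\textrm{e}}^2 - r_{\textrm{s}}^2}{2(r_{\textrm{e}}+H)r_{\textrm{e}}}$.

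Next I would invoke the success-probability principle recalled just before the lemma: for a homogeneous BPP, the probability that a given point lies in a region equals the ratio of that region's surface area to the total surface area. The set of satellite positions with $\theta_2 \le \theta$ forms a spherical cap of half-angle $\theta$ on the sphere of radius $r_{\textrm{e}}+H$, whose area is $2\pi(r_{\textrm{e}}+H)^2(1-\cos\theta)$, while the full sphere has area $4\pi(r_{\textrm{e}}+H)^2$. Hence $\mathbb{P}(\theta_2 \le \theta) = \frac{1-\cos\theta}{2}$; equivalently, $\cos\theta_2$ is uniform on $[-1,1]$. Combining this with the previous step yields $F_{R_{\textrm{S}}}(r_{\textrm{s}}) = \mathbb{P}(\cos\theta_2 \ge u_0) = \frac{1-u_0(r_{\textrm{s}})}{2}$, and the algebra collapses because $2(r_{\textrm{e}}+H)r_{\textrm{e}} - (r_{\textrm{e}}+H)^2 - r_{\textrm{e}}^2 = -H^2$, so the numerator of $1-u_0$ simplifies to $r_{\textrm{s}}^2 - H^2$ and the middle branch $\frac{r_{\textrm{s}}^2-H^2}{4(r_{\textrm{e}}+H)r_{\textrm{e}}}$ follows.

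Then I would fix the support and the two trivial branches by substituting the extreme values $\cos\theta_2 = 1$ and $\cos\theta_2 = -1$ into the distance relation, which give $r_{\textrm{s}} = H$ (satellite at the local zenith) and $r_{\textrm{s}} = H + 2r_{\textrm{e}}$ (satellite antipodal); thus $F_{R_{\textrm{S}}}$ is $0$ below $H$ and $1$ above $H+2r_{\textrm{e}}$, giving the three-piece form. Finally, differentiating the middle branch yields the PDF $f_{R_{\textrm{S}}}(r_{\textrm{s}}) = \frac{r_{\textrm{s}}}{2(r_{\textrm{e}}+H)r_{\textrm{e}}}$ on $[H,\,H+2r_{\textrm{e}}]$.

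The main obstacle is not computational but lies in two modeling justifications that must be handled with care: (i) asserting that a single satellite of the BPP is uniform on the sphere and therefore that $\cos\theta_2$ is uniform on $[-1,1]$ --- equivalently, that the cap-area ratio is exactly $(1-\cos\theta)/2$; and (ii) tracking the decreasing dependence of $r$ on $\cos\theta_2$ so that the inequality in $\{R_{\textrm{S}} \le r_{\textrm{s}}\}$ correctly flips into a \emph{lower} bound on the cosine. Once these are settled, the remaining work is just a routine spherical-cap area computation and algebraic simplification.
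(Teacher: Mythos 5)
Your proof is correct and follows essentially the same route as the paper's: the event $\{R_{\textrm{S}} \le r_{\textrm{s}}\}$ is identified with the satellite lying in a spherical cap on the sphere of radius $r_{\textrm{e}}+H$, and the CDF is obtained as the ratio of that cap's area to the full sphere's area, which is precisely the BPP success-probability principle invoked in Appendix~A. The only difference is bookkeeping --- you relate distance to the central angle $\theta_2$ via the law of cosines and invert the monotone map (equivalently, use that $\cos\theta_2$ is uniform on $[-1,1]$), while the paper relates the cap area to distance directly through the cap's height and base radius --- and both reduce to the same algebraic identity $(r_{\textrm{e}}+H)^2 + r_{\textrm{e}}^2 - 2(r_{\textrm{e}}+H)r_{\textrm{e}} = H^2$.
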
 

\begin{proof}
	See Appendix~\ref{ApA}.
\end{proof}

\begin{remark}\label{remark0}
To derive the distance distribution of the service link, we first need to determine the distance distribution from the target device to its nearest service satellite, followed by the distance distribution from interfering devices to this particular service satellite. The former requires obtaining the distance distribution from an arbitrary point on the satellite orbital sphere to a fixed point on the ground, which is given in Lemma \ref{L1}. The latter involves the distance distribution from an arbitrary point on the ground to a fixed point on the satellite orbital sphere, as presented in Lemma \ref{L2}. Additionally, the distance distribution of the feeder link is primarily based on Lemma \ref{L1}.
\end{remark}
Thus, the CDF of the distance $R_{T_m}$ between the target IoT device $T_m$ and the nearest satellite can be obtained as: 
\begin{align}\label{L1-3} % eq.17
	& F_{R_{T_m}}(r_m) = 1 - \mathbb{P}(R_{T_m} > r_m) = 1 - \prod_{S \in \Phi_{\textrm{S}}} \mathbb{P}(R_{\textrm{S}} >r_m) \nonumber \\
	&=\! \left\{\!\!\!\! \begin{array}{cl}
		0 \!\!\! &\!\!\!  , r_m < H, \\
		1-\left(1 - \frac{ r_m^2 - H^2}{4(r_{\textrm{e}} + H)r_{\textrm{e}}}\right)^{N_{\textrm{S}}} \!\!\! &\!\!\!  , H \leq r_m \leq H+2r_{\rm{e}} , \\ 
		1 \!\!\! & \!\!\!, r_m > H + 2r_{\rm{e}} .
	\end{array}\!\! \right.\!\!
\end{align} 
By differentiating (\ref{L1-3}), the corresponding PDF is given by
\begin{align}\label{L1-4} % eq.18
	& f_{R_{T_m}}(r_m) = \nonumber \\
	& \hspace*{2mm}\left\{\!\!\! \begin{array}{cl}
		N_{\textrm{S}}\! \left(\! 1\! -\! \frac{r_m^2-H^2}{4(r_{\textrm{e}}\! +\! H)r_{\textrm{e}}}\! \right)\! ^{N_{\textrm{S}}\! -\! 1}\! \frac{r_m}{2(r_{\textrm{e}}\! +\! H)r_{\textrm{e}}},  \!\! & \!\! H\! \leq \!r_m\! \leq\! H\! +\! 2r_{\rm{e}}, \\
		0 \!\! &\!\! \textrm{otherwise} . 
	\end{array}\!\! \right.\!\!
\end{align} 

\begin{Lemma}\label{L2}
	The CDF of the distance $R_{\rm{T}}$ from any specific IoT device  $T$ to a satellite $S$ is given as 
	\begin{align}\label{L2-1} % eq.19
		F_{R_{\rm{T}}}(r_{\rm{t}}) \triangleq & \mathbb{P}\left( R_{\rm{T}} \leq r_{\rm{t}} \right) \nonumber \\
		=& \left\{ \begin{array}{cl}
			0	, & r_{\rm{t}} < H , \\ 
			\frac{r_{\rm{t}}^2 - H^2}{4(r_{\rm{e}}+H)r_{\rm{e}} }	, & H \leq r_{\rm{t}} \leq H+2r_{\rm{e}} , \\ 
			1	, & r_{\rm{t}} > H+2r_{\rm{e}} .
		\end{array} \right.
	\end{align} 
\end{Lemma}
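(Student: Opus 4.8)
The plan is to mirror the derivation of Lemma~\ref{L1} with the roles of the two concentric spheres interchanged. In Lemma~\ref{L1} the fixed point lies on the Earth's surface (radial distance $r_{\textrm{e}}$ from the Earth's center) and the random point is a satellite uniformly placed on the orbital sphere of radius $r_{\textrm{e}}+H$; here the situation is reversed, the satellite $S$ being fixed at radial distance $r_{\textrm{e}}+H$ and the IoT device $T$ being an arbitrary point of the ground BPP, hence uniformly distributed on the sphere of radius $r_{\textrm{e}}$. The crucial observation is that the CDF obtained in Lemma~\ref{L1} depends on the two radii only through the symmetric combination $\tfrac{r^2-(\rho-d)^2}{4\rho d}$, which is invariant under swapping the inner and outer radii $\rho$ and $d$; consequently the same functional form must reappear, which is precisely the claim of Lemma~\ref{L2}.

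Concretely, I would proceed as follows. First, place the Earth's center at the origin and let $\theta_2$ denote the Earth-centered angle between the fixed direction $OS$ and the random direction $OT$. For a single point of a homogeneous BPP on a sphere, the angular coordinate relative to any fixed axis has $\cos\theta_2$ uniformly distributed on $[-1,1]$, equivalently $\mathbb{P}(\theta_2\le\theta)=\tfrac{1-\cos\theta}{2}$, since this equals the ratio of the spherical-cap area to the total sphere area. Second, the law of cosines gives $R_{\textrm{T}}^2=r_{\textrm{e}}^2+(r_{\textrm{e}}+H)^2-2r_{\textrm{e}}(r_{\textrm{e}}+H)\cos\theta_2$, which is strictly increasing in $\theta_2$ on $[0,\pi]$, so that $R_{\textrm{T}}\le r_{\textrm{t}}$ is equivalent to $\theta_2\le\theta(r_{\textrm{t}})$ with $\cos\theta(r_{\textrm{t}})=\tfrac{r_{\textrm{e}}^2+(r_{\textrm{e}}+H)^2-r_{\textrm{t}}^2}{2r_{\textrm{e}}(r_{\textrm{e}}+H)}$. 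Third, substituting into $\mathbb{P}(\theta_2\le\theta(r_{\textrm{t}}))=\tfrac{1-\cos\theta(r_{\textrm{t}})}{2}$ and simplifying yields $F_{R_{\textrm{T}}}(r_{\textrm{t}})=\tfrac{r_{\textrm{t}}^2-H^2}{4(r_{\textrm{e}}+H)r_{\textrm{e}}}$, using $(r_{\textrm{e}}-(r_{\textrm{e}}+H))^2=H^2$. Finally, the support is fixed by the extreme configurations: $\theta_2=0$ (device directly beneath the satellite) gives $r_{\textrm{min}}=H$, and $\theta_2=\pi$ (antipodal device) gives $r_{\textrm{max}}=H+2r_{\textrm{e}}$, so the CDF is $0$ below $H$ and $1$ above $H+2r_{\textrm{e}}$, completing the three-branch expression.

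The step I expect to carry the real content, rather than the routine algebra, is the justification that $\cos\theta_2$ is uniform on $[-1,1]$ for an arbitrary point of the ground BPP; everything downstream is an elementary inversion of the law of cosines. I would also stress that, unlike the later distance results in this section, Lemma~\ref{L2} is deliberately an \emph{unconstrained} distribution over the entire Earth surface (its support runs all the way to the antipodal distance $H+2r_{\textrm{e}}$), so the finite-size region $\mathcal{A}_{\textrm{T}}$, the satellite coverage $\mathcal{A}_{\textrm{C}}$, and the curvature-induced visibility limits do not enter at this stage; they are incorporated subsequently through the BPP success probabilities. The only minor care needed is to confirm the monotonicity of $R_{\textrm{T}}$ in $\theta_2$, so that the event translation in the second step is an exact equivalence rather than a one-sided inequality.
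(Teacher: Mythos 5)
Your proof is correct and follows essentially the same route as the paper's: both reduce $\mathbb{P}(R_{\rm T}\leq r_{\rm t})$ to the ratio of a spherical-cap area on the Earth's surface (the locus of ground points within distance $r_{\rm t}$ of the fixed satellite) to the full sphere area $4\pi r_{\rm e}^2$. The only cosmetic difference is parametrization — you invoke the uniformity of $\cos\theta_2$ (Archimedes' hat-box property) together with the law of cosines, whereas the paper derives the cap area directly from the cap height/radius relations $\mathcal{S}=2\pi r_{\rm e}b_1$, $\mathcal{S}=\pi(a_1^2+b_1^2)$, ${x'}^2=(H+b_1)^2+a_1^2$ — and the two computations are algebraically equivalent.
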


\begin{proof}
	See Appendix~\ref{ApB}.
\end{proof}

The PDF $f_{R_{\textrm{T}}}(r_\textrm{t})$ of the distance $R_{\rm{T}}$ from IoT device $T$ to satellite $S$ can be obtained by differentiating (\ref{L2-1}) as 	 
\begin{align}\label{L2-2} % eq.20
	f_{R_{\textrm{T}}}(r_\textrm{t}) =& \left\{  \begin{array}{cl}
		\frac{r_\textrm{t} }{2(r_{\textrm{e}}+H)r_{\textrm{e}} } , & H \leq r_\textrm{t} \leq H+2r_{\textrm{e}} , \\
		0 , & \textrm{otherwise} .
	\end{array} \right.
\end{align} 

Based on (\ref{L1-3}), the serving satellite can be determined, and naturally the interfering IoT devices are located within the coverage $\mathcal{A}_{\textrm{C}}$ of the serving satellite, i.e.,  $\Phi_{\textrm{C}}\! \neq\! \emptyset$. Hence, the interfering IoT devices must satisfy the constraint that the distance from these ground devices to the service satellite, i.e., $R_{\textrm{T}}$, is less than or equal to the maximum communication distance $r_{\textrm{max}}$, i.e., $R_{\textrm{T}} \leq r_{\textrm{max}}$. Thus, the CDF of the distance $R_{T_n}$ between interfering IoT devices and the serving satellite is defined as
\begin{align}\label{L2-3} % eq.21
	&F_{R_{T_n}}(r_{n}) = \mathbb{P}\left( R_{T_n}\leq r_{n} \right)= \mathbb{P}\left(R_{\textrm{T}} \leq r_{n}|\Phi_{\textrm{C}}\neq 0 \right) \nonumber\\
	&=\! \mathbb{P}\left( R_{\textrm{T}} \leq r_{n}| R_{\textrm{T}} \leq r_{\textrm{max}}  \right)\! =\! \frac{\mathbb{P}( R_{\textrm{T}} \leq r_{n}, R_{\textrm{T}} \leq r_{\textrm{max}})}{ \mathbb{P}(R_{\textrm{T}} \leq r_{\textrm{max}})} \nonumber\\
	&=\! \frac{\mathbb{P}( R_{\textrm{T}} \leq \textrm{min}(r_{n},r_{\textrm{max}}) )}{ \mathbb{P}(R_{\textrm{T}} \leq r_{\textrm{max}})},\!
\end{align} 
where $r_{\textrm{max}}$ is given by (\ref{L0-4}). Based on the comparison relationship between $r_{n}$ and $r_{\textrm{max}}$, and by applying Eq. (\ref{L2-1}), the CDF of the distance $R_{T_n}$ between interfering IoT devices and the serving satellite can be obtained as
\begin{align}\label{L2-4} % eq.22
	F_{R_{T_n}}(r_{n})=& \left\{ \begin{array}{cl}
		0	, & r_n < H , \\ 
		\frac{ F_{R_{\textrm{T}}}(r_n)}{F_{R_{\textrm{T}}}(r_{\textrm{max}}) }	, & H \leq r_n \leq r_{\textrm{max}} , \\ 
		1	, & r_n > r_{\textrm{max}} ,
	\end{array} \right.
\end{align} 
and the corresponding PDF is given by
\begin{align}\label{L2-5} % eq.23
	f_{R_{T_n}}(r_{n}) =& \left\{ \begin{array}{cl}
		\frac{ 2r_n	}{r_{\textrm{max}}^2-H^2} ,  & H \leq r_n \leq r_{\textrm{max}} , \\
		0	, &  {\textrm{otherwise}} .
	\end{array} \right.
\end{align} 

IoT devices connect to the ES via satellite relays. Since the distance covered by a single-hop satellite link often falls short for communication needs, signals are aggregated at one satellite and then sent through multi-hop inter-satellite links to reach the ES. It is assumed that there is at least one satellite within the   distribution region of visible satellite from the view of the ES. A satellite within this distribution area is randomly selected to link to the ES, and the remaining satellites within this distribution area are classified as interfering satellites. Thus, the distance $R_{S_m}$ between the target satellite $S_m$ and the ES is given as
\begin{align}\label{L2-6} % eq.24
	F_{R_{S_m}}(r_m') =& \mathbb{P}\left( R_{S_m}\leq r_{m}' \right) = \mathbb{P}\left( R_{\textrm{S}} \leq r_{m}'| R_{\textrm{S}} \leq r_{\textrm{max}} \right) \nonumber \\
	&=  \left\{ \begin{array}{cl}
		0	, & r_m' < H , \\ 
		\frac{ F_{R_{\textrm{T}}}(r_{m}')}{F_{R_{\textrm{T}}}(r_{\textrm{max}}) }	, & H \leq r_m' \leq r_{\textrm{max}} , \\ 
		1	, & r_m' > r_{\textrm{max}} .
	\end{array} \right.
\end{align} 
The corresponding PDF is given by
\begin{align}\label{L2-7} % eq.25
	f_{R_{S_m}}(r_{m}') =& \left\{ \begin{array}{cl}
		\frac{2r_m'}{r_{\textrm{max}}^2-H^2} ,  & H \leq r_m' \leq r_{\textrm{max}} , \\
		0 , & {\textrm{otherwise}} .
	\end{array} \right.
\end{align} 

Due to the fact that the distance variables $r_m'$ and $r_n'$ for the target satellite and interfering satellites are independently and identically distributed, $f_{R_{S_m}}(r_{m}')\! =\!	f_{R_{S_n}}(r_{n}')$.

\begin{remark}\label{Rm1}
	The derived distance distribution provides valuable insights into the spatial characteristics of IoT-satellite communication, including the impact of interfering IoT devices on the serving satellite as well as the distance distribution from the satellite to the ES. By considering finite device regions and satellite coverage limits, it helps to understand the communication links both between IoT devices and their serving satellites and between the satellite and the ES. This framework also accounts for interference from nearby devices, enabling more informed decisions in satellite constellation placement and coverage  design, particularly in scenarios with spatial constraints. Additionally, the framework serves as a flexible tool for analyzing other similar communication systems with  finite-size ground area, limited satellite coverage and interference factors.
\end{remark} 

\section{Coverage Probability}\label{S4}

We conduct CP analysis for both the T-S and S-ES links. The CP refers to the likelihood that the SINR at the receiver surpasses the minimum SINR threshold necessary for successful data transmission. In other words, if the SINR of the received signal at the receiver surpasses the threshold $T$, the transmitter is inside the coverage area of the receiver.

\subsection{CP over the T-S Link} \label{S4.1}

When the serving satellite is at distance $R_{T_m}$ which is greater than $r_{\textrm{max}}$ from the target IoT device, the closest satellite to the target IoT device is outside the communication range, which means that there is no service satellite, i.e.,
\begin{align}\label{eqS4-1-1} % eq.26
	& \mathbb{P}( \mathrm{SINR}_1 \geq T_1|R_{T_m} > r_{\textrm{max}}) = 0.
\end{align}
The probability of $R_{T_m}\! >\! r_{\textrm{max}}$, denoted as $P_0$, is given by
\begin{align}\label{P0} % eq.27
	P_0 =& \mathbb{P}({R_{T_m}>r_{\textrm{max}}}) = 1 -\mathbb{P}({R_{T_m}\leq r_{\textrm{max}}}) \nonumber \\
	=& 1 - F_{R_{T_m}}(r_{\textrm{max}}) = \left(1 - \frac{ r_{\textrm{max}}^2 - H^2}{4(r_{\textrm{e}} + H) r_{\textrm{e}}}\right)^{N_{\textrm{S}}}.
\end{align}

\begin{figure*}[!bp]\setcounter{equation}{33}
	\vspace*{-4mm}
	\hrulefill 
	\begin{align}\label{cov1} % eq.34
		P_{\mathrm{cov}}^{\textrm{T-S}}\! =& \left(1-\left(1-\frac{ r_{\textrm{max}}^2-H^2}{4(r_{\textrm{e}}+H)r_{\textrm{e}}}\right)^{N_{\textrm{S}}}\right) \sum_{n=1}^{N_{T\textrm{-}S}}(-1)^{ n+1} \binom{N_{T\textrm{-}S}}{n}
		\sum_{n_I=1}^{N_{\textrm{T}}-1}\! \binom{N_{\textrm{T}}-1}{n_I} \left(\frac{r_{\textrm{max}} \cos\big(\frac{\varphi_{\textrm{s}}}{2}\big) - H}{r_{\textrm{e}} - r_{\textrm{e}} \cos\big(\frac{R_{\textrm{c}}}{r_{\textrm{e}}}\big)}\right)^{n_I}\nonumber \\
		&\times \int_H^{r_{\textrm{max}}}\!\!\! \Bigg( \int_{H}^{r_{\textrm{max}}}\!\! \Bigg(\!\!\! \left(\!1\! + \!\frac{2n T_1r_m^{\alpha_1}\overline{D}}{(N_{T\textrm{-}S}!)^{\frac{1}{N_{T\textrm{-}S}}}r_n^{\alpha_1}\big(1-\cos\big(\frac{\varphi_{\textrm{s}}}{2}\big)\big)}\! \right)^{\!\! -N_{T\textrm{-}S}}\!\!\!\!\!\!\!\! \frac{\varphi_{\textrm{t}}}{2\pi} + \! \left(\! 1\! + \!\frac{2n T_1r_m^{\alpha_1}\overline{D} g_{\textrm t}}{(N_{T\textrm{-}S}!)^{\frac{1}{N_{T\textrm{-}S}}}r_n^{\alpha_1}G_{\textrm t}\big(1 - \cos\big(\frac{\varphi_{\textrm{s}}}{2}\big)\big)}\!\right)^{\!\! -N_{T\textrm{-}S}}\!\!\!\!\!\! \left(1-\frac{\varphi_{\textrm{t}}}{2\pi}\right)\! \Bigg) \nonumber \\ 
		& \times \frac{2r_n}{r_{\textrm{max}}^2\! -\! H^2} \mathrm{d} r_n\! \Bigg)^{n_I}\!\! \left(\! \frac{H\! +\! r_{\textrm{e}}\! -\! r_{\textrm{max}} \cos\big(\frac{\varphi_{\textrm{s}}}{2}\big) - r_{\textrm{e}} \cos\big(\frac{R_{\textrm{c}}}{r_{\textrm{e}}}\big)}{r_{\textrm{e}} - r_{\textrm{e}}\cos\big(\frac{R_{\textrm{c}}}{r_{\textrm{e}}}\big)}\right)^{N_{\textrm{T}}-1-n_I} \!\!\!\!\! N_{\textrm{S}} \left(1\! -\! \frac{r_m^2-H^2}{4(r_{\textrm{e}}\! +\! H)r_{\textrm{e}} }\right)^{\!\!N_{\textrm{S}}-1} \!\!\!\!\! \frac{r_m }{2(r_{\textrm{e}}\! +\!H)r_{\textrm{e}}} \mathrm{d}r_m. 	 
	\end{align} 
	\vspace*{-1mm}
\end{figure*}

We now formulate the CP under the Nakagami fading assumption for the service link.\setcounter{equation}{27}

\begin{Lemma}\label{L3}
	The Laplace transform of random variable $I_{\rm{T}}$ is 
	\begin{align}\label{eqLapTraI} % eq.30
		\mathcal{L}_{I_{\rm{T}}}(s)\! =& \mathbb{E}_{N_I,R_{T_n},D_{T_n\textrm{-}S}}\!\! \left[\! \prod_{T_n\in\Phi_{\rm{C}} {\backslash} T_m}\!\!\!\! \left(\!1\! +\! \frac{t_0 D_{T_n\textrm{-}S} }{N_{T\textrm{-}S}R_{T_n}^{\alpha_1}}\!\right)^{\!\!\!-N_{T\textrm{-}S}} \!\right]\!\! ,\!
	\end{align}
	with $t_0 = \frac{n\eta T_1r_m^{\alpha_1}\overline{D}}{D_{T_m\textrm{-}S}}$.
\end{Lemma}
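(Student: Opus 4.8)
The plan is to work directly from the definition $\mathcal{L}_{I_{\rm{T}}}(s)=\mathbb{E}[\exp(-sI_{\rm{T}})]$ and exploit the product structure that the exponential imposes on the interference sum (\ref{eqInterf}). First I would substitute $I_{\rm{T}}=\sum_{T_n\in\Phi_{\rm{C}}\backslash T_m}p_{\textrm{t}}D_{T_n\textrm{-}S}\overline{D}|h_{T_n\textrm{-}S}|^2 l_1(R_{T_n})$ into the exponent and use $\exp\!\big(-s\sum_n x_n\big)=\prod_n\exp(-sx_n)$ to write the Laplace transform as the expectation of a product of per-interferer factors. The crucial structural observation is that, conditioned on the interferer configuration --- the number $N_I$ of active interferers, their distances $\{R_{T_n}\}$, and their random directivity gains $\{D_{T_n\textrm{-}S}\}$ from (\ref{eqDGi}) --- the small-scale fading powers $|h_{T_n\textrm{-}S}|^2$ are mutually independent and independent of the geometry and gains, so the fading expectation factorizes across the product and can be pushed inside it.

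The main computational step is the per-interferer average over the Nakagami fading. Since $|h_{T_n\textrm{-}S}|^2$ follows a normalized Gamma distribution with integer shape parameter $N_{T\textrm{-}S}$ and unit mean, its Laplace transform is $\mathbb{E}[\exp(-u|h_{T_n\textrm{-}S}|^2)]=(1+u/N_{T\textrm{-}S})^{-N_{T\textrm{-}S}}$, obtained from a single Gamma integral. Applying this with $u=s\,p_{\textrm{t}}D_{T_n\textrm{-}S}\overline{D}\,l_1(R_{T_n})$ turns each factor into $\big(1+s\,p_{\textrm{t}}D_{T_n\textrm{-}S}\overline{D}\,l_1(R_{T_n})/N_{T\textrm{-}S}\big)^{-N_{T\textrm{-}S}}$, leaving only the outer expectation over $N_I$, $\{R_{T_n}\}$ and $\{D_{T_n\textrm{-}S}\}$ intact.

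It then remains to substitute the path loss $l_1(R_{T_n})=(c/4\pi f_1)^2 R_{T_n}^{-\alpha_1}$ from (\ref{eqPLsl}) and the specific argument $s=\frac{n\eta T_1}{p_{\textrm{t}}D_{T_m\textrm{-}S}\,l_1(r_m)}$ at which the transform is evaluated in the coverage-probability analysis (this value is dictated by the Alzer-type expansion of the Gamma CCDF of the desired signal, which produces the summation index $n$ and the constant $\eta$). Using the identity $(c/4\pi f_1)^2/l_1(r_m)=r_m^{\alpha_1}$ to collect constants gives $s\,p_{\textrm{t}}\overline{D}(c/4\pi f_1)^2=t_0=\frac{n\eta T_1 r_m^{\alpha_1}\overline{D}}{D_{T_m\textrm{-}S}}$, so each factor simplifies to $\big(1+t_0 D_{T_n\textrm{-}S}/(N_{T\textrm{-}S}R_{T_n}^{\alpha_1})\big)^{-N_{T\textrm{-}S}}$, which is exactly (\ref{eqLapTraI}).

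The step I expect to be the least mechanical is the justification of the factorization rather than any integral: one must argue carefully that the fading variables are i.i.d. and independent of the BPP-induced randomness in $N_I$ and in the interferer distances, as well as of the random gains in (\ref{eqDGi}), so that the conditional fading average legitimately passes through the product term by term; and one must track the path-loss constant cleanly through the change from the abstract Laplace argument $s$ to the application-specific $t_0$. The Gamma integral itself and the exponent-to-product manipulation are routine.
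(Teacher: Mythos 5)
Your proposal is correct and follows essentially the same route as the paper's proof in Appendix~D: expand $\exp(-sI_{\rm T})$ into a per-interferer product, push the fading expectation inside by conditioning on $(N_I,\{R_{T_n}\},\{D_{T_n\textrm{-}S}\})$, apply the MGF of the normalized Gamma variable to get each factor $\bigl(1+t_0 D_{T_n\textrm{-}S}/(N_{T\textrm{-}S}R_{T_n}^{\alpha_1})\bigr)^{-N_{T\textrm{-}S}}$, and absorb the path-loss constant via the specific $s$ from Theorem~\ref{T1} into $t_0$. The only cosmetic difference is the order of the last two steps (the paper substitutes $t_0$ before invoking the Gamma MGF), which is immaterial.
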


\begin{proof}
	See Appendix~\ref{ApD}.
\end{proof}

The target IoT devices are located within the finite-size area $\mathcal{A}_{\textrm{T}}$. Once a target IoT device selects a service satellite, any other IoT devices within the satellite coverage $\mathcal{A}_{\textrm{C}}$ of the serving satellite are interfering devices. However, in reality, there exist IoT devices that are outside the finite-size area $\mathcal{A}_{\textrm{T}}$ but are also within the satellite coverage $\mathcal{A}_{\textrm{C}}$, hence becoming interfering nodes. Thus, we assume that when the satellite coverage  $\mathcal{A}_{\textrm{C}}$  partially intersects with $\mathcal{A}_{\textrm{T}}$, the distribution of IoT devices within  the intersection region still follows $\Phi_{\textrm{T}}$. Therefore, we have   
\begin{align} % eq.31
	&\mathcal{L}_{I_{\textrm{T}}}(s) \nonumber \\
	&= \mathbb{E}_{N_I,D_{T_n\textrm{-}S}} \! \!\left[ \!\prod_{T_n\in\Phi_{\textrm{C}} {\backslash} T_m} \!\!\!\int_{H}^{r_{\!\textrm{max}}}\!\!\! \left(\!1\!+\!\frac{t_0 D_{T_n\textrm{-}S} }{N_{T\textrm{-}S}r_n^{\alpha_1}}\!\right)^{\!\!\!-N_{T\textrm{-}S}}\!\!\!\!\!\! \frac{2r_n}{r_{\!\textrm{max}}^2\!-\!H^2} \mathrm{d} r_n \!\right] \nonumber
\end{align}
\begin{align}\label{L3-1} % eq.31
	&= \mathbb{E}_{N_I}\!\! \left[\! \prod_{T_n\in\Phi_{\textrm{C}} {\backslash} T_m} \!\!\!\int_{H}^{r_{\!\textrm{max}}}\!\!\!   \left( \left(\!1\!+\!\frac{t_0 G_{\rm t}G_{S} }{N_{T\textrm{-}S}r_n^{\alpha_1}}\!\right)^{\!\!\!-N_{T\textrm{-}S}} \!\!\!\frac{\varphi_{\textrm{t}}}{2\pi} \right.\right. \nonumber\\
	&\hspace*{3mm} + \left.\left. \left(\!1\!+\!\frac{t_0 g_{\rm t}G_{S} }{N_{T\textrm{-}S}r_n^{\alpha_1}}\!\right)^{\!\!\!-N_{T\textrm{-}S}}\!\!\! \Big(1-\frac{\varphi_{\textrm{t}}}{2\pi}\Big) \right) \frac{2r_n}{r_{\!\textrm{max}}^2\!-\!H^2} \mathrm{d} r_n \!\right] .
\end{align}

The point distribution of the interfering users can be described by a BPP. Therefore, the number of interfering users $N_I$ follows a binomial distribution with a certain probability of success. The probability of success can be expressed as
\begin{align}\label{L3-2} % eq.32
	P_I =& \frac{\mathcal{S}(\mathcal{A}_{\textrm{C}})}{\mathcal{S}(\mathcal{A}_{\textrm{T}})} = \frac{2\pi r_{\textrm{e}}^2(1-\cos(\theta_3))}{2\pi r_{\textrm{e}}^2(1-\cos(\theta_1))}=\frac{1-\cos(\theta_3)}{1-\cos(\theta_1)},
\end{align}
where $\cos(\theta_3)=\frac{H+r_{\textrm{e}}-r_{\textrm{max}} \cos\big(\frac{\varphi_{\textrm{s}}}{2}\big)}{r_{\textrm{e}}}$ and $r_{\textrm{max}}$ is given in (\ref{L0-4}).

Then $\mathcal{L}_{I_{\textrm{T}}}(s)$ can be derived as follows:
\begin{align} \label{L3-3}% eq.33
	&\mathcal{L}_{I_{\textrm{T}}}(s) {=} \sum_{n_I=1}^{N_{\textrm{T}}-1}\! \binom{N_{\textrm{T}}-1}{n_I} (P_I)^{n_I}\!\! \left(\int_{H}^{r_{\textrm{max}}} \!\!\left(\!\! \left(\!1\!+\!\frac{t_0 G_{\rm t}G_{S} }{N_{T\textrm{-}S}r_n^{\alpha_1}}\!\right)^{\!\!\!-N_{T\textrm{-}S}} \!\!\! \right.\right. \nonumber\\
	&{\times}\left.\left. \frac{\varphi_{\textrm{t}}}{2\pi} +\left(\!1\!+\!\frac{t_0 g_{\rm t}G_{S} }{N_{T\textrm{-}S}r_n^{\alpha_1}}\!\right)^{\!\!\!-N_{T\textrm{-}S}} \!\!\!\! \left(1-\frac{\varphi_{\textrm{t}}}{2\pi}\right)\!\! \right) 
	\frac{2r_n}{r_{\textrm{max}}^2-H^2} \mathrm{d} r_n\!\!\right)^{n_I} \nonumber\\
	&{\times}(1-P_I)^{N_{\textrm{T}}-1-n_I} .	 
\end{align}

\begin{Theorem}\label{T1}	
	The CP of the serving satellite to any IoT device is given as 
	\begin{align}\label{eqT1} % eq.28
		P_{\rm{cov}}^{\rm{T}\textrm{-}\rm{S}} =& (1-P_0)\, \mathbb{P}( \mathrm{SINR}_1\geq T_1|R_{T_m}\leq r_{\rm{max}}) ,
	\end{align}
	where $T_1$ is the SINR threshold of the T-S link and
	\begin{align}\label{eqT1.1} % eq.29
		& \mathbb{P}(\mathrm{SINR}_1\geq T_1|R_{T_m}\leq r_{\rm{max}}) \nonumber \\
		& \hspace*{2mm} = \int_H^{r_{\rm{max}}} \sum_{n=1}^{N_{T\textrm{-}S}}(-1)^{n+1} \binom{N_{T\textrm{-}S}}{n}\mathcal{L}_{I_{\textrm{T}}}(s) \nonumber \\ 
		& \hspace*{5mm} \times\! N_{\rm{S}} \left(\! 1-\frac{ r_m^2-H^2}{4(r_{\rm{e}} +H)r_{\rm{e}}}\right)^{\!N_{\rm{S}}-1} \!\! \frac{ r_m }{2(r_{\rm{e}}+H)r_{\rm{e}}}\mathrm{d}r_m ,	
	\end{align}
	with $\mathcal{L}_{I_{\textrm{T}}}(s)$ being given by (\ref{L3-3}), and $s = \frac{n \eta T_1 16 \pi^2 f_1^2  r_m^{\alpha_1}}{p_{\rm{t}} D_{T_m\textrm{-}S}\,c^2}$. The complete expression of $P_{\rm{cov}}^{\rm{T}\textrm{-}\rm{S}}$ of the T-S link is given by (\ref{cov1}). 
\end{Theorem}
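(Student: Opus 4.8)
The plan is to establish (\ref{cov1}) by first conditioning on the existence of a service satellite and then reducing the SINR coverage event to a Gamma tail probability that is averaged through the interference Laplace transform of Lemma~\ref{L3}. I would begin with the law of total probability with respect to the event $\{R_{T_m} \le r_{\textrm{max}}\}$. Because (\ref{eqS4-1-1}) shows that the coverage event is impossible when the nearest satellite lies beyond $r_{\textrm{max}}$, only the complementary event contributes, producing the prefactor $1-P_0 = F_{R_{T_m}}(r_{\textrm{max}})$ with $P_0$ as in (\ref{P0}) and leaving the conditional term $\mathbb{P}(\mathrm{SINR}_1 \ge T_1 \mid R_{T_m} \le r_{\textrm{max}})$ of (\ref{eqT1}).

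Next I would rewrite the event $\mathrm{SINR}_1 \ge T_1$ from (\ref{eqSINR}) as the fading-power condition $|h_{T_m\textrm{-}S}|^2 \ge T_1 I_{\textrm{T}}/(p_{\textrm{t}} D_{T_m\textrm{-}S} l_1(R_{T_m}))$. Since $|h_{T_m\textrm{-}S}|^2$ is a normalized Gamma variable $g$ of integer shape $N_{T\textrm{-}S}$, its complementary CDF admits the tight Alzer approximation $\mathbb{P}(g \ge x) \approx 1 - (1 - e^{-\eta x})^{N_{T\textrm{-}S}}$ with the constant $\eta = N_{T\textrm{-}S}(N_{T\textrm{-}S}!)^{-1/N_{T\textrm{-}S}}$, and a binomial expansion yields $\sum_{n=1}^{N_{T\textrm{-}S}} (-1)^{n+1}\binom{N_{T\textrm{-}S}}{n} e^{-n\eta x}$. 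Conditioned on $R_{T_m}=r_m$ the argument is proportional to $I_{\textrm{T}}$, so each exponential becomes $e^{-sI_{\textrm{T}}}$ with $s = n\eta T_1/(p_{\textrm{t}} D_{T_m\textrm{-}S} l_1(r_m)) = n\eta T_1 16\pi^2 f_1^2 r_m^{\alpha_1}/(p_{\textrm{t}} D_{T_m\textrm{-}S} c^2)$. Averaging over $I_{\textrm{T}}$ converts each term into $\mathcal{L}_{I_{\textrm{T}}}(s)$, which is precisely Lemma~\ref{L3}, and integrating over $r_m$ against $f_{R_{T_m}}$ from (\ref{L1-4}) gives (\ref{eqT1.1}).

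Then I would evaluate $\mathcal{L}_{I_{\textrm{T}}}(s)$ explicitly using the BPP structure. Conditioned on the number of interferers $N_I$, the interfering devices are i.i.d., so the transform factorizes into the $N_I$-th power of a single-interferer term obtained by averaging over the gain law (\ref{eqDGi}) and the interfering-distance PDF (\ref{L2-5}). Each of the remaining $N_{\textrm{T}}-1$ devices independently lands in the footprint $\mathcal{A}_{\textrm{C}}$ with success probability $P_I$ of (\ref{L3-2}), so $N_I$ is binomial with parameters $N_{\textrm{T}}-1$ and $P_I$; applying the binomial theorem to the power yields (\ref{L3-3}). Substituting $\mathcal{L}_{I_{\textrm{T}}}(s)$, $P_I$, and $f_{R_{T_m}}$ into (\ref{eqT1.1}) and restoring the $1-P_0$ prefactor assembles the complete expression (\ref{cov1}).

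The hard part will be the interference averaging in the third step. Unlike the PPP case, where the probability generating functional delivers a closed form at once, the BPP forces an explicit enumeration of the finite interferer population together with careful bookkeeping of $P_I$ as the area ratio $\mathcal{S}(\mathcal{A}_{\textrm{C}})/\mathcal{S}(\mathcal{A}_{\textrm{T}})$. A subtle point needing justification is the treatment of devices inside $\mathcal{A}_{\textrm{C}}$ but outside $\mathcal{A}_{\textrm{T}}$, which the model absorbs by assuming the intersection still follows $\Phi_{\textrm{T}}$; one must also keep the interfering-distance law (\ref{L2-5}) consistently conditioned on $R_{\textrm{T}} \le r_{\textrm{max}}$ so that the normalizations in (\ref{eqT1.1}) and (\ref{L3-3}) stay coherent, and verify that the summation bookkeeping (e.g. the $n_I \ge 1$ lower limit, which excludes the interference-free case) is consistent with the interference-limited approximation underlying (\ref{eqSINR}).
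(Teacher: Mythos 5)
Your proposal is correct and follows essentially the same route as the paper's own proof: the Bayes-rule decomposition with (\ref{eqS4-1-1}) yielding the $(1-P_0)$ prefactor, the reduction of the SINR event to a normalized-Gamma tail handled via Alzer's bound with $\eta = N_{T\textrm{-}S}(N_{T\textrm{-}S}!)^{-1/N_{T\textrm{-}S}}$ and the binomial expansion, conversion of each exponential term into $\mathcal{L}_{I_{\textrm{T}}}(s)$, and the BPP-based evaluation of that transform through the i.i.d.\ interferer distances, the gain law (\ref{eqDGi}), and the binomial count with success probability $P_I$. The subtleties you flag at the end (the intersection-region assumption, the conditioning of (\ref{L2-5}) on $R_{\textrm{T}} \le r_{\textrm{max}}$, and the $n_I \ge 1$ summation limit) are precisely the modeling conventions the paper adopts in Section~\ref{S4.1}, so your argument matches the published derivation in both structure and detail.
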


\begin{proof}
	See Appendix~\ref{ApC}.
\end{proof}

\subsection{CP over the S-ES Link}\label{S4.2}

\begin{Lemma}\label{L4}
	The Laplace transform of random variable $I_{\textrm{S}}$ is given by\setcounter{equation}{34}
	\begin{align} \label{eqL4} % eq.36
		\mathcal{L}_{I_{\rm{S}}}({s}') =& \sum_{n_I=1}^{N_{\rm{S}}-1} \binom{N_{\rm{S}}}{n_I}\left(\frac{1-\cos(\theta_3)}{2}\right)^{n_I} \nonumber \\
		&\times \int_{H}^{r_{\rm{max}}} \frac{(2\bar{c}q)^q(1+2\bar{c}{t_0}'  r_n'^{-\alpha_2})^{q-1}}{\big((2\bar{c}q+\Omega )(1+2\bar{c}{t_0}' r_n'^{-\alpha_2})-\Omega\big)^q} \nonumber \\
		&\times \frac{2r_n'}{r_{\rm{max}}^2\! -\! H^2}\mathrm{d}r_n' \left(\frac{1\! +\! \cos(\theta_3)}{2}\right)^{N_{\rm{S}}-n_I-1}\!\!\!,
	\end{align}
	where	$t_0'\! =\! \frac{t\,\zeta(\beta-\delta)\, T_2 p_n D_{S_n\textrm{-}ES}\,r_m'^{\alpha}}{p_{\rm{s}} D_{S_m\textrm{-}ES}}$ and $\cos(\theta_3)\! =\!\frac{H+r_{\rm{e}}-r_{\rm{max}} \cos\big(\frac{\varphi_{\rm{s}}}{2}\big)}{r_{\rm{e}}}$ .
\end{Lemma}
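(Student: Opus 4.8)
The plan is to compute $\mathcal{L}_{I_{\textrm{S}}}(s') = \mathbb{E}\!\left[\exp(-s' I_{\textrm{S}})\right]$ directly from the definition of $I_{\textrm{S}}$ in (\ref{eqInterf1}), exploiting the mutual independence of the interfering satellites granted by the BPP model. Since the fading coefficients $|h_{S_n\textrm{-}ES}|^2$ and the satellite-to-ES distances $R_{S_n}$ are independent across $S_n$, conditioning on the number $N_I$ of interfering satellites lets the expectation factorize into a product of identical per-satellite terms:
\begin{align}
\mathcal{L}_{I_{\textrm{S}}}(s') = \mathbb{E}_{N_I}\!\left[\prod_{S_n\in\Phi_{\textrm{L}}\backslash\{S_m\}} \mathbb{E}_{h,R}\!\left[\exp\!\left(-s' p_n D_{S_n\textrm{-}ES}|h_{S_n\textrm{-}ES}|^2 l_2(R_{S_n})\right)\right]\right]. \nonumber
\end{align}

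First I would evaluate the inner fading expectation. Conditioning on $R_{S_n}=r_n'$ and writing $u = s' p_n D_{S_n\textrm{-}ES} l_2(r_n')$, the per-satellite term is the moment generating function of the SR variable $|h_{S_n\textrm{-}ES}|^2$ whose PDF is (\ref{eqSSF}). Using the confluent-hypergeometric Laplace identity $\int_0^\infty e^{-px}\,{_1F_1}(q;1;\delta x)\,\mathrm{d}x = p^{q-1}(p-\delta)^{-q}$ (equivalently ${_2F_1}(q,1;1;z)=(1-z)^{-q}$) with $p = u+\beta$, I obtain
\begin{align}
\mathbb{E}_h\!\left[e^{-u|h_{S_n\textrm{-}ES}|^2}\right] = \kappa\,\frac{(u+\beta)^{q-1}}{(u+\beta-\delta)^q}. \nonumber
\end{align}
Substituting $\kappa$, $\beta$, $\delta$ and clearing denominators --- here $u+\beta = (1+2\bar{c}u)/(2\bar{c})$ and $\beta-\delta = q/(2\bar{c}q+\Omega)$ --- collapses this to the compact form $\frac{(2\bar{c}q)^q(1+2\bar{c}u)^{q-1}}{((2\bar{c}q+\Omega)(1+2\bar{c}u)-\Omega)^q}$. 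Recognizing that inside the CP computation $s'$ is chosen so that $u = t_0' r_n'^{-\alpha_2}$ (the constant $(c/4\pi f_2)^2$ in $l_2$ cancels between the target and interfering links), this is exactly the bracketed ratio in (\ref{eqL4}).

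Next I would average over $R_{S_n}$. Because $R_{S_m}$ and $R_{S_n}$ are i.i.d., the relevant density is that of (\ref{L2-7}), so the distance expectation becomes $\int_H^{r_{\textrm{max}}}(\cdot)\frac{2r_n'}{r_{\textrm{max}}^2-H^2}\,\mathrm{d}r_n'$, producing the integral factor. Finally I would carry out the average over $N_I$: each remaining satellite independently lies in the visible-satellite cap $\mathcal{A}_{\textrm{S}}$ with success probability equal to the ratio of the cap area to the whole orbital-sphere area, $P_S = \frac{2\pi(r_{\textrm{e}}+H)^2(1-\cos\theta_3)}{4\pi(r_{\textrm{e}}+H)^2} = \frac{1-\cos\theta_3}{2}$, so $N_I$ is binomially distributed; applying the binomial theorem to $\mathbb{E}_{N_I}[(\cdot)^{N_I}]$ yields the outer sum with weights $\big(\frac{1-\cos\theta_3}{2}\big)^{n_I}\big(\frac{1+\cos\theta_3}{2}\big)^{N_{\textrm{S}}-n_I-1}$, completing (\ref{eqL4}).

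The main obstacle I anticipate is the fading expectation step: correctly invoking the confluent-hypergeometric Laplace identity (with the convergence condition $u+\beta>\delta$, which holds since $\beta-\delta = q/(2\bar{c}q+\Omega)>0$) and then pushing through the algebraic simplification that converts $\kappa(u+\beta)^{q-1}/(u+\beta-\delta)^q$ into the stated rational form. The remaining ingredients --- independence/factorization, the area-ratio success probability, and the distance averaging via (\ref{L2-7}) --- are routine.
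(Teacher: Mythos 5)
Your proposal is correct and follows essentially the same route as the paper's proof in Appendix F: factorize the Laplace transform over the interfering satellites by conditioning on their number, evaluate the per-satellite term as the MGF of the shadowed-Rician power, average over the i.i.d. distance density $\frac{2r_n'}{r_{\textrm{max}}^2-H^2}$, and take the binomial expectation over $N_I$ with success probability $\frac{1-\cos(\theta_3)}{2}$ given by the cap-to-sphere area ratio. The only (immaterial) difference is that you derive the SR MGF from the PDF (\ref{eqSSF}) via the confluent-hypergeometric Laplace identity, whereas the paper simply cites that MGF in closed form from its reference \cite{2003A}; your algebraic reduction to $\frac{(2\bar{c}q)^q(1+2\bar{c}u)^{q-1}}{\left((2\bar{c}q+\Omega)(1+2\bar{c}u)-\Omega\right)^q}$ and your observation that the path-loss constant cancels so that $u=t_0'r_n'^{-\alpha_2}$ are both accurate.
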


\begin{proof}
	See Appendix~\ref{ApF}. 
\end{proof}

\begin{Theorem}\label{T2}
	The CP for an arbitrarily located satellite capable of communicating with the ES under the SR fading channel is given by
	\begin{align}	\label{eqT2} % eq.35
		P_{\rm{cov}}^{\rm{S}\textrm{-}\rm{ES}} &{\triangleq} \mathbb{P}( \mathrm{SINR}_2\geq T_2) \nonumber \\
		&\hspace*{-1mm}= 1 -\! \sum_{k=0}^{\infty } \frac{\Psi\left(k\right)}{(\beta-\delta) ^{k+1}} \Gamma (k+1)\sum_{t=0}^{k+1}\binom{k+1}{t}   \nonumber\\
		&\hspace*{2mm}	\times (-1)^t \int_H^{r_{\rm{max}}} \!\! 	\mathcal{L}_{I_{\rm{S}}}({s}') \frac{2r_m'}{r_{\rm{max}}^2\! -\!H^2} \mathrm{d}r_m',	
	\end{align}
	where $T_2$ is the SINR threshold of the S-ES link, $\Psi (k)\! =\! \frac{\left(-1\right)^{k}\kappa\delta^{k}}{\left(k!\right)^{2}} {\rm Ps}(1\! -\! q)_{k}$, with $\kappa$ from (\ref{eqSSF}),  ${s}'\! =\! \frac{t\,\zeta(\beta-\delta)\, T_2}{p_{\rm{s}} D_{\rm{S}_m\textrm{-}\rm{ES}} \,l_2(r_m')}$, $\zeta\! =\! (\Gamma(k+2))^{-\frac{1}{k+1}}$, 	$\mathcal{L}_{I_{\rm{S}}}({s}')$ is given by (\ref{eqL4}), $I_{\textrm{S}}$ is the interference power of the interfering satellites visible to the ES,  $r_{\rm{max}}$ is given in (\ref{L0-4}), and the complete exression of $P_{\rm{cov}}^{\rm{S}\textrm{-}\rm{ES}}$ for the S-ES link is given by (\ref{eqCPs-gw}).	
\end{Theorem}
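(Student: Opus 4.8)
The plan is to express $P_{\rm{cov}}^{\rm{S}\textrm{-}\rm{ES}}$ as one minus the averaged CDF of the shadowed-Rician fading gain, conditioned on the interference $I_{\textrm{S}}$ and the serving distance $R_{S_m}$, and then to integrate out these two random quantities. From (\ref{eqESsinr}), the event $\{\mathrm{SINR}_2 \geq T_2\}$ is equivalent to $\{|h_{S_m\textrm{-}ES}|^2 \geq w\}$ with $w \triangleq \frac{T_2 I_{\textrm{S}}}{p_{\textrm{s}} D_{S_m\textrm{-}ES}\, l_2(R_{S_m})}$, so that $P_{\rm{cov}}^{\rm{S}\textrm{-}\rm{ES}} = 1 - \mathbb{E}_{R_{S_m},I_{\textrm{S}}}\big[F_{|h_{S_m\textrm{-}ES}|^2}(w)\big]$. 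The task thus reduces to evaluating the CDF of the SR gain and averaging it over both sources of randomness.

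First I would linearize the SR density (\ref{eqSSF}). Applying Kummer's transformation ${_1F_1}(q;1;\delta x) = e^{\delta x}\,{_1F_1}(1-q;1;-\delta x)$ and expanding the confluent hypergeometric function as a power series turns the density into $f_{|h|^2}(x) = \sum_{k=0}^{\infty}\Psi(k)\, x^k e^{-(\beta-\delta)x}$, where the coefficient $\Psi(k) = \frac{(-1)^k \kappa \delta^k}{(k!)^2}{\rm Ps}(1-q)_k$ emerges exactly as stated (using $(1)_k = k!$). Integrating termwise over $[0,w]$ and invoking the definition of the lower incomplete Gamma function yields $F_{|h|^2}(w) = \sum_{k=0}^{\infty}\frac{\Psi(k)}{(\beta-\delta)^{k+1}}\gamma\!\big(k+1,(\beta-\delta)w\big)$.

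The central difficulty is that the argument $w$ --- and hence $I_{\textrm{S}}$ --- sits inside an incomplete Gamma function, whose expectation over the interference does not collapse into a Laplace transform. To circumvent this, I would invoke Alzer's approximation of the regularized lower incomplete Gamma function, $\frac{\gamma(k+1,z)}{\Gamma(k+1)} \approx \big(1-e^{-\zeta z}\big)^{k+1}$ with $\zeta = (\Gamma(k+2))^{-\frac{1}{k+1}}$, which converts each term into a finite sum of exponentials through the binomial theorem, $\big(1-e^{-\zeta(\beta-\delta)w}\big)^{k+1} = \sum_{t=0}^{k+1}\binom{k+1}{t}(-1)^t e^{-t\zeta(\beta-\delta)w}$. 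Since the exponent $t\zeta(\beta-\delta)w$ is linear in $I_{\textrm{S}}$, setting $s' = \frac{t\zeta(\beta-\delta)T_2}{p_{\textrm{s}} D_{S_m\textrm{-}ES}\,l_2(r_m')}$ identifies $\mathbb{E}_{I_{\textrm{S}}}\big[e^{-t\zeta(\beta-\delta)w}\big] = \mathcal{L}_{I_{\textrm{S}}}(s')$, precisely the Laplace transform furnished by Lemma~\ref{L4} in (\ref{eqL4}).

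Finally, the outstanding average over $R_{S_m}$ is performed against its PDF (\ref{L2-7}), namely $f_{R_{S_m}}(r_m') = \frac{2r_m'}{r_{\rm{max}}^2 - H^2}$ on $[H, r_{\rm{max}}]$, producing the distance integral; collecting all terms reproduces (\ref{eqT2}). I expect the Alzer step to be the main obstacle: without it the interference average cannot be reduced to $\mathcal{L}_{I_{\textrm{S}}}$, and its use is what makes the whole derivation tractable. Alongside it, I would need to justify the interchange of the infinite sum over $k$, the finite sum over $t$, the distance integral, and the interference expectation, which requires checking the convergence of the series so that termwise integration is legitimate.
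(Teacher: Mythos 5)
Your proposal is correct and follows essentially the same route as the paper's own proof: Kummer's transformation to expand the SR density into $\sum_k \Psi(k)x^k e^{-(\beta-\delta)x}$, termwise integration to the incomplete-Gamma CDF, Alzer's bound $\gamma(k+1,z) \lesssim \Gamma(k+1)(1-e^{-\zeta z})^{k+1}$ plus the binomial theorem to reduce the interference average to $\mathcal{L}_{I_{\rm S}}(s')$, and a final average over $R_{S_m}$ with the PDF in (\ref{L2-7}). Your closing remark about justifying the interchange of the infinite sum, integrals, and expectation is a point the paper itself passes over silently.
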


\begin{proof}
	See Appendix~\ref{ApE}.
\end{proof}

\begin{figure*}[!tp]\setcounter{equation}{36}
	\vspace*{-1mm}
	\begin{align}\label{eqCPs-gw} % eq 37
		&	P_{\mathrm{cov}}^{\textrm{S-ES}} =1 -\! \sum_{k=0}^{\infty }\,\, \frac{\left(-1\right)^{k}\kappa\delta^{k}(1-q)_{k}}{\left(k!\right)^{2}(\beta-\delta) ^{k+1}}\,\, \Gamma (k+1)\,\,\sum_{t=0}^{k+1}\,\,\binom{k+1}{t} (-1)^t \sum_{n_I=1}^{N_{\textrm{S}}-1} \binom{N_{\textrm{S}}}{n_I}\left(\frac{r_{\textrm{max}} \cos(\frac{\varphi_{\textrm{s}}}{2})-H}{2r_{\textrm{e}}}\right)^{n_I} \nonumber \\
		&\hspace*{3mm}\times\! \left(\! \frac{H\! +\! 2r_{\textrm{e}}\! -\! r_{\textrm{max}} \cos(\frac{\varphi_{\textrm{s}}}{2})}{2r_{\textrm{e}}}\! \right)^{\! N_{\textrm{S}}-n_I-1}\!\!\! \int_{H}^{r_{\textrm{max}}}\!\!\! \int_{H}^{r_{\textrm{max}}}\!\!\!\! \frac{(2\bar{c}q)^q\left(1\! +\! \frac{2\bar{c} t\,\zeta(\beta-\delta) T_2 p_n r_m'^{\alpha}}{p_{\textrm{s}} r_n'^{\alpha_2}} \right)^{q-1}}{\left((2\bar{c}q\! +\! \Omega )\left(1\! +\! \frac{2\bar{c} t \zeta(\beta-\delta)\, T_2 p_n r_m'^{\alpha}}{p_{\textrm{s}} r_n'^{\alpha_2}} \right)^{q-1}\! -\! \Omega\right)^q} \frac{4r_m'r_n'}{(r_{\textrm{max}}^2-H^2)^2} \mathrm{d}r_n'\mathrm{d}r_m'.
	\end{align} 		
	\hrulefill
	\vspace*{-4mm}
\end{figure*}

\subsection{E2E Coverage Probability}\label{S2.4}

In our model, the communication link is established from the ground IoT devices to the satellite, which then transparently forwards the signal to the ground station. Consequently, the E2E coverage probability is the product of the coverage probabilities for both the uplink from IoT devices to the satellite and the downlink from the satellite to the ES. Therefore, the E2E coverage probability can be expressed as:\setcounter{equation}{37}
\begin{align}\label{eqe2e} % eq.3l8
	P_{\textrm{cov}}^{\,{\textrm{E2E}}} =& \,P_{\textrm{cov}}^{\textrm{T-S}}\, P_{\textrm{cov}}^{\textrm{S-ES}} .
\end{align}

\begin{remark}\label{Rm2}
	The derived CP expression offers valuable insights for the overall system design. By modeling IoT devices within finite regions and considering satellite coverage limitations, we can more accurately assess how system parameters, such as the number of devices and  satellite coverage, affect the system's performance. This helps identify critical factors that influence communication reliability, guiding satellite constellation deployment and device distribution optimization. These insights enable a more efficient design of IoT-over-satellite systems, ensuring robust connectivity even in constrained environments. Additionally, this approach can be applied to other communication systems with similar spatial limitations, providing a flexible tool for network design.
\end{remark}

\section{Average Ergodic Rate}\label{S5}	

We now focus on the average achievable date rate. The AER, measured in bits/s/Hz, also known as the Shannon throughput, represents the mean data rate achievable over the system. It corresponds to the ergodic capacity for a fading communication link, normalized to a unit bandwidth. The AER is defined as follows\setcounter{equation}{38}
\begin{eqnarray}\label{eqAER}	% eq.39
	\bar{C} \triangleq \mathbb{E}\left[ \log_2\left(1+\text{SINR}\right) \right] .
\end{eqnarray}

\subsection{AER over the T-S Link}\label{S5.1}

\begin{Theorem}\label{T3}
	The average rate of an arbitrary IoT device in the service link of a Nakagami fading channel is given by
	\begin{align}\label{eqT3} % eq.40
		\bar{C}^{\rm{T}\textrm{-}\rm{S}} =& (1-P_0) \, \mathbb{E}[ \log_2(1+\mathrm{SINR}_1)|R_{T_m}\leq r_{\rm{max}}] , 
	\end{align}
	in which $P_0$ is given in (\ref{P0}) and
	\begin{align}\label{eqT1.2} % eq.41
		& \mathbb{E}[ \log_2(1+\mathrm{SINR}_1)|R_{T_m}\leq r_{\rm{max}}] \nonumber \\
		& \hspace*{2mm}= \int_{H}^{r_{\rm{max}}}\!\!\! \int_{t>0} \sum_{n=1}^{N_{T\textrm{-}S}}(-1)^{n+1} \binom{N_{T\textrm{-}S}}{n}	\mathcal{L}_{I_{\rm{T}}}(s_1) \nonumber \\
		&\hspace*{5mm}\times\! \left(\! 1\! -\! \frac{r_m^2-H^2}{4(r_{\rm{e}}\! +\! H)r_{\rm{e}}}\right)^{N_{\rm{S}}-1}\!\!\! \frac{N_{\rm{S}} r_m }{2(r_{\rm{e}}\! +\! H)r_{\rm{e}} } \mathrm{d}t \,\mathrm{d}r_{m}, 
	\end{align}
	where $s_1\! =\! \frac{n \eta (2^t\! -\! 1) 16 \pi^2 f_1^2  r_m^{\alpha_1}}{p_{\rm{t}} D_{T_m\textrm{-}S}\, c^2} $, and $\mathcal{L}_{I_{\rm{T}}}(s_1)$ is obtained by  replacing $s$ in $\mathcal{L}_{I_{\rm{T}}}(s)$ with $s_1$.
\end{Theorem}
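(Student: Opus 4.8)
The plan is to reduce Theorem~\ref{T3} to the coverage-probability result already established in Theorem~\ref{T1}, exploiting the standard tail-integral representation of a nonnegative random variable. Starting from the definition in (\ref{eqAER}), the only genuinely new element relative to Theorem~\ref{T1} is the conversion of a fixed SINR threshold into a sliding one.

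First, I would condition on whether a serving satellite exists, i.e. on $\{R_{T_m}\leq r_{\rm{max}}\}$ versus its complement. When $R_{T_m}>r_{\rm{max}}$ the nearest satellite lies outside the communication range, so $\mathrm{SINR}_1=0$ and hence $\log_2(1+\mathrm{SINR}_1)=0$; this event has probability $P_0$ as given in (\ref{P0}). The law of total expectation then yields
\begin{align}
\bar{C}^{\rm{T}\textrm{-}\rm{S}} = (1-P_0)\,\mathbb{E}\big[\log_2(1+\mathrm{SINR}_1)\,\big|\,R_{T_m}\leq r_{\rm{max}}\big],\nonumber
\end{align}
which is exactly (\ref{eqT3}) and isolates the remaining work in the conditional expectation.

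Next, because $\log_2(1+\mathrm{SINR}_1)\geq 0$, I would apply the tail formula $\mathbb{E}[X]=\int_0^\infty\mathbb{P}(X>t)\,\mathrm{d}t$ together with the equivalence $\log_2(1+\mathrm{SINR}_1)>t \Leftrightarrow \mathrm{SINR}_1>2^t-1$, giving
\begin{align}
\mathbb{E}\big[\log_2(1+\mathrm{SINR}_1)\,\big|\,\cdot\big] &= \int_0^\infty \mathbb{P}\big(\log_2(1+\mathrm{SINR}_1)>t\,\big|\,\cdot\big)\,\mathrm{d}t \nonumber \\
&= \int_0^\infty \mathbb{P}\big(\mathrm{SINR}_1>2^t-1\,\big|\,\cdot\big)\,\mathrm{d}t.\nonumber
\end{align}
The integrand is structurally identical to the conditional coverage probability in (\ref{eqT1.1}), with the constant threshold $T_1$ replaced by the $t$-dependent threshold $2^t-1$. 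I would therefore reuse the entire chain behind Theorem~\ref{T1} verbatim---the Nakagami/Gamma tail expansion, the binomial sum over $n$, and the interference Laplace transform $\mathcal{L}_{I_{\rm T}}$ of Lemma~\ref{L3}---observing that this substitution propagates through the definition of $s$ to yield $s_1=\frac{n\eta(2^t-1)16\pi^2 f_1^2 r_m^{\alpha_1}}{p_{\rm{t}}D_{T_m\textrm{-}S}\,c^2}$, and that $\mathcal{L}_{I_{\rm T}}(s_1)$ is obtained from $\mathcal{L}_{I_{\rm T}}(s)$ by the same replacement.

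Finally, inserting the Theorem~\ref{T1} expression into the tail integral produces a double integral over $r_m$ (carrying the serving-distance density $f_{R_{T_m}}$ of (\ref{L1-4})) and over $t$. Since the integrand is nonnegative, Tonelli's theorem permits me to interchange the order of integration, placing the $r_m$-integral outside and the $t$-integral inside, which delivers (\ref{eqT1.2}). The main obstacle is bookkeeping rather than a new estimate: one must track every $T_1$-dependence through the nested Theorem~\ref{T1} expression (in particular inside $\mathcal{L}_{I_{\rm T}}$) and map it consistently to $2^t-1$, and verify the integrability needed to justify the Tonelli interchange; no analysis beyond Theorem~\ref{T1} is required.
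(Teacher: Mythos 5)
Your proposal is correct and follows essentially the same route as the paper's own proof in Appendix~G: the $(1-P_0)$ decomposition via the law of total expectation, the tail-integral representation $\mathbb{E}[X]=\int_{t>0}\mathbb{P}(X>t)\,\mathrm{d}t$, the threshold mapping $T_1 \mapsto 2^t-1$ propagated into $s_1$ and $\mathcal{L}_{I_{\rm T}}(s_1)$, and the Alzer-bound/binomial-expansion/Laplace-transform chain are exactly the steps the paper executes. The only cosmetic difference is that you invoke Theorem~\ref{T1} as a black box and then justify the integration order by Tonelli, whereas the paper conditions on $R_{T_m}=r_m$ first and re-derives those steps inline with the sliding threshold.
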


\begin{proof}
	See Appendix~\ref{ApG}.
\end{proof}

\subsection{AER over the S-ES Link} \label{S5.2}

\begin{Theorem}\label{T4}
	The AER of an arbitrary satellite node in the feeder link of an SR fading channel is given by
	\begin{align}\label{eqT3-m} % eq.42
		& \bar{C}^{\rm{S}\textrm{-}\rm{ES}}\!\! =\!\! \int_{t>0}\!\! \left(\!\!1\! -\!\! \int_H^{r_{\rm{max}}}\!\!\!\! \!\mathbb{E}\!\left[ \sum_{k=0}^{\infty} \frac{\Psi\left(k\right)}{(\beta\! -\! \delta) ^{\,k+1}} \Gamma (k\! +\! 1)\frac{2r_m'}{r_{\rm{max}}^2\! -\! H^2} \right.\right. \nonumber \\ 	 
		&\times\!\! \left. \left. \sum_{u=0}^{k+1}\!\! \binom{k+1}{u} (-1)^u \mathbb{E}_{I_{\rm{S}}}\!\! \left[\! \exp(-{s}_1'(I_{\rm{S}}\! +\!\sigma^2)) \right]\! \right]\!	\mathrm{d}r_m'\!\! \right)\! \mathrm{d}t ,\!
	\end{align}
	where 
	\begin{align}\label{eqAdd} % eq.43
		\mathbb{E}_{I_{\rm{S}}}\! \! \left[ \exp\!\left(-s_1' {(I_{\rm{S}}\!+\!\sigma^2\!)}\!\right)\!\right]\!\! &=\!\mathbb{E}_{I_{\rm{S}}}\! \! \left[\exp\left(-s_1'\! {I_{\rm{S}}}\right)\right]\!\exp\left(-\!s_1'\! {\sigma^2}\right)\nonumber\\
		&=\mathcal{L}_{I_{\rm{S}}}({s_1}')\!\exp\left(-\!s_1'\! {\sigma^2}\right),
	\end{align} 
	with ${s}_1'\! =\!\frac{u\,\zeta(\beta-\delta)\, (2^t\! -\! 1)}{p_{\rm{s}} D_{S_m\textrm{-}ES} \,l_2(r_m')}$, and $\mathcal{L}_{I_{\rm{S}}}({s_1}')$ is obtained by replacing $s'$ in $\mathcal{L}_{I_{\rm{S}}}({s}')$ with $s_1'$.
\end{Theorem}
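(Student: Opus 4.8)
The plan is to reduce the average ergodic rate of the S-ES link to the coverage probability already established in Theorem~\ref{T2}, by means of the standard tail-integral (layer-cake) representation of the logarithm. First I would observe that $\mathrm{SINR}_2 \geq 0$, so $\log_2(1+\mathrm{SINR}_2)$ is a nonnegative random variable, and its mean in (\ref{eqAER}) admits the representation
\begin{align}
\bar{C}^{\mathrm{S}\textrm{-}\mathrm{ES}}
&= \int_{t>0} \mathbb{P}\big( \log_2(1+\mathrm{SINR}_2) > t \big)\, \mathrm{d}t \nonumber\\
&= \int_{t>0} \mathbb{P}\big( \mathrm{SINR}_2 > 2^{t}-1 \big)\, \mathrm{d}t. \nonumber
\end{align}
The inner probability is exactly the S-ES coverage probability of Theorem~\ref{T2} evaluated at the SINR threshold $T_2 = 2^t-1$. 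Note that, unlike the T-S case, no $(1-P_0)$ prefactor arises here because the analysis is conditioned on at least one satellite lying in the distribution region of visible satellites.

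Next I would substitute the closed form (\ref{eqT2}) for $\mathbb{P}(\mathrm{SINR}_2 \geq T_2)$ with $T_2$ replaced by $2^{t}-1$. Since the outer integration variable is now named $t$, the finite summation index $t$ in (\ref{eqT2}) must be renamed to $u$ to avoid a clash; correspondingly the Laplace-transform argument $s' \propto T_2$ becomes $s_1' \propto 2^{t}-1$, which is precisely the $s_1'$ given in the statement. The Laplace transform $\mathcal{L}_{I_{\mathrm{S}}}(\cdot)$ is reused verbatim from Lemma~\ref{L4}/(\ref{eqL4}), now evaluated at $s_1'$, and the SR-fading series coefficients $\Psi(k)$ are inherited unchanged. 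The only bookkeeping difference from the coverage-probability derivation is that the AWGN term is retained rather than dropped, so the per-satellite factor becomes $\mathbb{E}_{I_{\mathrm{S}}}[\exp(-s_1'(I_{\mathrm{S}}+\sigma^2))] = \mathcal{L}_{I_{\mathrm{S}}}(s_1')\exp(-s_1'\sigma^2)$, which is (\ref{eqAdd}). Collecting these pieces and placing the $r_m'$-integral and the $k$-series inside the $t$-integral yields (\ref{eqT3-m}).

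The main obstacle is justifying the interchange of the outer $t$-integral with the infinite series over $k$ (arising from the confluent-hypergeometric expansion encoded in $\Psi(k)$), with the $r_m'$-integral, and with the expectation over $I_{\mathrm{S}}$. I would invoke Tonelli's theorem after verifying that the relevant integrand is nonnegative (or a dominated-convergence argument on the partial sums), together with the fact that $\mathbb{P}(\mathrm{SINR}_2 > 2^{t}-1)$ decays fast enough in $t$ for the outer integral to converge; this decay follows from the path loss and the exponential tail of the SR distribution. A secondary point is confirming uniform convergence of the $\Psi(k)$ series on the relevant range so that term-by-term integration over $t$ is legitimate, which inherits directly from the convergence already established in the proof of Theorem~\ref{T2}.
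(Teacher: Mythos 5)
Your proposal is correct and follows essentially the same route as the paper's own proof: the layer-cake identity $\mathbb{E}[X]=\int_{t>0}\mathbb{P}(X>t)\,\mathrm{d}t$ applied to $\log_2(1+\mathrm{SINR}_2)$, followed by re-running the Theorem~\ref{T2} machinery (the SR-fading CDF series with coefficients $\Psi(k)$, the bound $\gamma(k+1,x)<\Gamma(k+1)\left(1-\exp(-\zeta x)\right)^{k+1}$, and the binomial expansion) at threshold $2^t-1$ with the inner summation index renamed to $u$. Your key point---that the AWGN term must be retained here, so the expectation factorizes as $\mathcal{L}_{I_{\rm S}}(s_1')\exp(-s_1'\sigma^2)$ rather than reducing to the noiseless Laplace transform of Theorem~\ref{T2}---is exactly the paper's treatment (cf.\ Remark~\ref{Rm3}), and your discussion of justifying the interchanges of integral, series, and expectation is additional rigor that the paper itself does not spell out.
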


\begin{proof}
	See Appendix~\ref{ApH}.
\end{proof}	

\begin{remark}\label{Rm3}
	It is worth noting that when calculating the AER for the S-ES link, we must consider noise. This is because satellites are distributed according to a BPP model on a spherical surface at an altitude $H$ from the ground. In the field of view of the ES, there may be no interfering satellites present. In such cases, the presence of noise $\sigma^2$ is essential to give the AER calculation meaningful significance.
\end{remark}

\section{Numerical Results}\label{S6}

We validate the derived theoretical expressions using Monte Carlo simulations with 50,000 iterations. The results from the analytical derivations in Sections~\ref{S4} and \ref{S5} are labeled as `Analysis', while the Monte Carlo results are labeled as `Simulation'. The default system parameters used in the simulations are listed in Table~\ref{Table2}, unless otherwise specified. In all the experiments, the Monte Carlo simulation results closely match to the corresponding analytical results, which demonstrates the accuracy of our theoretical analysis.

\begin{table}[h]\scriptsize
	\vspace*{-2mm}
	\caption{Default Simulation System Parameters.} 
	\label{Table2} % Tab.III
	\vspace*{-6mm}
	\begin{center}
		\begin{tabular} {c|c|c}
			\Xhline{1.2pt}
			\toprule
			\textbf{Notation} & \textbf{Parameter} & \textbf{Default Value} \\ \midrule
			$H$                        & Height of satellites                      & 400\,km \\
			$R_{\rm{c}}$               & Radius of $\mathcal{A}_{\textrm{T}}$      & 200\,km \\
			$N_{\textrm{T}}$           & Number of IoT devices                     & 5000 \\
			$N_{\textrm{S}}$           & Number of satellites                      & 3000 \\
			$\varphi_{\textrm{s}}$     & Antenna beamwidth of satellites            & $25^{\circ}$ \\
			$N_{T\textrm{-}S}$          & Nakagami fading parameter                 & 2 \\
			$\mathcal{SR}(\bar{c},q,\Omega)$ & SR fading parameter                       & $\mathcal{SR}(0.158,1,0.1)$ \\
			$\sigma^2$                   & AWGN's power spectral density             & -98\,dBm  \\
			$\alpha_1$, $\alpha_2$     \!\!\!\!&\!\!\!\! Path-loss exponents of T-S and S-ES links \!\!\!\!&\!\!\!\! 2, 2 \\
			$T_{1}$, $T_{2}$           & SINR threshold of T-S and S-ES links      & variable \\ \bottomrule
		\end{tabular}	
	\end{center}
	\vspace*{-1mm}
\end{table}	

\begin{figure}[!t]\setcounter{figure}{2}
	\vspace*{-2mm}
	\begin{center}
		\includegraphics[width=1\columnwidth]{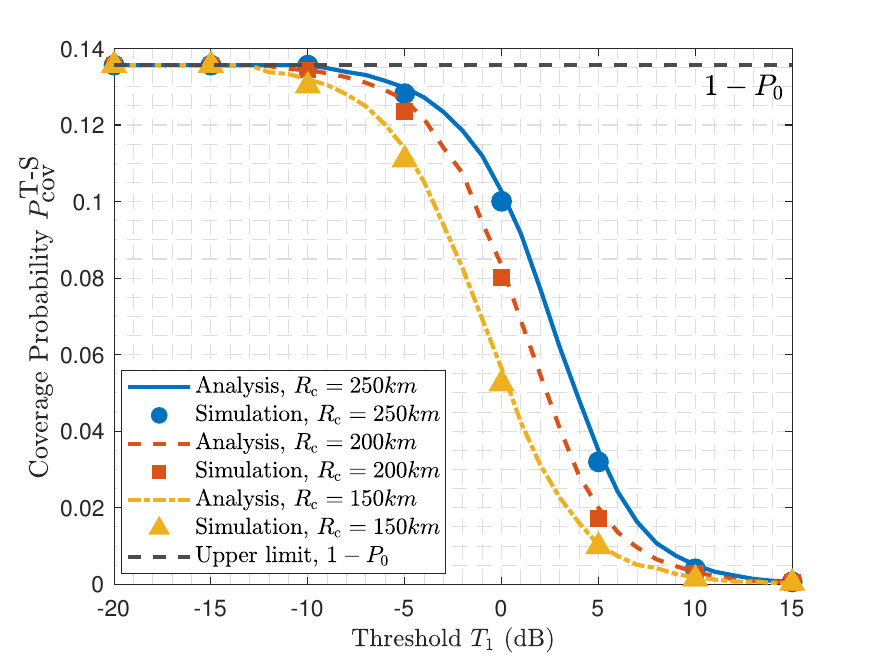}
	\end{center}
	\vspace*{-6mm}
	\caption{CP as function of threshold $T_1$, given different finite-size area radii $R_{\rm{c}}$.}
	\label{fig:3} % Fig.3
%	\vspace*{-2mm}
\end{figure}

\subsection{Performance of the T-S Link} \label{S6.1}

Fig.~\ref{fig:3} illustrates the CP as the function of the SINR threshold $T_{1}$\,{\footnote{The choice of SINR threshold is influenced by factors such as the link budget, signal propagation characteristics and the specific environment of the IoT-satellite communication system. In practice, this threshold is chosen based on system parameters and specific deployment scenarios. In our analysis, we consider the SINR threshold ranging from -20\,dB to\,15 dB to explore the system's behavior across different link budget conditions.}} given three different values of the radius $R_{\rm{c}}$ of the finite-size area $\mathcal{A}_{\textrm{T}}$. As expected, increasing the SINR threshold $T_{1}$ leads to a decrease in the CP. This decrease is due to the inverse relationship between $T_{1}$ and the probability of achieving an SINR exceeding the specified threshold value. It can also be seen that an increase in the coverage radius of $\mathcal{A}_{\textrm{T}}$ leads to a noticeable enhancement in the CP. This is attributed to the fact that when the total number of IoT devices remains constant, a larger finite-size area results in a reduced density of IoT devices. Consequently, there are fewer interfering IoT transmitters simultaneously attempting to access the serving satellite, leading to a decrease in MUI and thus an increase in the CP. Note that the upper limit of the CP is not 1 but $1\! -\! P_0$. This is because the satellite's coverage is limited, and when $R_{T_m}\! >\! r_{\textrm{max}}$, the nearest satellite cannot communicate with the target IoT device,  resulting in the zero CP. Moreover, as shown in (\ref{P0}), $P_0$ is independent of $R_{\rm{c}}$, and the upper limits of CP corresponding to different $R_{\rm{c}}$ values are identical.
\begin{figure}[!t]
	%	\vspace*{-2mm}
	\begin{center}
		\includegraphics[width=1\columnwidth]{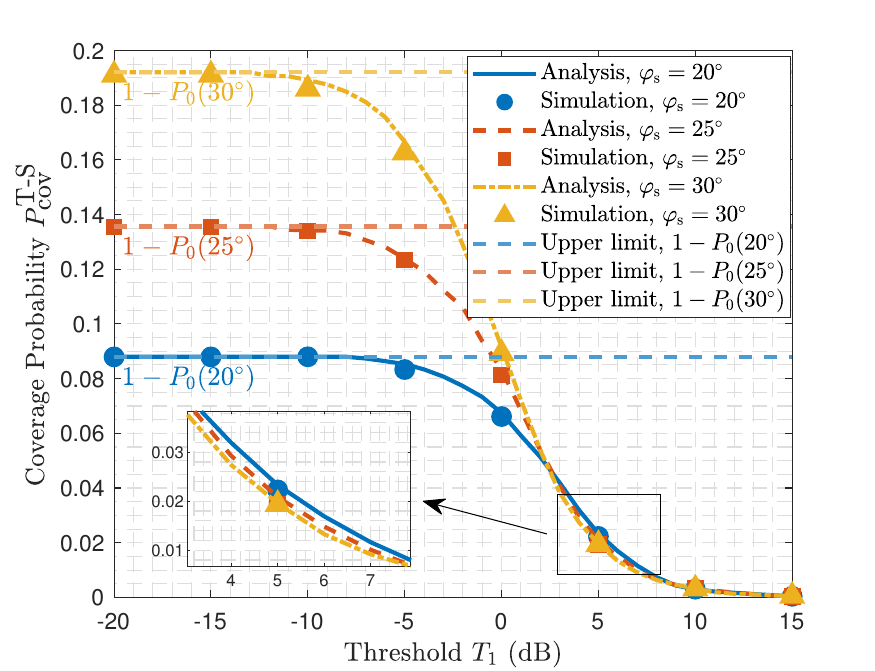}
	\end{center}
	\vspace*{-6mm}
	\caption{CP as function of threshold $T_1$, given different satellite antenna beamwidth $\varphi_{\textrm{s}}$.}
	\label{fig:4}
	\vspace*{-1mm}
\end{figure}
\begin{figure}[!t]
	\vspace*{-2mm}
	\begin{center}
		\includegraphics[width=1\columnwidth]{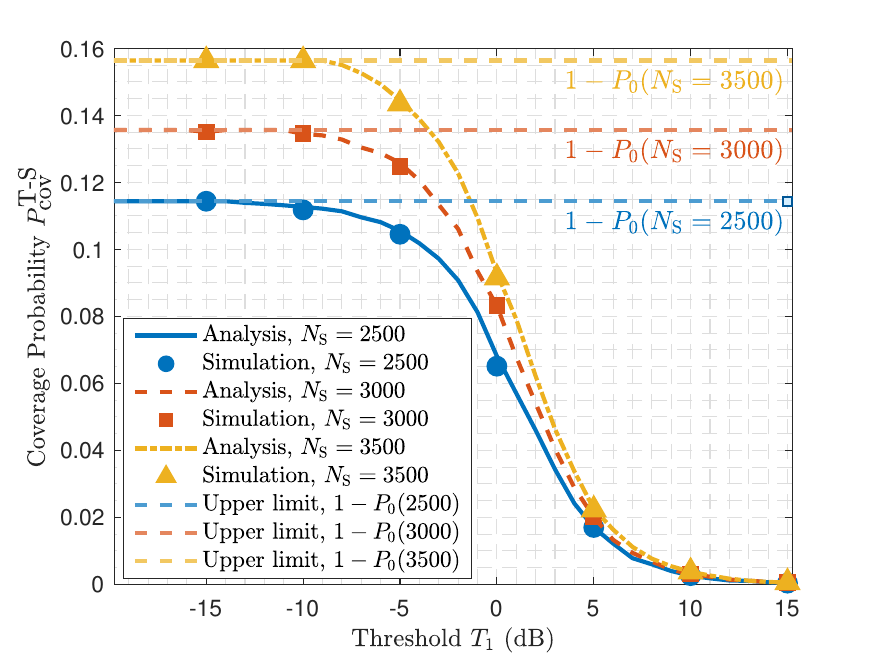}
	\end{center}
	\vspace*{-6mm}
	\caption{CP as function of threshold $T_1$, given different numbers of satellites $N_{\textrm{S}}$.}
	\label{fig:5}
	\vspace*{-4mm}
\end{figure}

\begin{figure}[!t]
%	\vspace*{-6mm}
	\begin{center}
		\includegraphics[width=1\columnwidth]{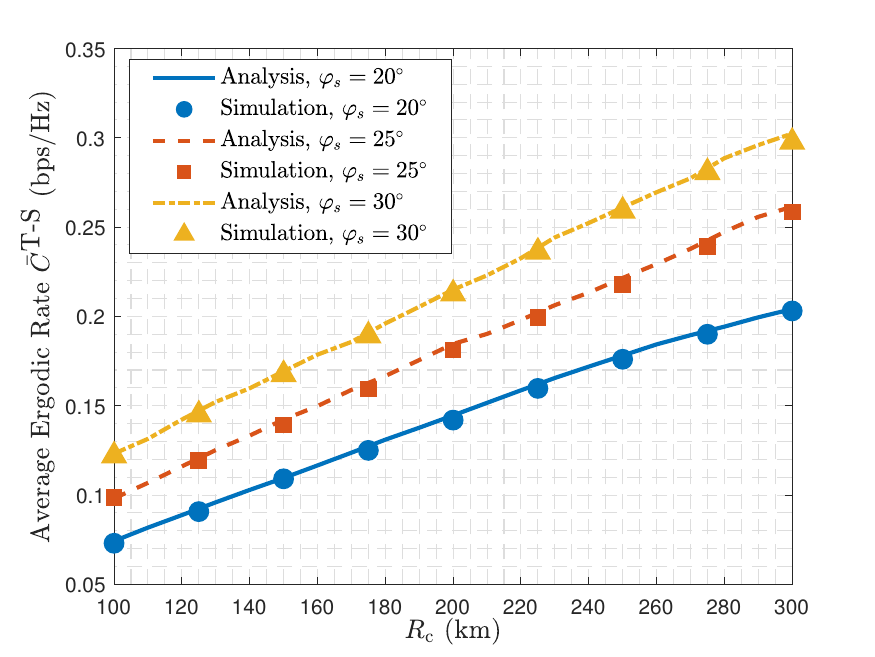}
	\end{center}
	\vspace*{-7mm}
	\caption{AER as function of finite-size area's radius $R_{\rm{c}}$, given different antenna beamwidth of satellites $\varphi_{\textrm{s}}$.}
	\label{fig:6}
	\vspace*{-1mm}
\end{figure}

Fig.~\ref{fig:4} plots the CP as the function of the SINR threshold $T_{1}$ for three different values of the satellite antenna beamwidth $\varphi_{\textrm{s}}$. The results suggest that increasing the satellite coverage reduces the T-S link's CP. This is attributed to the impact of $\varphi_{\textrm{s}}$ on the SINR, where a larger coverage service satellite leads to increased MUI and consequently reduced CP. Also, due to the fact that $P_0$ is determined by $\varphi_{\textrm{s}}$, the upper limit values of the three CP curves are different, which can be explained in formulas (\ref{P0}) and (\ref{eqT1}). This is because as $\varphi_{\textrm{s}}$ increases, the communication distance of the satellite increases, giving IoT devices a higher probability of connecting to the serving satellite. However, the increase in communication range also means an increase in MUI. Therefore, the larger the $\varphi_{\textrm{s}}$, the faster the CP reduces as $T_1$ increases.

Additionally, Fig.~\ref{fig:5} investigates the influence of the number of satellites $N_{\textrm{S}}$ on the achievable CP. As expected, increasing $N_{\textrm{S}}$ results in an increase in the CP because a higher number of satellites increases the likelihood that the nearest satellite to the target IoT device will be within communication range, which also increases the upper limit $1 - P_0$ of CP. 

Fig.~\ref{fig:6} depicts the AER as the function of the finite-size area's radius $R_{\rm{c}}$, given three different values for satellite antenna beamwidth $\varphi_{\textrm{s}}$. As expected, increasing $R_{\rm{c}}$ increases $\bar{C}^{\textrm{T-S}}$. The impact of $\varphi_{\textrm{s}}$ on the achievable AER is clearly shown in Fig.~\ref{fig:6}. Evidently, increasing the available area $\mathcal{A}_{\textrm{T}}$ results in an increase in the AER, because given the fixed number of IoT devices $N_{\textrm{T}}$, a larger distribution area results in a lower density and a decrease in the number of interfering devices. Also the larger the $\varphi_{\textrm{s}}$, the greater the communication distance. This increases the probability that the satellite can establish a connection, thereby leading to a higher AER.

\begin{figure}[!t]
	\vspace*{-1mm}
	\begin{center}
		\includegraphics[width=1\columnwidth]{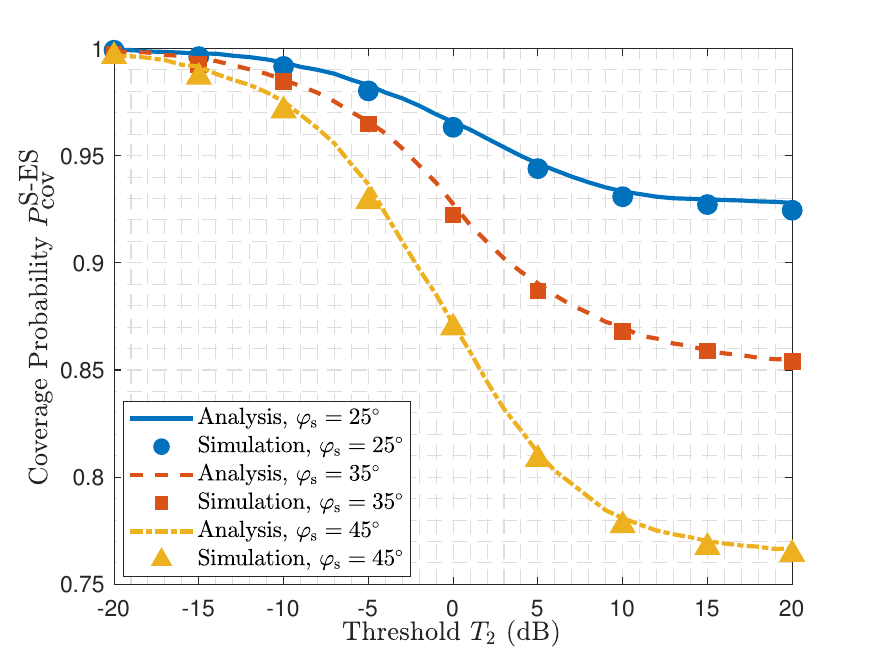}
	\end{center}
	\vspace*{-6mm}
	\caption{CP as function of threshold $T_2$, given different satellite antenna beamwidth $\varphi_{\textrm{s}}$.}
	\label{fig:7}
	\vspace*{-3mm}
\end{figure}

\begin{figure}[!t]
%	\vspace*{-6mm}
	\begin{center}
		\includegraphics[width=1\columnwidth]{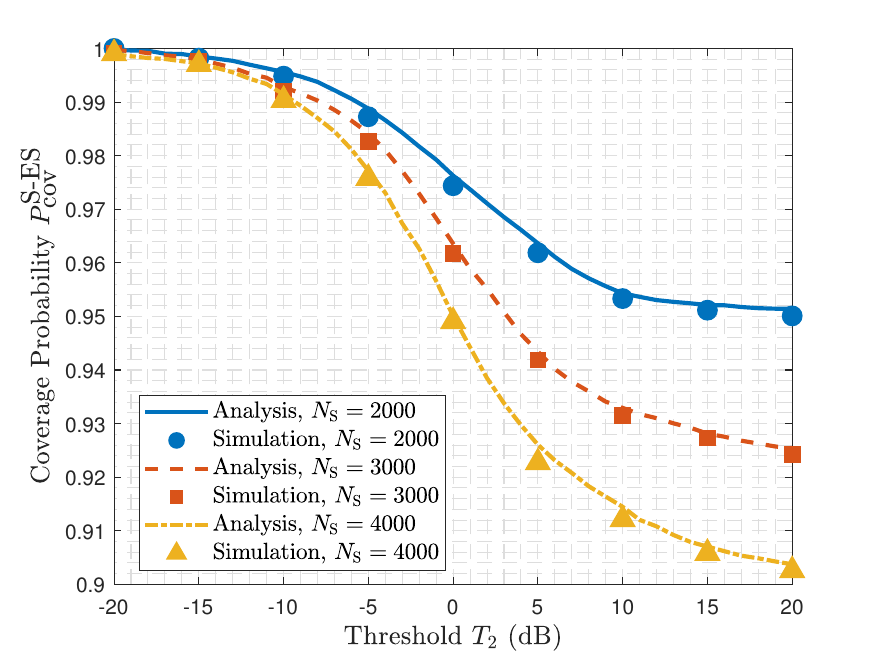}
	\end{center}
	\vspace*{-6mm}
	\caption{CP as function of threshold $T_2$, given different numbers of satellites $N_{\textrm{S}}$.}
	\label{fig:8}
	\vspace*{-3mm}
\end{figure}

\subsection{Performance of the S-ES Link} \label{S6.2}

Fig.~\ref{fig:7} depicts the CP as the function of the SINR threshold $T_{2}$ for three different values of the satellite overage range $\varphi_{\textrm{s}}$. As anticipated, increasing $T_2$ results in a reduction in CP.  Besides, an increase in $\varphi_{\textrm{s}}$ leads to a noticeable reduction in the CP. This is because an increase in the satellite beam coverage expands the ES's visible range, resulting in a higher number of interfering satellites within this range. This leads to higher MUI, causing a decrease in the CP. It is worth noting that the CP curve does not reach zero, rather it reaches a lower limit value, when $T_{2}$ increases to very large value. This is due to the limited number of satellites and their restricted coverage areas. Given that there must always be a satellite providing ES service, it is highly likely that no interfering satellites exist within the ES's visible range. As the visible range increases, this limit value of CP will decrease.

Fig.~\ref{fig:8} plots the CP as the function of $T_{2}$ for three different numbers of satellites $N_{\textrm{S}}$. As expected, increasing $N_{\textrm{S}}$ reduces the achievable CP since a higher number of satellites increases the number of interfering satellites. This reduces the SINR of the S-ES link, causing a reduction in the CP.

\begin{figure}[!b]
	\vspace*{-4mm}
	\begin{center}
		\includegraphics[width=1\columnwidth]{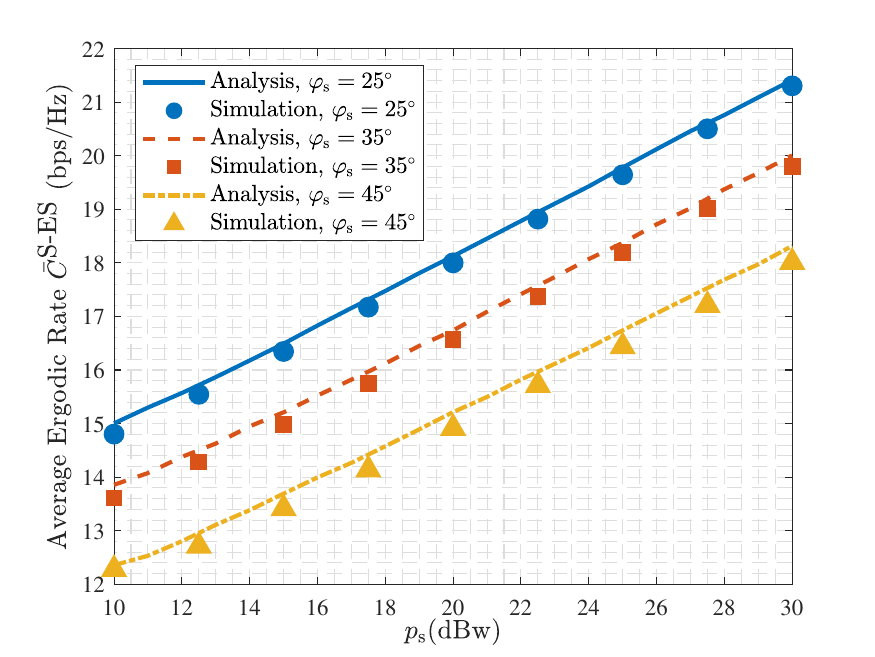}
	\end{center}
	\vspace*{-6mm}
	\caption{AER as function of power of target satellite $p_{\textrm{s}}$, given different satellite antenna beamwidth $\varphi_{\textrm{s}}$.}
	\label{fig:9} % Fig.9
	\vspace*{-1mm}
\end{figure}

Fig.~\ref{fig:9} shows the AER as the function of the target satellite power $p_{\textrm{s}}$, given three different values for the satellite antenna beamwidth $\varphi_{\textrm{s}}$. As expected, increasing $p_{\textrm{s}}$ increases $\bar{C}^{\textrm{S-ES}}$. In addition, increasing $\varphi_{\textrm{s}}$ leads to an increase in the number of interfering satellites, which decreases the AER. 

\begin{figure}[!b]
	\vspace*{-6mm}
	\begin{center}
		\includegraphics[width=1\columnwidth]{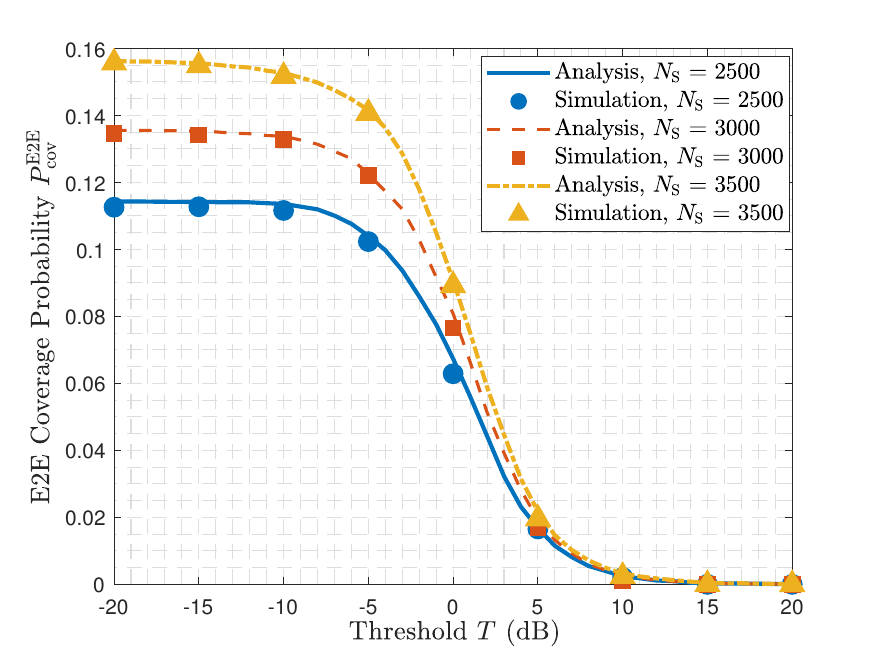}
	\end{center}
	\vspace*{-6mm}
	\caption{E2E coverage probability as function of threshold $T$, given different numbers of satellites  $N_{\textrm{S}}$.}
	\label{fig:10}
	\vspace*{-1mm}
\end{figure}

\subsection{E2E Performance} \label{S6.3}

In this set of simulations, we assume the threshold $T_1$ equals the threshold $T_2$, i.e., $T_1=T_2=T$. {\color{black}An identical threshold value can measure the overall communication quality of the system in a more intuitive manner, because it provides a consistent reference standard, representing the performance of both uplink and downlink based on the same threshold $T$. This consistency allows the E2E coverage probability to directly reflect the system's ability to meet overall service quality requirements. By avoiding the complex interactions introduced by different thresholds, it simplifies performance analysis and emphasizes the system's overall communication quality, making the results clearer and easier to interpret.}

Fig.~\ref{fig:10} plots the E2E coverage probability as the function of the SINR threshold $T$ for three different values of the number of satellites $N_{\textrm{S}}$. It can be seen that increasing $N_{\textrm{S}}$ results in an increase in the CP. The CP performance of the service and feeder links jointly influence the E2E coverage probability. As shown in Figs.~\ref{fig:5} and \ref{fig:8}, the impact of $N_{\textrm{S}}$ on CP is opposite in the service and feeder links. Specifically, increasing $N_{\textrm{S}}$ increases the CP of the service link, while increasing $N_{\textrm{S}}$ decreases the CP of the feeder link. Therefore, it is evident that the service link has a more significant impact on the overall system performance. This phenomenon can be explained as follows. When an IoT device selects the nearest satellite as its serving satellite, there is a certain probability ($P_0$) that no satellite exists within the device's visible range due to the limited coverage of satellites. In such cases, the IoT device cannot establish connection because no serving satellite is available. This probability $P_0$, specified by \eqref{P0}, quantitatively captures this limitation. However, once the signal is relayed via satellite to the ES, it is ensured that at least one satellite can communicate with the ES due to the design of satellite relay links and the broader coverage of the ES. Clearly, the availability of a satellite within the IoT device's range plays a critical role in determining the system's overall connectivity.

\begin{figure}[!t]
	\vspace*{-1mm}
	\begin{center}
		\includegraphics[width=1\columnwidth]{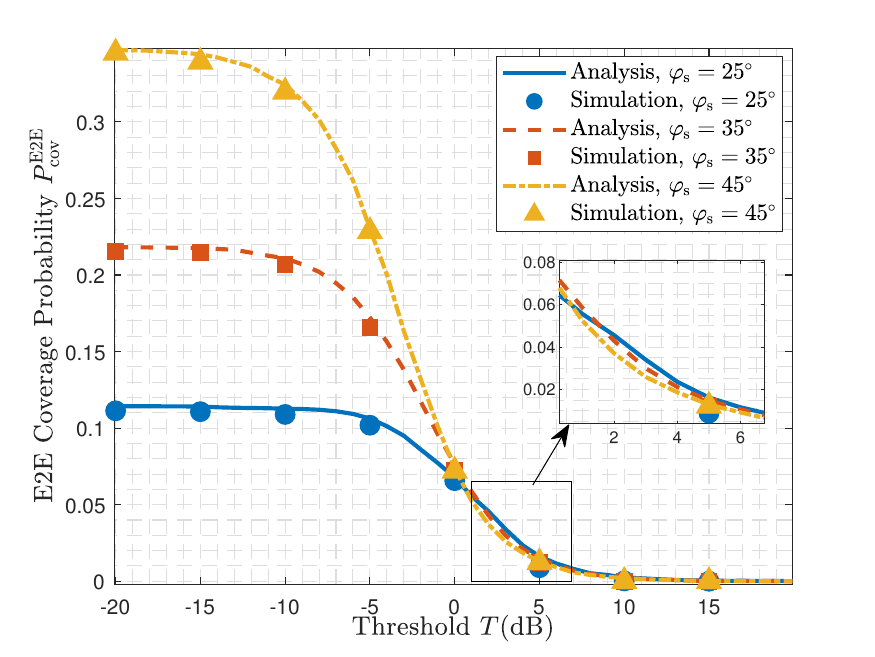}
	\end{center}
	\vspace*{-6mm}
	\caption{E2E coverage probability as function of threshold $T$, given different satellite antenna beamwidth $\varphi_{\textrm{s}}$.}
	\label{fig:11}
	\vspace*{-3mm}
\end{figure}

Fig.~\ref{fig:11} shows the E2E coverage probability as the function of the SINR threshold $T$, given three different satellite antenna beamwidth $\varphi_{\textrm{s}}$.   As expected, increasing $\varphi_{\textrm{s}}$ increases the E2E coverage probability since a larger satellite  coverage provides greater benefits to the service link, thereby leading to an overall increase in the E2E coverage probability. It can be observed from the coverage probability curves of the service link (see Fig.~\ref{fig:4}) and the feeder link (see Fig.~\ref{fig:7}) that the impact of satellite antenna beamwidth on the coverage probability of the service link is opposite to that of the feeder link. Furthermore, the downward trend shown in Fig. \ref{fig:11} is more similar to that of the service link shown in Fig. \ref{fig:4}, which means that the service link has a higher impact on the E2E performance than the feeder link. Meanwhile, due to the opposite trend observed regarding the feeder link, the decline rate of each curve in Fig.~\ref{fig:11} is significantly faster than that in Fig.~\ref{fig:4}. This implies that when jointly considering the service link and the feeder link, the dynamic change of the system performance may become more complex than that of only considering the service link or the feeder link. In other words, the opposite impact of satellite antenna beamwidth on the coverage probability of the feeder link complicates the overall coverage performance profile. The faster decline rate of the curves in Fig.~\ref{fig:11} indicates that although the overall trend is similar to that of the service link, the characteristics of the feeder link significantly weaken the system's coverage ability under a large satellite coverage. Similar insights can also be obtained by comparing the results of Fig.~\ref{fig:5} and  Fig.~\ref{fig:10}. 
	
	The above discussions suggest that in practical system design and resource allocation, a judicious and balanced parameter configuration on the two links is required. For example, when planning the satellite coverage, one cannot simply rely on the characteristics of the service link but should fully consider the reverse effect of the feeder link. Otherwise, as the coverage expands, the overall coverage probability may deteriorate at a faster rate. It also means that when optimizing system performance, the optimization strategies for the feeder link may be different from those for the service link. Customized methods need to be explored to enhance the stability of the feeder link under large coverage, thereby slowing down the decline rate of the overall coverage probability and ensuring reliable operation of the system under different coverage conditions.

\section{Conclusions} \label{S7}

In this paper, we have proposed a tractable approach for analyzing the performance of IoT-over-LEO satellite systems, where IoT devices are located in a finite-size ground area. We have analyzed and discussed the performance of the service link and the feeder link separately, as well as that of the E2E overall system. The following conclusions have been drawn from our theoretical analysis and simulations results.
\begin{enumerate}
	\item When the satellite coverage and the number of satellites are both limited, the coverage probability of the service link has a clear upper limit, which is determined by the probability $P_0$ that no satellite exists within the visible range of an IoT device. 
	
	\item An increase in the number of satellites and their coverage both lead to an improvement in coverage probability. However, when IoT devices simultaneously operate on the same frequency resource, a larger coverage also means more interfering devices, resulting in a faster decline of coverage probability. 
	
	\item The number of satellites and the satellite coverage have opposing effects on the coverage probabilities and average ergodic rates of the service and feeder links. While increasing the number of satellites or satellite coverage improves the service link's performance, it can exacerbate interference in the feeder link, leading to a decrease in its performance. 
	
	\item The service link has a more significant impact on the overall system performance because the limited satellite coverage results in a certain probability $P_0$ that no satellite is within an IoT device's visible range, preventing connectivity. In contrast, the link from satellite to ground station ensures communication due to broader coverage. Therefore, the availability of a serving satellite is crucial for system connectivity. 
\end{enumerate}

For practical IoT-over-LEO satellite system design, both the first and second insights imply that network planners need to carefully consider the trade-off between the cost of deploying more satellites and optimizing the achievable coverage of each satellite. 
For instance, in urban areas with a high density of IoT devices, a relatively smaller but well-planned satellite coverage can be adopted to balance coverage and interference, thus enhancing the overall system performance. The third insight is also crucial for the deployment of IoT-over-LEO satellite systems. It means operators need to coordinate the design of the service and feeder links. For example, in a system with a large number of IoT devices in a specific area, additional resources may be allocated to the feeder link, such as using more advanced interference-mitigation techniques, to counterbalance the negative impact on the feeder link when enhancing the service link. The fourth insight implies that efforts should be focused on improving the reliability of the service link. This could involve developing more accurate satellite-tracking algorithms for IoT devices to increase the probability of being connected to a satellite, or deploying relay nodes in areas with poor satellite visibility to bridge the communication gap.

In summary, our study offers theoretical guidance and valuable insights for IoT-over-LEO satellite system planning, deployment, and optimization in practice. 

\appendix

\subsection{Proof of Lemma~\ref{L1}}\label{ApA}

\begin{proof}
	Let $\mathcal{A}(x)$ represent the spherical cap with the radius $r_{\textrm{e}}\! +\! H$ such that the distance between any point on $\mathcal{A}(x)$ and the IoT device is equal to $x\! \in\! [H, \, 2r_{\rm{e}}\! + \!H]$, $a$ and $b$ signify the radius and height of $\mathcal{A}(x)$, respectively, and the surface area of $\mathcal{A}(x)$ be $\mathcal{S}(\mathcal{A}(x))$. If $x\! =\! H$, the surface area vanishes, i.e., $\mathcal{S}(\mathcal{A}(H))\! =\! 0$, while when $x\! =\! 2 r_{\textrm{e}}\! +\! H$, the surface area becomes a whole sphere with the radius $r_{\textrm{e}}\! +\! H$, which is the region where the satellites are located, i.e., $\mathcal{S}(\mathcal{A}( 2r_{\rm{e}}\! +\! H))\! =\! 4\pi (r_{\textrm{e}}\! +\! H)^2$. From basic geometry, we obtain 
	\begin{align} % eqs.44-46
		\mathcal{S}(\mathcal{A}(x)) =& 2 \pi (r_{\textrm{e}}+H) b,  \label{A1.1} \\
		\mathcal{S}(\mathcal{A}(x)) =& \pi (a^2+b^2), \label{A1.2} \\
		x^2 =& (H-b)^2+a^2. \label{A1.3}
	\end{align}
	Substituting (\ref{A1.1}) and (\ref{A1.2}) into (\ref{A1.3}) leads to the relationship expression between $x^2$ and $\mathcal{S}(\mathcal{A}(x))$ as 
	\begin{align}\label{A1.4} % eq.47
		x^2 =& \frac{\mathcal{S}(\mathcal{A}(x))}{\pi}-2H\frac{\mathcal{S}(\mathcal{A}(x))}{2\pi (r_{\textrm{e}}+H)}+H^2  \nonumber \\
		=& \mathcal{S}(\mathcal{A}(x)) \frac{r_{\textrm{e}}}{\pi (r_{\textrm{e}}+H)}+H^2.
	\end{align}
	Thus, we have
	\begin{align}\label{A1.5} % eq.48
		\mathcal{S}(\mathcal{A}(x)) =& \frac{\pi (r_{\textrm{e}}+H)(x^2 -H^2)}{r_{\textrm{e}}}.
	\end{align}
	
	The probability that the distance between satellite $S$ and IoT device $T$ is less than $r_\textrm{s}$ is equivalent to the probability of the satellite $S$ being within $\mathcal{A}(r_\textrm{s})$, denoted as the success probability for $\mathcal{A}(r_\textrm{s})$, with $r_\textrm{s}\! \in\! [H, \, 2r_{\rm{e}}\! + \!H]$. This is defined as the ratio of the surface area of $\mathcal{S}(\mathcal{A}(r_\textrm{s}))$ to $\mathcal{S}(\mathcal{A}(2r_{\rm{e}}\! +\! H))$:
	\begin{align}\label{A1.6} % eq.49
		\mathbb{P}(R_{\textrm{S}} \leq r_\textrm{s}) = \frac{\mathcal{S}(\mathcal{A}(r_\textrm{s}))}{\mathcal{S}(\mathcal{A}(2r_{\rm{e}} + H))}= \frac{r_\textrm{s}^2-H^2}{4(r_{\textrm{e}}+H)r_{\textrm{e}}},
	\end{align}
	for $H\le r_\textrm{s}\! \le\! 2r_{\rm{e}}\! +\! H$. Clearly $\mathbb{P}(R_{\textrm{S}}\! \leq\! r_\textrm{s})\! =\! 0$ for $r_\textrm{s}\! <\! H$, and $\mathbb{P}(R_{\textrm{S}}\! \leq\! r_\textrm{s})\! =\! 1$ for $r_\textrm{s}\! >\! 2r_{\rm{e}}\! +\! H$. This leads to (\ref{L1-1}).
	
	Differentiating (\ref{L1-1}) with respect to $r_\textrm{s}$ leads to (\ref{L1-2}).
\end{proof}

\subsection{Proof of Lemma~\ref{L2}} \label{ApB}

\begin{proof}
	Let $\mathcal{A}'(x')$ denote the spherical cap with the radius $r_{\textrm{e}}$ such that the distance between any point on $\mathcal{A}'(x')$ and the satellite is equal to $x'\! \in\! [H,\, 2r_{\rm{e}}\! +\! H]$, $a_1$ and $b_1$ signify the radius and height of $\mathcal{A}'(x')$, and the surface area of $\mathcal{A}'(x')$ be $\mathcal{S}(\mathcal{A}'(x'))$. If $x'\! =\! H$, $\mathcal{S}(\mathcal{A}'(H))\! =\! 0$. When $x'\! =\! 2 r_{\textrm{e}}\! +\! H$, the surface area becomes the surface of the Earth, i.e., $\mathcal{S}(\mathcal{A}'( 2r_{\rm{e}}\! +\! H))\! =\! 4\pi r_{\textrm{e}}^2$. From basic geometry, we can obtain 
	\begin{align} % eqs.50-52
		\mathcal{S}(\mathcal{A}'(x')) =& 2 \pi r_{\textrm{e}} b_1, \label{A2.1} \\
		\mathcal{S}(\mathcal{A}'(x')) =& \pi (a_1^2+b_1^2), \label{A2.2} \\
		{x'}^2 =& (H+b_1)^2+a_1^2. \label{A2.3}
	\end{align}
	From (\ref{A2.1})-(\ref{A2.3}), we obtain the relationship: 
	\begin{align}\label{A2.4} % eq.53
		\mathcal{S}(\mathcal{A}'({x'})) =& \frac{\pi r_{\textrm{e}}({x'}^2 -H^2)}{(r_{\textrm{e}}+H)}.
	\end{align}
	Similar to Appendix~\ref{ApA}, it is straightforward to derive (\ref{L2-1}).
\end{proof}

\subsection{Proof of Lemma~\ref{L3}}\label{ApD}

\begin{proof}
	$\mathcal{L}_{I_{\textrm{T}}}\left(s\right)$ can be written as
	\begin{align*} %\label{eqApD1}
		& \mathcal{L}_{I_{\textrm{T}}}\!\left(s\right)\!= \! \mathbb{E}\!\!\left[ \prod_{T_n\in\Phi_{\textrm{C}} {\backslash} T_m}\!\!\!\!\! \exp\! \left( -s p_{\textrm{t}} D_{T_n\textrm{-}S}\left|h_{T_n\textrm{-}S }\right|^2 \overline{D}\,l_1(R_{T_n}) \right)\! \right] % \nonumber \\
	\end{align*}
	\begin{align}\label{eqApD1} % eq.57
		&\overset{\mathrm{(a)}}{=}\! \mathbb{E}\!\! \left[ \prod_{T_n\in\Phi_{\textrm{C}} {\backslash} T_m}\!\!\!\!\!\mathbb{E}_{\left|h_{T_n\textrm{-}S }\right|^2}\!\!\left[\exp\! \left(\left|h_{T_n\textrm{-}S }\right|^2\!\!\left(-t_0 D_{T_n\textrm{-}S}  R_{T_n}^{-\alpha_1} \right) \right)\! \right]\! \right] \nonumber \\
		&\overset{\mathrm{(b)}}{=}\!\mathbb{E}_{N_I,R_{T_n},D_{T_n\textrm{-}S}}\!\! \left[ \!\prod_{T_n\in\Phi_{\textrm{C}} {\backslash} T_m}    \left(\!1\!+\!\frac{t_0 D_{T_n\textrm{-}S} }{N_{T\textrm{-}S}R_{T_n}^{\alpha_1}}\!\right)^{\!\!\!-N_{T\textrm{-}S}} \!\right],
	\end{align}
	where (a) is obtained by denoting $t_0\! =\! \frac{n\eta T_1r_m^{\alpha_1}\overline{D}}{D_{T_m\textrm{-}S}}$, and (b) is obtained by using the moment-generating function (MGF) of the normalized Gamma random	variable. 
\end{proof}

\subsection{Proof of Theorem~\ref{T1}}\label{ApC}

\begin{proof}
	
	From Bayes rule, we have
	\begin{align}\label{eqApC1} % eq.54
		P_{\rm{cov}}^{\rm{T}\textrm{-}\rm{S}} =& \mathbb{P}(R_{T_m} > r_{\rm{max}}) \mathbb{P}( \mathrm{SINR}_1\geq T_1|R_{T_m}> r_{\rm{max}}) \nonumber \\
		& + \mathbb{P}(R_{T_m} \le r_{\rm{max}}) \mathbb{P}( \mathrm{SINR}_1\geq T_1|R_{T_m}\leq r_{\rm{max}})  \nonumber \\
		=& (1-P_0)\, \mathbb{P}( \mathrm{SINR}_1\geq T_1|R_{T_m}\leq r_{\rm{max}}).
	\end{align}
	Note that
	\begin{align}\label{eqApC2} % eq.55
		&	\mathbb{P}(\mathrm{SINR}_1\geq T_1|R_{T_m}\leq r_{\textrm{max}}) \nonumber \\
		&\hspace*{2mm}= \mathbb{E}_{R_{T_m}}\left[ \mathbb{P}\left(\mathrm{SINR}_1 \geq  T_1|R_{T_m}=r_m, R_{T_m}\leq r_{\textrm{max}}\right) \right] \nonumber \\
		&\hspace*{2mm}=\! \int_H^{r_{\textrm{max}}}\!\! \mathbb{P}\left(\mathrm{SINR}_1 \geq  T_1|R_{T_m}=r_m\right) f_{R_{T_m}}(r_m) \mathrm{d}r_m .
	\end{align}
	As $f_{R_{T_m}}(r_m)$ is given in (\ref{L1-4}), we only need to derive:
	\begin{align}\label{eqApC3} % eq.56
		& \text{$\mathbb{P}\left(\mathrm{SINR}_1 \geq T_1|R_{T_m}=r_m\right)$} \nonumber \\
		&\hspace*{2mm}= 1 - \mathbb{P}\!\left(|h_{T_m\textrm{-}S}|^2 \le\frac{T_1{I_{\textrm{T}}}}{p_{\textrm{t}} D_{T_m\textrm{-}S} \,l_1(r_m)} \right) \nonumber \\
		&\hspace*{2mm}\overset{\mathrm{(a)}}{>}  1 - \mathbb{E}\!\left[\!\left(1-\exp\!\left( \!\frac{-\eta T_1{I_{\textrm{T}}} }{p_{\textrm{t}} D_{T_m\textrm{-}S} \,l_1(r_m)}  \right) \! \right)^{\! N_{T\textrm{-}S}} \!\right] \nonumber \\
		&\hspace*{2mm}\overset{\mathrm{(b)}}{=} \sum_{n=1}^{N_{T\textrm{-}S}}(-1)^{n+1}\binom{N_{T\textrm{-}S}}{n}\mathbb{E}_{I_{\textrm{T}}}\! \left[\exp\!\left(\! \frac{-n \eta T_1{I_{\textrm{T}}} 16 \pi^2 f_1^2  r_m^{\alpha_1}} {p_{\textrm{t}} D_{T_m\textrm{-}S} c^2}\right) \!\right] \nonumber \\
		&\hspace*{2mm}\overset{\mathrm{(c)}}{=} \sum_{n=1}^{N_{T\textrm{-}S}}(-1)^{n+1}\binom{N_{T\textrm{-}S}}{n}\mathbb{E}_{I_{\textrm{T}}}\left[ \exp\left(-s{I_{\textrm{T}}} \right)\right] \nonumber\\
		&\hspace*{2.5mm}= \sum_{n=1}^{N_{T\textrm{-}S}}(-1)^{n+1}\binom{N_{T\textrm{-}S}}{n}\mathcal{L}_{I_{\textrm{T}}}(s),
	\end{align}
	where (a) is a tight upper bound when $N_{T\textrm{-}S}$ is small \cite{alzer1997some}, i.e., $\mathbb{P}\left(|h|^{2}\! <\! \psi\right)\! <\! \left(1\! -\! \exp(-\psi\eta)\right)^{N_{T\textrm{-}S}}$ with $\eta\! =\! N_{T\textrm{-}S}(N_{T\textrm{-}S}!)^{-\frac{1}{N_{T\textrm{-}S}}}$, (b) is obtained by binomial theorem and (c) is obtained by denoting $s\! =\! \frac{n \eta T_1 16 \pi^2 f_1^2  r_m^{\alpha_1}} {p_{\textrm{t}} D_{T_m\textrm{-}S} c^2} $. This completes the proof.
\end{proof}

\subsection{Proof of Lemma~\ref{L4}}\label{ApF}

\begin{proof}
	\begin{align}\label{LISproof} % eq.61
		&\! \mathcal{L}_{I_{\textrm{S}}}\!\left({s}'\right)\! =\!
		\mathbb{E}\!\! \left[\! \exp\!\left(\! -{s}'\!\!\!\!\!\!\! \sum_{S_n\in\Phi_L {\backslash} \{S_m\}}\!\!\!\!\! p_n D_{S_n\textrm{-}ES}\left|h_{S_n\textrm{-} ES }\right|^2 l_2(R_{S_n})\!\! \right)\!\! \right] \nonumber \\
		&=  \mathbb{E}\! \left[\! \prod_{S_n\in\Phi_L {\backslash} \{S_m\}}\!\! \!\!\!\! \exp\! \left(\! -{s}' p_n  D_{S_n\! -\! ES}\left|h_{S_n\textrm{-} ES }\right|^2 l_2(R_{S_n})\! \right)\! \right] \nonumber \\
		&=  \mathbb{E}_{N_I}\! \Bigg[\! \prod_{S_n\in\Phi_L {\backslash} \{S_m\}} \int_{H}^{r_{\textrm{max}}} \frac{2r_n'}{r_{\textrm{max}}^2-H^2} \nonumber \\
		&\times \! \mathbb{E}_{\left|h_{S_n\textrm{-} ES }\right|^2}\! \left[\! \exp\! \left(\!- {s}' p_n  D_{S_n\textrm{-}ES}\!\left|h_{S_n\textrm{-} ES }\right|^2 l_2(r_n')\! \right)\! \right]\! \mathrm{d}r_n'\! \Bigg]  \nonumber \\
		&\,{\overset{\mathrm{(a)}}=} \sum_{n_I=1}^{N_{\textrm{S}}-1}\! \binom{N_{\textrm{S}}}{n_I} P_I'^{n_I} (1-P_I')^{N_{\textrm{S}}-n_I-1}\! \int_{H}^{r_{\textrm{max}}}\! \frac{2r_n'	}{r_{\textrm{max}}^2-H^2} \nonumber \\
		&\times \mathbb{E}_{|h_{S_n\textrm{-} ES }|^2}\! \left[\exp\!\left(\!- \frac{{t_0}' \! \left|h_{S_n\textrm{-} ES }\right|^2 \!}{r_n'^{\alpha_2}} \right)\right]\!\!  \mathrm{d}r_n',
	\end{align}
	where (a) is obtained by the fact that $N_I$ follows the BPP and by denoting $t_0'\! =\! \frac{t \zeta(\beta\! -\! \delta) T_2 p_n D_{S_n\textrm{-} ES }\, r_m'^{\alpha}}{p_{\textrm{s}}}$.
	
	We now derive $\mathbb{E}_{{|h_{S_n\textrm{-} ES }|^2}}\left[\exp\left(\!- \frac{t_0' \left|h_{S_n\textrm{-} ES }\right|^2 \!}{r_n'^{\alpha_2}} \right)\right]$. The MGF of the SR fading model is defined as $M_S(x)\! =\! \mathbb{E}\left[\exp(-xS)\right]\! =\!\frac{(2\bar{c}q)^q(1+2\bar{c}x)^{q-1}}{((2\bar{c}q+\Omega )(1+2\bar{c}x)-\Omega )^q}$ \cite{2003A}. Thus, we have
	\begin{align} \label{Ehx} % eq.62
		& \mathbb{E}_{|h_{S_n\textrm{-} ES }|^2}\left[\exp\left(-\frac{{t_0}' \left|h_{S_n\textrm{-} ES }\right|^2 }{r_n'^{\alpha_2}} \right)\right] \nonumber \\
		&\hspace*{10mm}= \frac{(2\bar{c}q)^q(1+2\bar{c}{t_0}' r_n'^{-\alpha_2})^{q-1}}{\big((2\bar{c}q+\Omega )(1+2\bar{c}{t_0}'  r_n'^{-\alpha_2})-\Omega\big)^q} .
	\end{align}	 
	Next the probability of success $P_I'$ can be expressed as
	\begin{align}\label{P_I'} % eq.63
		P_I' =& \frac{\mathcal{S}(\mathcal{A}_{\textrm{S}})}{\mathcal{S}(\mathcal{A}_{\textrm{S}'})} = \frac{2\pi (r_{\textrm{e}} + H)^2(1 - \cos(\theta_3))}{4\pi(r_{\textrm{e}} + H)^2} \nonumber \\
		=& \frac{1-\cos(\theta_3)}{2},
	\end{align}
	where $\mathcal{A}_{\textrm{S}'}$ is the spherical surface at a height of $r_{\textrm{e}}+H$ at which satellites are located, and $\cos(\theta_3)\! =\! \frac{H + r_{\textrm{e}} - r_{\textrm{max}} \cos\big(\frac{\varphi_{\textrm{s}}}{2}\big)}{r_{\textrm{e}}}$.
	
	Substituting (\ref{Ehx}) and (\ref{P_I'}) into (\ref{LISproof}) leads to (\ref{eqL4}).
\end{proof}	

\subsection{Proof of Theorem~\ref{T2}}\label{ApE}

\begin{proof}
	Using the Kummer’s transform of the hypergeometric function \cite{5313912}, the PDF of \emph{$|h|^{2}$} can be rewritten as $f_{\left|h\right|^{2}}(x)\! =\! \sum\limits_{k=0}^{\infty}\Psi\left(k\right)x^{k}\exp\left(-\left(\beta-\delta\right)x\right)$, where $\Psi (k)\! =\! \frac{\left(-1\right)^{k}\kappa\delta^{k}}{\left(k!\right)^{2}}(1-q)_{k}$. Then the CDF of \emph{$|h|^{2}$} can be expressed as
	\begin{align}\label{eq5} % eq.58
		F_{|h|^{2}}\left(x\right) =& \sum_{k=0}^{\infty} \Psi (k) \int_{0}^{x} t^{k}\exp(-(\beta-\delta)t) dt \nonumber \\
		=& \sum_{k=0}^{\infty} \frac{\Psi\left(k\right)}{\left(\beta-\delta\right)^{k+1}}\gamma\left(k+1,\left(\beta-\delta\right)x\right).
	\end{align}
	Hence, we can obtain $P_{\mathrm{cov}}^{\textrm{S-ES}}$ as
	\begin{align}\label{A5-2} % eq.59
		& P_{\mathrm{cov}}^{\textrm{S-ES}} \triangleq  \mathbb{P}\left[ \frac{p_{\textrm{s}} \left|h_{S_m\textrm{-} ES}\right|^2 D_{S_m\textrm{-} ES} l_2(R_{S_m})} {I_{\textrm{S}}} \geq T_2  \right] \nonumber \\
		& = 1\! -\! \int_H^{r_{\textrm{max}}}\!\!\! \mathbb{P}\left[\! \left|h_{S_m\textrm{-} ES}\right|^2\! \le\! \frac{T_2 I_{\textrm{S}}}{p_{\textrm{s}} D_{S_m\textrm{-} ES}\, l_2(r_m')} | R_{S_m}\! =\! r_m'\! \right] \nonumber \\
		&\hspace*{3mm}\times	\frac{2r_m'}{r_{\textrm{max}}^2-H^2}\mathrm{d}r_m'.
	\end{align}
	Now we derive $\mathbb{P}\left[\left|h_{S_m\textrm{-}ES}\right|^2 \le \frac{T_2 I_{\textrm{S}}}{p_{\textrm{s}} D_{S_m\textrm{-} ES}l_2(r_m')} | R_{S_m}=r_m'\! \right]$:
	\begin{align} %\label{A5-3} % eq.60
		& \mathbb{P}\left[ \left|h_{S_m\textrm{-}ES}\right|^2 \le \frac{T_2 I_{\textrm{S}}}{p_{\textrm{s}} D_{S_m\textrm{-} ES}\, l_2(r_m')} | R_{S_m}=r_m' \right] \nonumber \\
		&\hspace*{2mm} = \mathbb{E}\!\!\left[\! \sum_{k=0}^{\infty}\! \frac{\Psi\left(k\right)}{(\beta-\delta) ^{\,k+1}}  \gamma \!\!\left( k+1,(\beta - \delta) \frac{T_2 I_{\textrm{S}}}{p_{\textrm{s}} D_{S_m\textrm{-} ES}\, l_2(r_m')}\right)\! \right] \nonumber \\  
		&\hspace*{2mm} \overset{\mathrm{(a)}}{\approx} \mathbb{E} \left[ \sum_{k=0}^{\infty } \frac{\Psi\left(k\right)}{(\beta-\delta)^{k+1}} \Gamma (k\!+\!1) \right. \nonumber \\ 	
		&\hspace*{6mm} \times \left. \left(1-\exp\left(- \frac{\zeta(\beta-\delta) T_2 I_{\textrm{S}}}{p_{\textrm{s}} D_{S_m\textrm{-} ES} \,l_2(r_m')} \right)\right)^{k+1}\right] \nonumber \\
		&\hspace*{2mm} \overset{\mathrm{(b)}}{=} \sum_{k=0}^{\infty} \frac{\Psi\left(k\right)}{(\beta-\delta)^{k+1}} \Gamma (k+1) \sum_{t=0}^{k+1} \binom{k+1}{t} \nonumber \\
		&\hspace*{6mm}	\times(-1)^t \mathbb{E}\left[\exp(-{s}'I_{\textrm{S}})\right] \nonumber
			\end{align}	
				\begin{align}\label{A5-3} % eq.60
		&\hspace*{2.5mm} = \sum_{k=0}^{\infty} \frac{\Psi\left(k\right)}{(\beta-\delta)^{k+1}} \Gamma (k+1) \sum_{t=0}^{k+1} \binom{k+1}{t}(-1)^t 	\mathcal{L}_{I_{\rm{S}}}({s}'), 
	\end{align}	
	where (a) is approximated by using $\gamma(k\! +\! 1,x)\! <\! \Gamma (k\! +\! 1)(1\! -\! \exp(-\zeta x))^{k+1}$\cite{alzer1997some}, $\zeta\! =\! (\Gamma(k\! +\! 2))^{-\frac{1}{k+1}}$,  and (b) is obtained from binomial theorem with ${s}'\! =\! \frac{t \zeta(\beta\! -\! \delta) T_2}{p_{\textrm{s}} D_{S_m\textrm{-}ES} l_2(r_m')}$. 
	
	Substituting (\ref{A5-3})  into (\ref{A5-2}) leads to $P_{\mathrm{cov}}^{\textrm{S-ES}}$ (\ref{eqT2}).
\end{proof}

\subsection{Proof of Theorem~\ref{T3}}\label{ApG}

\begin{proof}
	From (\ref{eqAER}), Bayes rule and (\ref{eqS4-1-1}), we have
	\begin{align*} % \label{eqA7} % eq.64
		& \bar{C}^{\rm{T}\textrm{-}\rm{S}} \triangleq \mathbb{E}\left[\log_2(1 + \mathrm{SINR_1})|R_{T_m}\leq r_{\textrm{max}}\right] \nonumber \\
		& \overset{\mathrm{(a)}}=\! \int_{H}^{r_{\textrm{max}}}\!\!\!\! \int_{t>0}\!\!\!\! \mathbb{E}\left[\mathbb{P}\!\!\ \Big(\log_2 (1+ \mathrm{SINR_1} >t\Big)\right]\! f_{R_{T_m}}(r_m)\,\mathrm{d}t\,\mathrm{d}r_{m} \nonumber \\
		&=\! \int_{H}^{r_{\textrm{max}}}\!\!\!\! \int_{t>0}\!\!\! \mathbb{E}\! \left[\mathbb{P}\left( |h_{T_m \textrm{-} S}|^2\! >\! \frac{I_{\textrm{T}}(2^t\! -\! 1)}{p_{\textrm{t}}D_{T_m\textrm{-}S}l_1(r_m)}\right)\right] \nonumber \\
		&\hspace*{4mm}\times \left(1-\frac{r_m^2-H^2}{4(r_{\textrm{e}}+H)r_{\textrm{e}} }\right)^{N_{\textrm{S}}-1}\!\!\!\! \frac{N_{\textrm{S}}\, r_m }{2(r_{\textrm{e}}+H)r_{\textrm{e}} } \mathrm{d}t \,\mathrm{d}r_{m} %\nonumber\\
	\end{align*}
	\begin{align}\label{eqA7} % eq.64
		&\overset{\mathrm{(b)}}{\approx} \int_{H}^{r_{\textrm{max}}} \!\!\!\!\int_{t>0}\!\! \left(1-\mathbb{E}\!\left[\!\left(\!1\!-\!\exp\!\left( \frac{-\eta {I_{\textrm{T}}}(2^t\! -\! 1) } {p_{\textrm{t}} D_{T_m\textrm{-}S} \,l_1(r_m)}  \right)\! \right)^{\!\! N_{T\textrm{-}S}} \right] \right) \nonumber \\
		&\hspace*{4mm}\times \left(1-\frac{ r_m^2-H^2}{4(r_{\textrm{e}}+H)r_{\textrm{e}} }\right)^{N_{\textrm{S}}-1}\!\!\!\! \frac{N_{\textrm{S}}\, r_m }{2(r_{\textrm{e}}+H)r_{\textrm{e}} } \mathrm{d}t \,\mathrm{d}r_{m} \nonumber \\
		&\overset{\mathrm{(c)}}{=} \int_{H}^{r_{\textrm{max}}} \!\!\!\!\int_{t>0} \sum_{n=1}^{N_{T\textrm{-}S}}(-1)^{n+1}\binom{N_{T\textrm{-}S}}{n}\mathbb{E}_{I_{\textrm{T}}}\! \! \left[\exp\left(-s_1 {I_{\textrm{T}}}\right)\right] \nonumber \\
		&\hspace*{4mm}\times\! \left(\! 1-\frac{ r_m^2-H^2}{4(r_{\textrm{e}}+H)r_{\textrm{e}} }\right)^{N_{\textrm{S}}-1}\!\!\!\! \frac{N_{\textrm{S}}\, r_m}{2(r_{\textrm{e}}+H)r_{\textrm{e}}} \mathrm{d}t \,\mathrm{d}r_{m}\nonumber\\
		&\hspace*{0.5mm}= \int_{H}^{r_{\textrm{max}}} \!\!\!\!\int_{t>0} \sum_{n=1}^{N_{T\textrm{-}S}}(-1)^{n+1}\binom{N_{T\textrm{-}S}}{n}\mathcal{L}_{I_{\rm{T}}}(s_1) \nonumber \\
		&\hspace*{4mm}\times\! \left(\! 1-\frac{ r_m^2-H^2}{4(r_{\textrm{e}}+H)r_{\textrm{e}} }\right)^{N_{\textrm{S}}-1}\!\!\!\! \frac{N_{\textrm{S}}\, r_m}{2(r_{\textrm{e}}+H)r_{\textrm{e}}} \mathrm{d}t \,\mathrm{d}r_{m},
	\end{align}
	where (a)~follows form the fact that if the random variable $X$ involved is positive, $\mathbb{E}[X]\! =\! \int_{t>0}\mathbb{P}(X>t)\mathrm{d}t$, (b)~is obtained by the tight upper bound when $N_{T\textrm{-}S}$ is small \cite{alzer1997some}, that is, $\mathbb{P}\left[|h|^{2}\! <\! \psi\right]\! <\! \left(1\! -\! \exp(-\psi\eta)\right)^{N_{T\textrm{-}S}}$ with $\eta\! =\! N_{T\textrm{-}S}(N_{T\textrm{-}S}!)^{-\frac{1}{N_{T\textrm{-}S}}}$, and (c)~is obtained by binomial theorem and by denoting $s_1\! =\! \frac{n \eta (2^t\! -\! 1) 16 \pi^2 f_1^2  r_m^{\alpha_1}} {p_{\textrm{t}} D_{T_m\textrm{-}S} c^2} $.  This completes the proof.
\end{proof}	

\subsection{Proof of Theorem~\ref{T4}}\label{ApH}

\begin{proof}
	Starting from the definition (\ref{eqAER}), we have
	\begin{align} %\label{eqA8} % eq.64
		& \bar{C}^{\rm{S}\textrm{-}\rm{ES}} \triangleq \mathbb{E}\left[\log_2(1 + \mathrm{SINR}_2)\right] \nonumber \\
		& \overset{\mathrm{(a)}}=\! \mathbb{E}\!\!\left[\int_{t>0}\!\! \mathbb{P}\Big(\!\log_2 (1\!+\!\frac{p_{\textrm{s}} D_{S_m\textrm{-}ES}\!\left|h_{S_m\textrm{-}ES}\right|^2 \!l_2(r_m')} {I_{\textrm{S}}+\sigma^2}\!>\!t\!\Big)\mathrm{d}t \!\right] \nonumber \\
		&= \mathbb{E}\! \left[\int_{t>0}\!\!\! \mathbb{P}\Big( \left|h_{S_m\textrm{-}ES}\right|^2\! >\! \frac{ (I_{\textrm{S}}+\sigma^2)(2^t\! -\! 1)}{p_{\textrm{s}} D_{S_m\textrm{-}ES} \,l_2(r_m')}\Big)  \mathrm{d}t \right]\! \!\nonumber \\
		\end{align}
		\begin{align}\label{eqA8} % eq.64
		&= \int_{t>0}\!\!\! \left(1\!-\!\int_H^{r_{\textrm{max}}}\!\!\!\!\! E\left[\mathbb{P}\!\!\left(\left|h_{S_m\textrm{-} ES}\right|^2\! \le\!\! \frac{(I_{\textrm{S}}+\sigma^2) (2^t\! -\! 1) }{p_{\textrm{s}} D_{S_m\textrm{-}ES}\, l_2(r_m')}  \!\! \right)\!  \right]\right. \nonumber \\
		&\hspace*{5mm}\times  \left. \frac{2r_m'}{r_{\textrm{max}}^2-H^2}\mathrm{d}r_m' \right)\mathrm{d}t \nonumber \\
		&= \int_{t>0}\! \Bigg(1\! - \!\int_H^{r_{\textrm{max}}}\!\!\!\! \mathbb{E} \Bigg[ \sum_{k=0}^{\infty} \frac{\Psi\left(k\right)}{(\beta-\delta)^{\,k+1}} \nonumber \\ 
		&\hspace*{5mm}\times\!\gamma\! \left(\! k\! + \!1,(\beta\! -\! \delta) \frac{(I_{\textrm{S}}+\sigma^2)(2^t\! -\! 1)}{p_{\textrm{s}} D_{S_m\textrm{-}ES} l_2(r_m')}\! \right)\! \Bigg]\! \frac{2r_m'}{r_{\textrm{max}}^2\! -\! H^2}\mathrm{d}r_m' \!\! \Bigg) \mathrm{d}t \nonumber\\
		& \overset{\mathrm{(b)}}{\approx} \int_{t>0}\! \Bigg(1\! -\! \int_H^{r_{\textrm{max}}}\!\!\!\! \mathbb{E}\!\Bigg[ \sum_{k=0}^{\infty} \frac{\Psi\left(k\right)}{(\beta-\delta)^{\,k+1}} \Gamma (k+1)\frac{2r_m'}{r_{\textrm{max}}^2-H^2} \nonumber \\ 	 
		&\hspace*{5mm} \times\!\! \left(\! 1\! -\! \exp\! \left(\! - \frac{\zeta(\beta\! -\! \delta) (2^t\! -\! 1) (I_{\textrm{S}}\! +\!\sigma^2)}{p_{\textrm{s}} D_{S_m\textrm{-}ES}\,l_2(r_m')} \right)\right)^{k+1}\Bigg]	\mathrm{d}r_m'\Bigg) \mathrm{d}t \nonumber \\ 
		&\overset{\mathrm{(c)}}= \int_{t>0}\! \Bigg(1\! -\! \int_H^{r_{\textrm{max}}}\!\!\! \mathbb{E}\!\Bigg[ \sum_{k=0}^{\infty} \frac{\Psi\left(k\right)}{(\beta-\delta)^{\,k+1}} \Gamma (k+1)\frac{ 2r_m'	}{r_{\textrm{max}}^2-H^2} \nonumber \\ 	 
		&\hspace*{5mm} \times\!\! \sum_{u=0}^{k+1}\! \binom{k+1}{u} (-1)^u \mathbb{E}_{I_{\textrm{S}}}\! \left[ \exp(-{s}_1'(I_{\textrm{S}}+\sigma^2))\! \right]\! \Bigg]\!	\mathrm{d}r_m'\! \Bigg)\!\mathrm{d}t,   
	\end{align}
	where (a)~follows form the fact that if the random variable $X$ involved is positive, $\mathbb{E}[X]\! =\! \int_{t>0}\mathbb{P}(X\! >\! t)\mathrm{d}t$, (b)~is approximated by using $\gamma(k+1,x)\! <\! \Gamma (k+1)(1-\exp(-\zeta x))^{k+1}$ with $\zeta\! =\! (\Gamma(k\! +\! 2))^{-\frac{1}{k+1}}$ \cite{alzer1997some}, and (c)~is obtained from binomial theorem with ${s}_1'\! =\! \frac{u \zeta(\beta-\delta) (2^t\! -\! 1)}{p_{\textrm{s}} D_{S_m\textrm{-}ES} \,l_2(r_m')}$. 
\end{proof}	

%\bibliography{IEEEtran}
%\bibliography{reference}	
\small
\bibliographystyle{IEEEtran}
%\bibliography{reference}	
%\bibliography{IEEEabrv, reference}

% Generated by IEEEtran.bst, version: 1.13 (2008/09/30)

\begin{IEEEbiography}
	[{\includegraphics[width=1in,height=1.25in,clip,keepaspectratio]{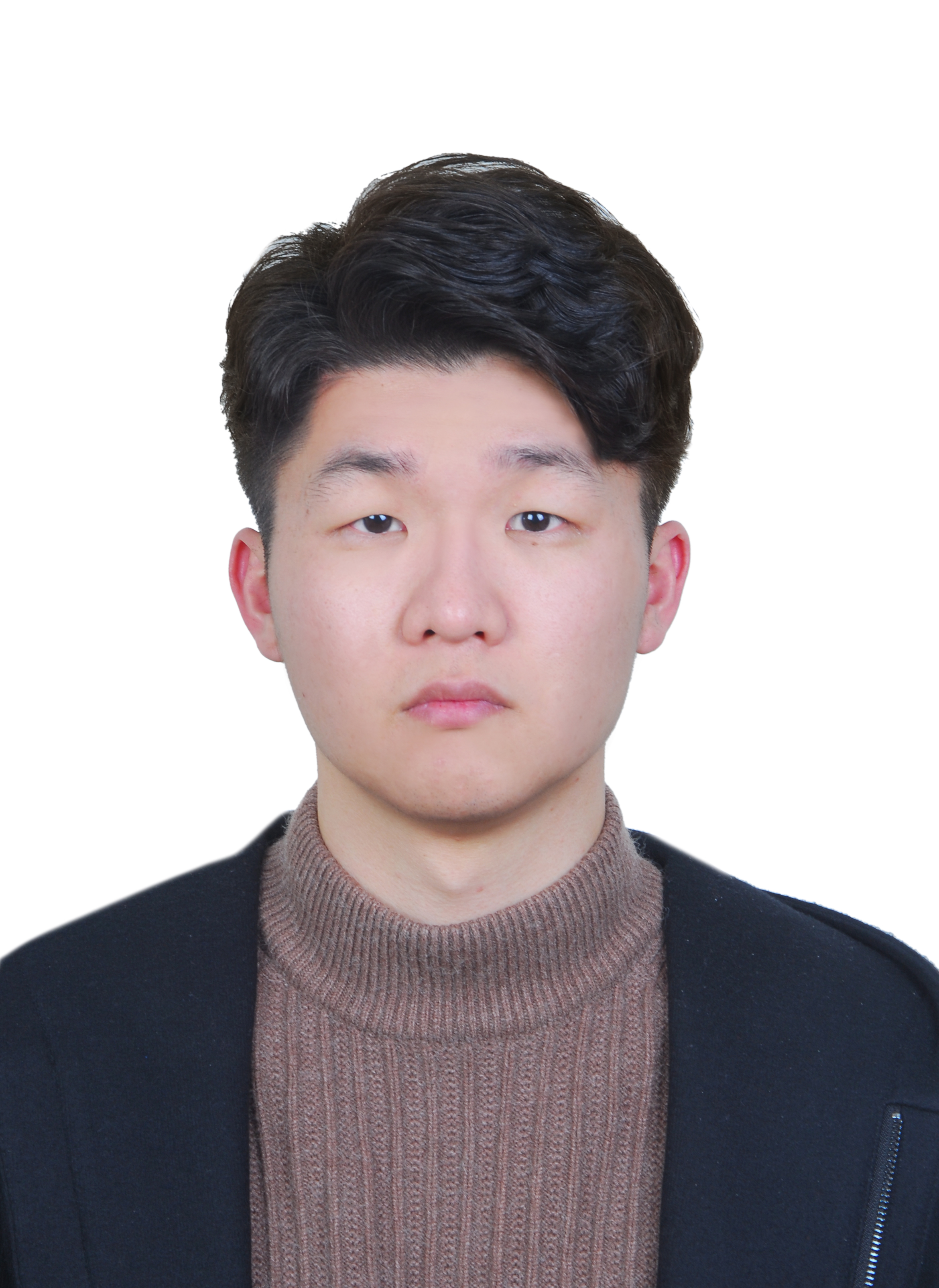}}]
	{Wen-Yu Dong}
	(Student Member, IEEE) 
	received the B.S. degree in electronic and information engineering from Sichuan University (SCU), China, in 2019. He is currently pursuing the Ph.D. degree in information and communication engineering with the School of Information and Communication Engineering, Beijing University of Posts and Telecommunications (BUPT), and the Key Laboratory of Universal Wireless Communications, Ministry of Education. His current research interests include highly dynamic mobile ad hoc network architecture and protocol stack design, routing design for mobile ad hoc networks, and integrated space-air-ground network modeling and performance analysis based on stochastic geometry.
\end{IEEEbiography}

\begin{IEEEbiography}
	[{\includegraphics[width=1in,height=1.25in,clip,keepaspectratio]{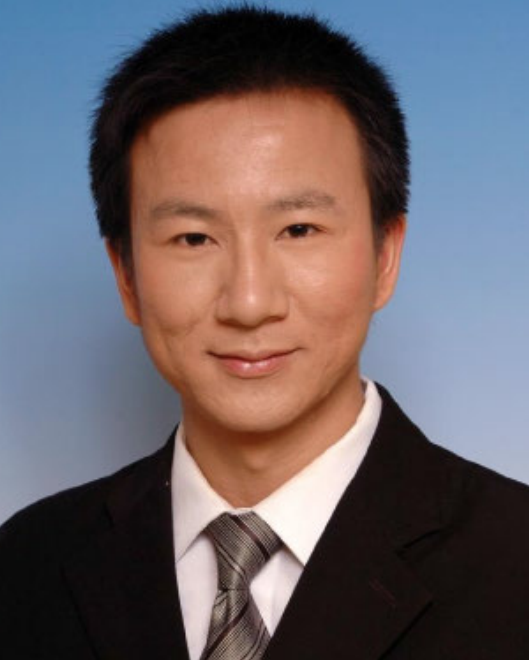}}]
	{Shaoshi Yang}
	(Senior Member, IEEE) 
	received the B.Eng. degree in information engineering from Beijing University of Posts and Telecommunications (BUPT), China, in 2006, and the Ph.D. degree in electronics and electrical engineering from the University of Southampton, U.K., in 2013. From 2008 to 2009, he was a Researcher with Intel Labs China. From 2013 to 2016, he was a Research Fellow with the School of Electronics and Computer Science, University of Southampton. From 2016 to 2018, he was a Principal Engineer with Huawei Technologies Co. Ltd., where he made significant contributions to the products, solutions, and standardization of 5G, wideband IoT, and cloud gaming/VR. He was a Guest Researcher with the Isaac Newton Institute for Mathematical Sciences, University of Cambridge. He is currently a Full Professor with BUPT. His research interests include 5G/5G-A/6G, massive MIMO, mobile ad hoc networks, distributed artificial intelligence, and cloud gaming/VR. He is a Standing Committee Member of the CCF Technical Committee on Distributed Computing and Systems. He received the Dean’s Award for Early Career Research Excellence from the University of Southampton in 2015, the Huawei President Award for Wireless Innovations in 2018, the IEEE TCGCC Best Journal Paper Award in 2019, the IEEE Communications Society Best Survey Paper Award in 2020, the Xiaomi Young Scholars Award in 2023, the CAI Invention and Entrepreneurship Award in 2023, the CIUR Industry-University-Research Cooperation and Innovation Award in 2023, and the First Prize of Beijing Municipal Science and Technology Advancement Award in 2023. He is an Editor of \emph{IEEE Transactions on Communications}, \emph{IEEE Transactions on Vehicular  Technology}, and \emph{Signal Processing} (Elsevier). He was also an Editor of \emph{IEEE Systems Journal} and \emph{IEEE Wireless Communications Letters}. For more details on his research progress, please refer to https://shaoshiyang.weebly.com/.
\end{IEEEbiography}

\begin{IEEEbiography}
	[{\includegraphics[width=1in,height=1.25in,clip,keepaspectratio]{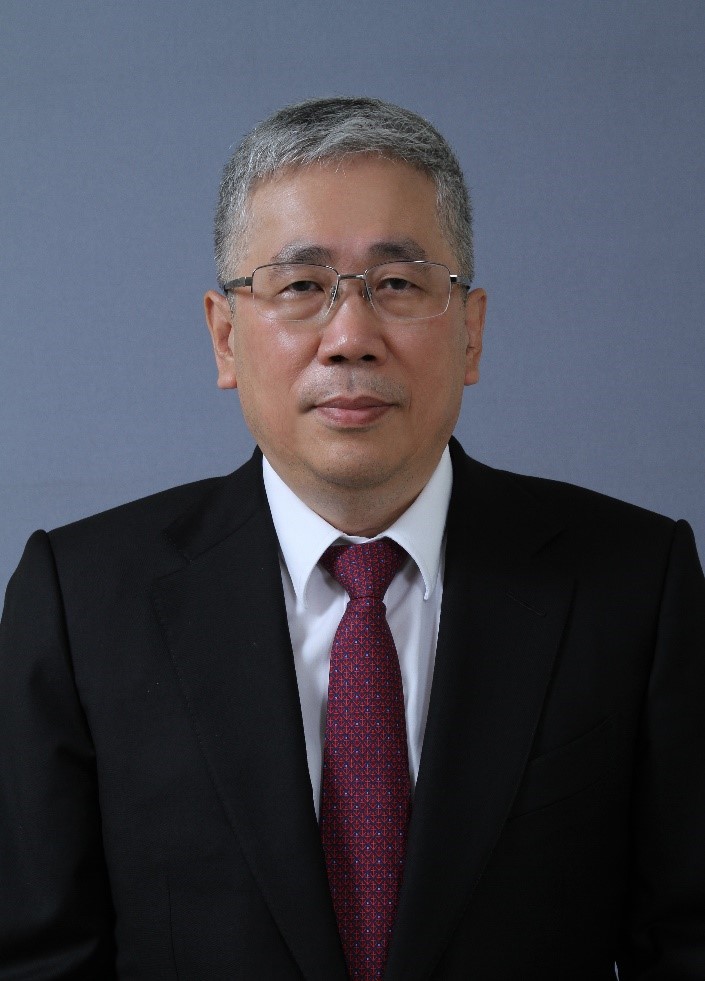}}]
	{Ping Zhang}
	(Fellow, IEEE)  is currently a professor of School of Information and Communication Engineering at Beijing University of Posts and Telecommunications, the director of State Key Laboratory of Networking and Switching Technology, a member of IMT-2020 (5G) Experts Panel, a member of Experts Panel for China’s 6G development. He served as Chief Scientist of National Basic Research Program (973 Program), an expert in Information Technology Division of National High-tech R$\&$D program (863 Program), and a member of Consultant Committee on International Cooperation of National Natural Science Foundation of China. His research interests mainly focus on wireless communication. He is an Academician of the Chinese Academy of Engineering (CAE).
\end{IEEEbiography}

\begin{IEEEbiography}[{\includegraphics[width=1in,height=1.25in,clip,keepaspectratio]{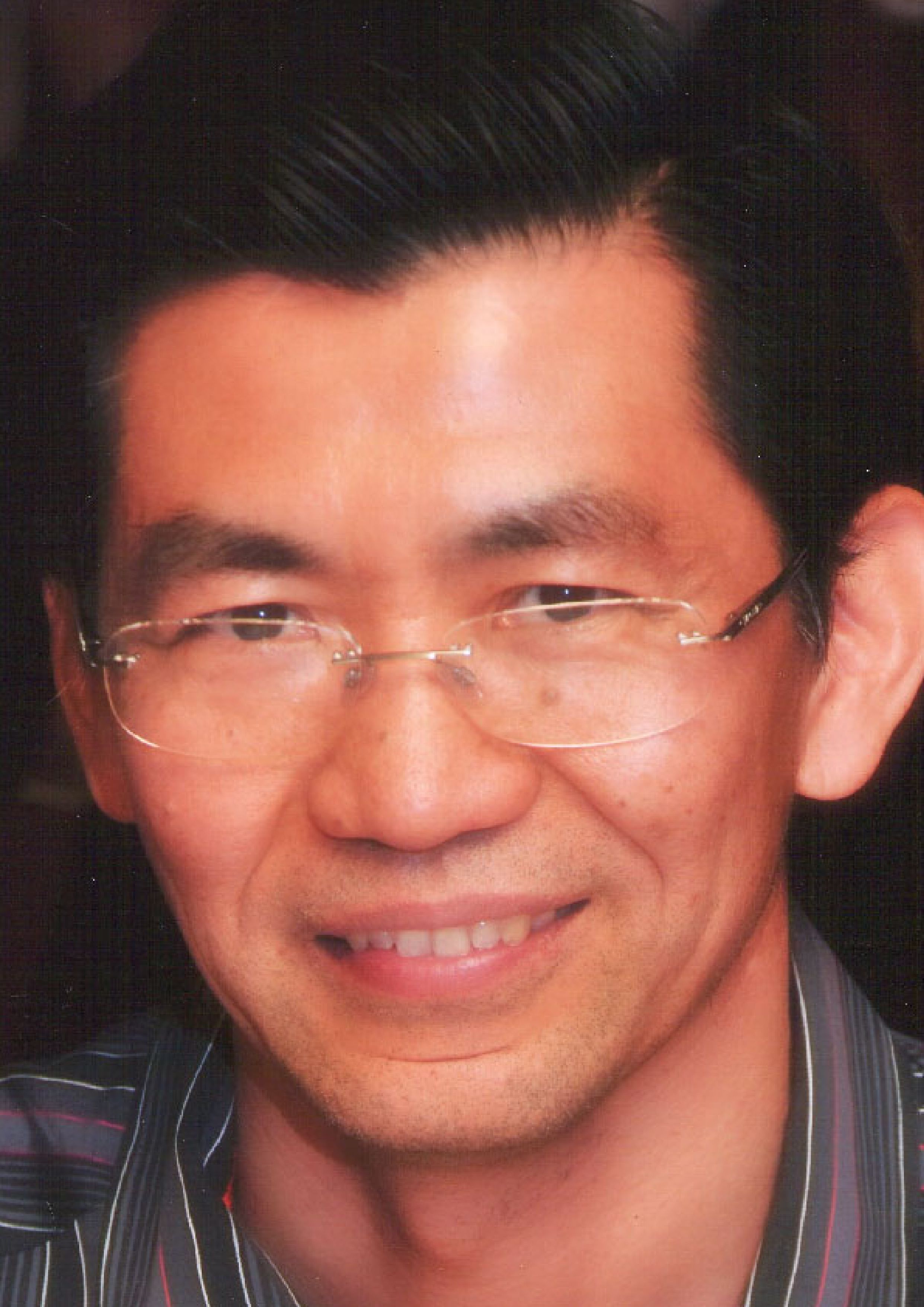}}]
	{Sheng Chen}
	(Life Fellow, IEEE) 
	received the B.Eng. degree in control engineering  from East China Petroleum Institute, Dongying, China, in 1982, the Ph.D. degree in control engineering from the City, University of London  in 1986, and the Doctor of Sciences (D.Sc.) degree from the University of Southampton, Southampton, U.K., in 2005. From 1986 to 1999, he held research and academic appointments at The University of Sheffield, U.K., The University of Edinburgh, U.K., and the University of Portsmouth, U.K. Since 1999, he has been with the School of Electronics and Computer Science, University of Southampton, where he is currently a Professor of intelligent systems and signal processing. He has published over 700 research articles. He has more than 20,000 Web of Science citations with an H-index of 63 and more than 40,000 Google Scholar citations with an H-index of 85. His research interests include adaptive signal processing, wireless communications, modeling and identification of nonlinear systems, neural network and machine learning, evolutionary computation methods, and optimization. He is a Fellow of the Royal Academy of Engineering, U.K., a Fellow of Asia–Pacific Artificial Intelligence Association, and a Fellow of IET. He was one of the original ISI Highly Cited Researchers in Engineering in March 2004.
\end{IEEEbiography}

\end{document}